\newtheorem{theorem}{Theorem}
\newtheorem{lemma}{Lemma}
\newtheorem{proposition}{Proposition}
\theoremstyle{definition}
\newtheorem{assumption}{Assumption}
\def\n{\noindent}
\newcommand{\red}[1]{#1}
\newcommand{\blue}[1]{#1}
\newcommand\fro[1]{\| #1 \|_{\rm{F}}}
\newcommand\oneinf[1]{\| #1 \|_{1,\infty}}
\newcommand\twoinf[1]{\| #1 \|_{2,\infty}}
\newcommand\ltwo[1]{\| #1 \|_{\ell_2}}
\newcommand\lone[1]{\| #1 \|_{\ell_1}}
\newcommand\lzero[1]{\| #1 \|_{\ell_0}}
\newcommand\linf[1]{\| #1 \|_{\ell_{\infty}}}
\newcommand\op[1]{\| #1 \|}
\newcommand\nuc[1]{\| #1 \|_{*}}
\newcommand{\mat}[1]{\begin{bmatrix}#1 \\ \end{bmatrix}}
\newcommand{\inp}[2]{\langle #1,#2\rangle}
\renewcommand{\arraystretch}{1.05}
\def\calA{{\mathcal A}}
\def\calB{{\mathcal B}}
\def\calC{{\mathcal C}}
\def\calD{{\mathcal D}}
\def\calE{{\mathcal E}}
\def\calF{{\mathcal F}}
\def\calG{{\mathcal G}}
\def\calH{{\mathcal H}}
\def\calI{{\mathcal I}}
\def\calJ{{\mathcal J}}
\def\calK{{\mathcal K}}
\def\calL{{\mathcal L}}
\def\calM{{\mathcal M}}
\def\calN{{\mathcal N}}
\def\calO{{\mathcal O}}
\def\calP{{\mathcal P}}
\def\calQ{{\mathcal Q}}
\def\calR{{\mathcal R}}
\def\calS{{\mathcal S}}
\def\calT{{\mathcal T}}
\def\calU{{\mathcal U}}
\def\calV{{\mathcal V}}
\def\calW{{\mathcal W}}
\def\calX{{\mathcal X}}
\def\calY{{\mathcal Y}}
\def\calZ{{\mathcal Z}}
\def\bcalA{{\boldsymbol{\mathcal A}}}
\def\bcalB{{\boldsymbol{\mathcal B}}}
\def\bcalC{{\boldsymbol{\mathcal C}}}
\def\bcalD{{\boldsymbol{\mathcal D}}}
\def\bcalE{{\boldsymbol{\mathcal E}}}
\def\bcalF{{\boldsymbol{\mathcal F}}}
\def\bcalG{{\boldsymbol{\mathcal G}}}
\def\bcalH{{\boldsymbol{\mathcal H}}}
\def\bcalI{{\boldsymbol{\mathcal I}}}
\def\bcalJ{{\boldsymbol{\mathcal J}}}
\def\bcalK{{\boldsymbol{\mathcal K}}}
\def\bcalL{{\boldsymbol{\mathcal L}}}
\def\bcalM{{\boldsymbol{\mathcal M}}}
\def\bcalN{{\boldsymbol{\mathcal N}}}
\def\bcalO{{\boldsymbol{\mathcal O}}}
\def\bcalP{{\boldsymbol{\mathcal P}}}
\def\bcalQ{{\boldsymbol{\mathcal Q}}}
\def\bcalR{{\boldsymbol{\mathcal R}}}
\def\bcalS{{\boldsymbol{\mathcal S}}}
\def\bcalT{{\boldsymbol{\mathcal T}}}
\def\bcalU{{\boldsymbol{\mathcal U}}}
\def\bcalV{{\boldsymbol{\mathcal V}}}
\def\bcalW{{\boldsymbol{\mathcal W}}}
\def\bcalX{{\boldsymbol{\mathcal X}}}
\def\bcalY{{\boldsymbol{\mathcal Y}}}
\def\bcalZ{{\boldsymbol{\mathcal Z}}}
\def\AA{{\mathbb A}}
\def\BB{{\mathbb B}}
\def\CC{{\mathbb C}}
\def\DD{{\mathbb D}}
\def\EE{{\mathbb E}}
\def\FF{{\mathbb F}}
\def\GG{{\mathbb G}}
\def\HH{{\mathbb H}}
\def\II{{\mathbb I}}
\def\JJ{{\mathbb J}}
\def\KK{{\mathbb K}}
\def\LL{{\mathbb L}}
\def\MM{{\mathbb M}}
\def\NN{{\mathbb N}}
\def\OO{{\mathbb O}}
\def\PP{{\mathbb P}}
\def\QQ{{\mathbb Q}}
\def\RR{{\mathbb R}}
\def\SS{{\mathbb S}}
\def\TT{{\mathbb T}}
\def\UU{{\mathbb U}}
\def\VV{{\mathbb V}}
\def\WW{{\mathbb W}}
\def\XX{{\mathbb X}}
\def\YY{{\mathbb Y}}
\def\ZZ{{\mathbb Z}}
\def\a{{\boldsymbol a}}
\def\b{{\boldsymbol b}}
\def\c{{\boldsymbol c}}
\def\d{{\boldsymbol d}}
\def\e{{\boldsymbol e}}
\def\f{{\boldsymbol f}}
\def\g{{\boldsymbol g}}
\def\h{{\boldsymbol h}}
\def\i{{\boldsymbol i}}
\def\j{{\boldsymbol j}}
\def\k{{\boldsymbol k}}
\def\l{{\boldsymbol l}}
\def\m{{\boldsymbol m}}
\def\n{{\boldsymbol n}}
\def\o{{\boldsymbol o}}
\def\p{{\boldsymbol p}}
\def\q{{\boldsymbol q}}
\def\r{{\boldsymbol r}}
\def\s{{\boldsymbol s}}
\def\t{{\boldsymbol t}}
\def\u{{\boldsymbol u}}
\def\v{{\boldsymbol v}}
\def\w{{\boldsymbol w}}
\def\x{{\boldsymbol x}}
\def\y{{\boldsymbol y}}
\def\z{{\boldsymbol z}}
\def\A{{\boldsymbol A}}
\def\B{{\boldsymbol B}}
\def\C{{\boldsymbol C}}
\def\D{{\boldsymbol D}}
\def\E{{\boldsymbol E}}
\def\F{{\boldsymbol F}}
\def\G{{\boldsymbol G}}
\def\H{{\boldsymbol H}}
\def\I{{\boldsymbol I}}
\def\J{{\boldsymbol J}}
\def\K{{\boldsymbol K}}
\def\L{{\boldsymbol L}}
\def\M{{\boldsymbol M}}
\def\N{{\boldsymbol N}}
\def\O{{\boldsymbol O}}
\def\P{{\boldsymbol P}}
\def\Q{{\boldsymbol Q}}
\def\R{{\boldsymbol R}}
\def\S{{\boldsymbol S}}
\def\T{{\boldsymbol T}}
\def\U{{\boldsymbol U}}
\def\V{{\boldsymbol V}}
\def\W{{\boldsymbol W}}
\def\X{{\boldsymbol X}}
\def\Y{{\boldsymbol Y}}
\def\Z{{\boldsymbol Z}}
\def\rss{\textsf{RSS}}
\def\bic{\textsf{BIC}}
\def\svd{\textsf{SVD}}
\def\ma{\textsf{mat}}
\def\vec{\textsf{vec}}
\def\cov{\textsf{Cov}}
\def\rank{\textsf{rank}}
\def\bdiag{\textsf{bdiag}}
\def\blk{\textsf{blk}}
\def\bbeta{{\boldsymbol{\beta}}}
\def\bnu{{\boldsymbol{\nu}}}
\def\bgamma{{\boldsymbol{\gamma}}}
\def\bepsilon{{\boldsymbol{\epsilon}}}
\def\bSigma{{\boldsymbol{\Sigma}}}
\def\bGamma{\boldsymbol{\Gamma}}
\def\hat{\widehat}
\def\tilde{\widetilde}
\def\bdiag{\textsf{blkdiag}}
\def\ker{\textsf{Ker}}
\def\supp{\textsf{supp}}
\def\model{\textsf{LIAR}}
\def\blkstack{\textsf{blkstack}}
\def\vecstack{\textsf{vecstack}}
\def\var{\textsf{Var}}
\def\modelone{\textsf{SP-LIAR}}
\def\knb{J}
\newcommand{\smallsq}{\scalebox{0.5}{$\square$}}
\def\ks{J_{\smallsq}}
\def\ke{K_{\textsf{e}}}
\title{Local Interaction Autoregressive Model for High-Dimensional Time Series Data}
\author[1]{Jingyang Li}
\author[2]{Yang Chen}
\affil[1]{Fudan University}
\affil[2]{University of Michigan, Ann Arbor}
\date{}
\begin{document}
\hypersetup{
  pdftitle={Local Interaction Autoregressive Model for High-Dimensional Time Series Data},
  pdfauthor={Jingyang Li and Yang Chen}
}
\maketitle

\begin{abstract}
High-dimensional matrix and tensor time series often exhibit local dependency, where each entry interacts mainly with a small neighborhood. Accounting for local interactions in a prediction model can greatly reduce the dimensionality of the parameter space, leading to more efficient inference and more accurate predictions. We propose a Local Interaction Autoregressive (\model{}) framework 
for high-dimensional
matrix and tensor time series forecasting problems, and study Separable \model{}, a matrix
variant with shared row and column components.
We derive a scalable parameter estimation algorithm via parallel least squares with a BIC-type neighborhood selector. Theoretically, we show consistency of neighborhood selection and derive error bounds for kernel and auto-covariance estimation. Numerical simulations show that the BIC selector recovers the true neighborhood with high success rates, the \model{} achieves small estimation errors, and the forecasts outperform matrix time-series baselines. In a Total Electron Content (TEC) application, the proposed method identifies localized spatio-temporal propagation patterns and achieves improved prediction compared with non-local time series prediction models. 
\end{abstract}

\section{Introduction}
Multivariate time series analysis is a well-established field with a rich literature of research (see e.g., \cite{lutkepohl2005new, tsay2013multivariate, tsay2023matrix}).
Recently, there has been a growing interest in high-dimensional time series modeling. 
{Existing research in this area can be broadly categorized into several complementary directions. The vector autoregression (VAR) framework has been extensively studied with sparsity or banded assumptions to accommodate high dimensionality while retaining interpretability \citep{davis2016sparse, guo2016high, gao2019banded}.
Building on this, matrix autoregression (MAR) models leverage the row–column structure of matrix-valued series for more efficient parameterization and interpretation \citep{chen2021autoregressive, hsu2021matrix, sun2023matrix}.
Further extensions to tensor autoregressive settings capture higher-order interactions and preserve multi-way dependencies in array-valued data \citep{li2021multi}.
Meanwhile, dynamic factor models offer an alternative route to dimension reduction by representing high-dimensional processes through a few latent dynamic factors~\citep{lam2012factor, wang2019factor, chen2022factor, chen2023statistical, han2024cp}.}

In particular, matrix- and tensor-valued time series have become increasingly common in economics, geophysics, and environmental science nowadays~\citep{hsu2021matrix,chen2021autoregressive,sun2023matrix} with the availability of high-resolution data collected at grid points over an extended amount of time.
For instance, to analyze the dynamics of multiple economic factors across various countries simultaneously, data can be organized into a matrix form \citep{chen2021autoregressive}.
On the other hand, geophysical data are typically higher-dimensional.
For instance, the global total electron content (TEC) distribution quantifies the electron density within the Earth’s ionosphere along the vertical path between a radio transmitter and a ground-based receiver. Accurately predicting global TEC is essential for anticipating the effects of space weather on positioning, navigation, and timing (PNT) services. Each TEC map takes the form of a $181 \times 361$ matrix, organized on a spatial-temporal grid with a resolution of 1 degree in both latitude and longitude~\citep[see e.g.,][]{sun2022matrix,sun2023complete}. As shown in Figure~\ref{fig:TEC}, consecutive TEC maps evolve coherently over time, forming a high-dimensional matrix-valued time series.
%
\begin{figure}[t]
	\centering
	\includegraphics[width=0.95\linewidth]{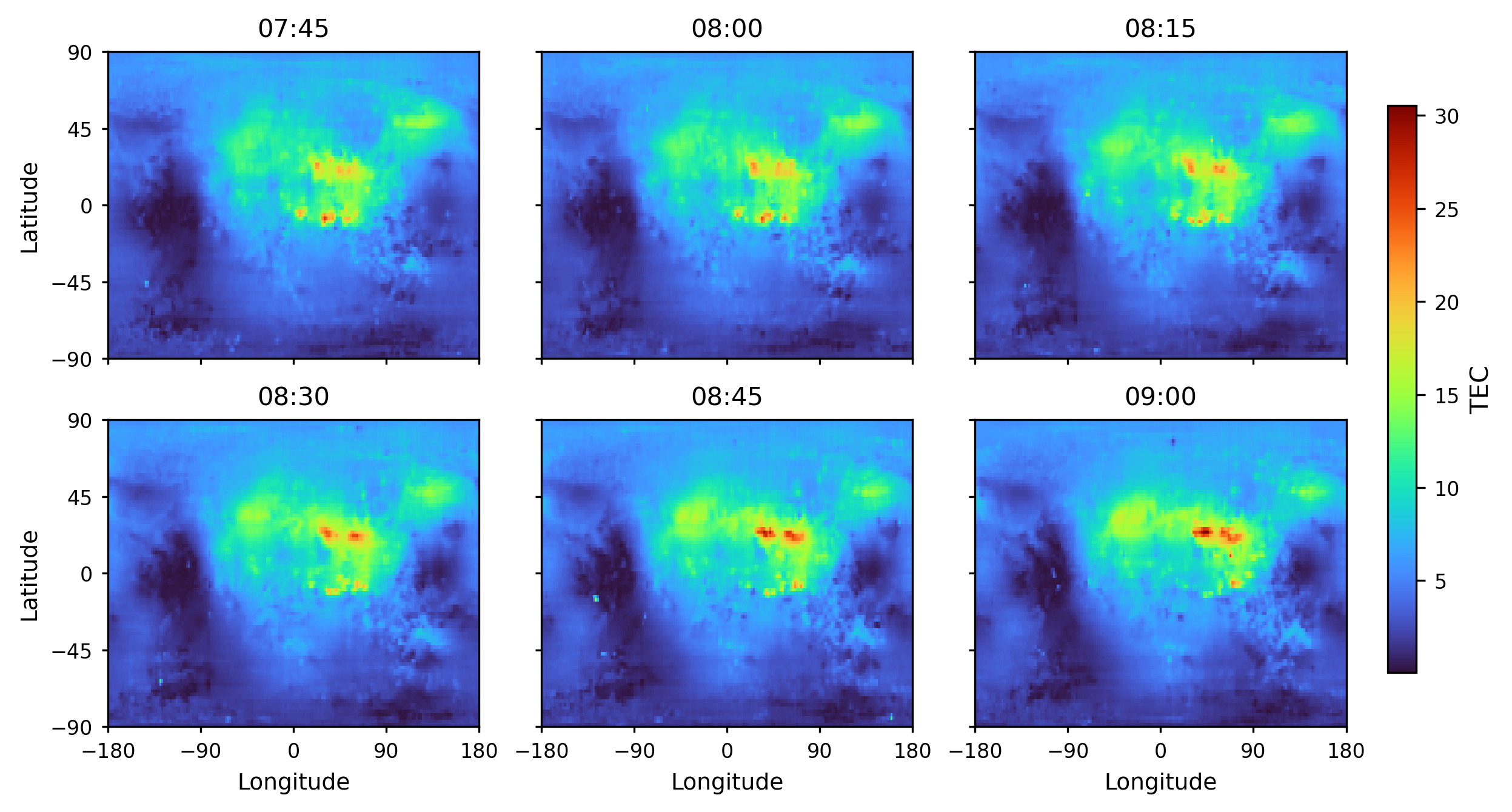}
	\caption{Consecutive global total electron content (TEC) maps from June 14, 2019, shown at 15-minute intervals.}
	\label{fig:TEC}
\end{figure}

While we can always resort to the well-studied vector time series analysis through vectorization, this would lead to a problem of a significantly larger  size and introduce many redundant parameters.
Recent studies, such as the factor autoregressive model \citep{chen2022factor, chen2023statistical}, have addressed this issue, but the resulting parameters often lack clear interpretability.
Additionally, the vectorization destroys the spatial information that is crucial in geophysics and environmental science. 
To address these challenges,  several methods based on autoregressive (AR) structures specifically designed for matrix/tensor time series have been developed.
For instance, \cite{chen2021autoregressive} proposed the following matrix autoregressive (MAR) model with a bilinear form: $\X_t = \A\X_{t-1}\B^\top + \E_t$,
where $\X_t$ is the $M\times N$ matrix. 
Bayesian method has also been proposed for variable selection in high-dimensional matrix autoregressive models \citep{celani2024bayesian}.
Subsequently, \cite{sun2023matrix} incorporated vector-valued time series data into the matrix autoregression model.  
However, tensor time series remains a less explored area. \cite{li2021multi} proposed a multi-linear form for tensor time series as a generalization of the matrix case. 
Despite these advancements, none of these approaches fully leverage the spatial dependency in the AR coefficient matrices.
In contrast, \cite{hsu2021matrix,jiang2024regularized} took spatial dependency into account by assuming $\A$ and $\B$ to be banded matrices, which significantly decreases the number of parameters.
Related rank-$R$ matrix autoregressive structures for spatio-temporal data were
further considered by \cite{hsu2024rank}. 

Although MAR \citep{chen2021autoregressive} reduces the number of parameters compared with vectorization, and the MAR-ST (spatio-temporal) model \citep{hsu2021matrix} and banded MAR \citep{jiang2024regularized} further exploits spatial structures for matrix time series, computational cost remains a significant challenge for high-dimensional data. 
This high cost is largely because their algorithms are inherently iterative, a requirement that stems directly from the model assumptions being bilinear in the parameters.
Additionally, their model's underlying constraints can be overly stringent. Specifically, the MAR model can be understood as performing $MN$ low-rank matrix regressions, necessitating that the coefficient matrices have a rank of 1 and share identical column subspaces across rows and identical row subspaces across columns. While this structure is manageable for the MAR model due to its adequate number of parameters, it becomes excessively restrictive when banded structures are simultaneously imposed.

In this paper, we propose a general framework for matrix- and tensor-variate autoregressive modeling that incorporates spatial structure. Our framework is flexible and covers MAR \citep{chen2021autoregressive}, MAR-ST \citep{hsu2021matrix}, banded MAR \citep{jiang2024regularized}, and tensor AR \citep{li2021multi} as special cases. 
To incorporate the spatial structure, we specifically focus on the Local Interaction Autoregressive (\model) model, which applies to both matrix and tensor data. 
Our model enjoys a locally adaptive dependency property in that it allows different entries to depend on different sizes of neighborhoods. 
In practice, the size of the neighborhood is unknown. We propose a Bayesian information criterion for the selection of the neighborhood. 
We show that this criterion leads to consistent neighborhood selection under various asymptotic regimes.
When the neighborhoods are given, we propose parallel least squares (LS) estimators for the coefficients. We also provide asymptotic properties of the parallel LS estimator. 
To further reduce the number of parameters, we also consider a separable structure
on the local coefficients, following related low-rank and rank-(R) autoregressive
coefficient structures in the MAR and tensor autoregressive literature
\citep{hsu2024rank,li2021multi}. We refer to this variant as
Separable-\model{} (\modelone{}).

\subsection{Our Contributions}
Our contributions are summarized as follows. First, we introduce a unified framework for matrix and tensor time series that accommodates high dimensionality and multi-way structure in a coherent setup. It subsumes much of the existing literature as special cases. Specifically, we propose a new model that relaxes the structural constraints of MAR-ST and banded MAR while preserving the same order of parameters. Second, we design algorithms that are both simple to implement and suitable for parallel execution. Our algorithms provide closed-form estimators, eliminating the need for iterative processes. This greatly reduces the computational complexity (see Table \ref{table:computation} for the comparison with existing literature). Third, we establish asymptotic guarantees for our estimators without imposing structural assumptions on the noise, and identifiability poses no obstacle for \model{} or \modelone{} because inference is conducted at the equivalence-class level rather than on individual components. Moreover, our neighborhood-selection criterion is consistent under mild conditions across multiple regimes, including jointly growing sample size, ambient dimension, and neighborhood size. Finally, on synthetic data, our criterion selects the true neighborhood with high success rates, kernel and auto-covariance/auto-correlation estimates are accurate, and compared with matrix time-series baselines our method achieves the lowest prediction RMSE with much shorter runtime. On large-scale total electron content (TEC) data, our approach attains competitive predictive accuracy with markedly faster runtime than existing methods, and the learned neighborhoods exhibit interpretable spatial patterns. Additional tensor-based TEC analysis and experimental results are provided in Appendix~\ref{sec:supp-tensor}.

		\begin{table}[t]
		\centering
		\footnotesize	
		\begin{tabular}{|c|c|c|c|c|}
			\hline
			Model & Algorithm & Parallel? &Parameters &  Computation \\ \hline\hline
            MAR& Projection Method  & No  & $O(M^2)$  & $O(M^4T + M^6)$ \\ 
            \cline{2-5}
			\citep{chen2021autoregressive}&Iterative LS & No & $O(M^2)$& $O(M^3T\cdot N_{\textsf{iter}})$\\ 
            \hline
			{\parbox{3cm}{ \centering MAR-ST \\ \citep{hsu2021matrix}}}& Iterative LS& No & $O(M)$ & $O\big((M^3+M^2T)\cdot N_{\textsf{iter}}\big)$ \\ 
            \hline
			\model ~(This paper) & Parallel LS & Yes & $O(M^2)$& $O(M^2 T)$ \\ 
            \hline
			\modelone ~(This paper) & Parallel LS & Yes & $O(M)$ & $O(M^2 T  +M^3)$ \\ \hline
		\end{tabular}
		\caption{Comparison between our proposed methods and existing methods for matrix autoregressive model. The matrix data is of size $M\times M$, and there are in total $T$ samples. We assume the size of neighborhood considered in  MAR-ST and our model to be constant. The computational cost displayed for MAR-ST is the minimized one when the innovation is also structured. }
		\label{table:computation}
		\end{table}
	

\subsection{Organization of the Paper}
The rest of the paper is organized as follows. Section 2 presents the local interaction autoregressive framework, the separable SP-LIAR variant, the parallel least-squares estimator, the neighborhood-selection criterion, and a brief extension to tensor time series. Section 3 studies the asymptotic properties of the proposed estimators and establishes neighborhood-selection consistency. Sections 4 and 5 present numerical experiments and the TEC real-data application, respectively. Further tensor implementation and tensor-based TEC experiments are provided in Appendix~\ref{sec:supp-tensor}.

\section{Methodology}\label{sec:mar}
Bold lower-case (e.g., $\z$) denote vectors, bold capitals (e.g., $\A,\B$) denote matrices.
All vectorization operations, denoted by $\text{vec}(\cdot)$, and any implicit index orderings strictly adhere to the column-major convention.
We index matrix location $\calS = [M]\times[N]$, and write an index $\i = (i_1,i_2)\in\calS$.  
Matrix entries are explicitly denoted with square brackets, as $[\A]_{\i}$.
Denote $\fro{\cdot}$ the Frobenius norm of matrices, and denote $\|\cdot\|_{\ell_p}$ the $\ell_p$-norm of vectors or vectorized tensors for $0\leq p\leq \infty$. Here $\linf{\v}$ denotes the largest magnitude of the entries of $\v$.
The spectral radius of a matrix $\A\in\RR^{M\times M}$ (denoted by $\rho(\A)$) is defined to be the maximum modulus of eigenvalues. The operator norm of a matrix $\A\in\RR^{M\times N}$ (denoted by $\op{\A}$) is defined to be the largest singular value.  
We denote $a\vee b = \max\{a,b\}, a\wedge b = \min\{a,b\}$.

\subsection{Local Interaction Autoregressive (\model{}) Model}
Let $\{\X_t\}_{t\ge 0}\subset\RR^{\calS}$ be a matrix-valued time series.
For each site $\i=(i_1,i_2)\in\calS$ and lag $p\in[P]$, let $\calJ_{\i}\subset\calS$ be a (location-specific) spatial neighborhood, and let $\M_{p,\i}\in\RR^{\calS}$ satisfy $\supp(\M_{p,\i})=\calJ_{\i}$.
The \model{} specification assumes that each entry of $\X_t$ depends only on past values within its neighborhood:
\begin{align}\label{LIAR:lagP} 
	\X_t
	=
	\sum_{\i\in\calS}\sum_{p\in[P]}
	\inp{\X_{t-p}}{\M_{p,\i}}\ \e_{i_1}\e_{i_2}^\top
	+\E_t,
\end{align}
where $\E_t$ is a mean-zero innovation independent of the past series, and
$\e_{i_1}\in\RR^{M}$ and $\e_{i_2}\in\RR^{N}$ are standard basis vectors.
Both the \emph{shape} and \emph{size} of $\calJ_{\i}$ may vary with $\i$; i.e., neighborhoods need not be rectangular.
Our goal is to recover the neighborhoods $\{\calJ_{\i}\}_{\i\in\calS}$ and estimate the coefficients $\{\M_{p,\i}\}_{\i\in\calS, p\in[P]}$.

In particular, this formulation admits a VAR representation and includes several familiar special cases:
\begin{enumerate}[leftmargin=1.1em, itemsep = -1pt, topsep=2pt]
	
	\item \textit{VAR representation.}
	Denote $\x_t=\vec(\X_t)$ and $\bepsilon_t=\vec(\E_t)$. Flattening both sides of \eqref{LIAR:lagP} yields
	$$
	\x_t \;=\; \sum_{p\in[P]} \M_p^{\vec}\,\x_{t-p} \;+\; \bepsilon_t,
	$$
	where $\M_p^{\vec}$ has the structured form
	$
	\M_p^{\vec} \;=\; \sum_{\i\in\calS} \big(\e_{i_2}\otimes \e_{i_1}\big)\vec(\M_{p,\i})^\top.
	$
	
	\item \textit{Banded VAR \citep{guo2016high}.}
	If $N=1$ and $\calJ_{i}=\{u:\,|u-i|\le K\}$, the model reduces to a banded VAR on a length-$M$ vector with bandwidth $K$.
	
	\item \textit{MAR \citep{chen2021autoregressive}.}
	Assume each local coefficient is rank one with shared components: $\M_{p,\i}=\a_{p,i_1}\,\b_{p,i_2}^\top$.
	Let $\e_{i_1}^\top\A_p=\a_{p,i_1}^\top$ and $\e_{i_2}^\top\B_p=\b_{p,i_2}^\top$ for $\i = (i_1,i_2)\in\calS$.
	And $\calJ_{\i}=\calS$ for all $\i$, then
	$$
	\X_t \;=\; \sum_{p\in[P]} \A_p\,\X_{t-p}\,\B_p^\top \;+\; \E_t,
	$$
	which is the matrix autoregressive (MAR) model.
	
	\item \textit{MAR-ST \citep{hsu2021matrix} or banded MAR \citep{jiang2024regularized}.}
	Under the same rank-one form $\M_{p,\i}=\a_{p,i_1}\,\b_{p,i_2}^\top$ with location-specific rectangular neighborhoods
	$
	\calJ_{\i} \;=\; \{(u_1,u_2):\ |u_1-i_1|\le K_{1,i_1},\ |u_2-i_2|\le K_{2,i_2}\},
	$
	the representation above holds with $\A_p$ and $\B_p$ banded, yielding the structured MAR.

	\item \textit{Separable LIAR (\modelone{}).}
	To further reduce the number of parameters, we consider a separable structure on the local coefficients. Related low-rank and rank-$R$ autoregressive coefficient structures have been studied in the MAR literature
	\citep{hsu2024rank}.
	Specifically, let
	$\M_{p,\i} = \sum_{r\in[R]}\a_{p,r,i_1}\b_{p,r,i_2}^\top$, with $[\a_{p,r,i_1}]_{u_1} = 0$ if $|i_1-u_1|>K_{1,i_1}$, and $[\b_{p,r,i_2}]_{u_2} = 0$ if $|i_2-u_2|>K_{2,i_2}$, and $R \leq \min_{ij}\{K_{1,i_1},K_{2,i_2}\}$.
	Then the model can be written as 
	\begin{align}\label{banded:rank-r} 
		\X_t = \sum_{p\in[P]}\sum_{r\in[R]}\A_{p,r}\X_{t-p}\B_{p,r}^\top+\E_t, 
	\end{align} 
	where $\e_{i_1}^\top\A_{p,r} = \a_{p,r,i_1}^\top$, $\e_{i_2}^\top\B_{p,r} = \b_{p,r,i_2}^\top$ are banded matrices. Here $R$ governs the rank of local interactions: $R=1$ recovers MAR-ST or banded MAR, and \modelone{} approaches the unrestricted \model{} as $R$ increases.
\end{enumerate}
In small-scale settings (e.g., macroeconomic panels), global MAR can be adequate \citep{chen2021autoregressive}, but in large-scale spatio–temporal systems it becomes computationally and statistically burdensome due to dense, high-dimensional kernels. MAR-ST \citep{hsu2021matrix} and banded MAR \citep{jiang2024regularized} mitigate this cost by imposing banded structure for matrices. But this rank-1 separable form restriction limits expressiveness and cannot capture location-specific variation.

By contrast, \model{} introduces local dependence through neighborhoods $\{\calJ_{\i}\}_{\i\in\calS}$ that are allowed to vary in shape and size across locations, which captures spatially varying (location-specific) dependence while dramatically reducing the number of parameters compared with MAR. As an intermediate option, \modelone{} imposes a low-rank separable structure within each neighborhood, offering a tunable bridge between structured MAR and the fully flexible \model{}: smaller rank yields more parsimonious models, whereas increasing the rank brings \modelone{} closer to unrestricted \model{} while preserving locality and interpretability.

\subsection{Estimating Coefficients}\label{sec:entrywise-estimation}
We now present estimation methods for \model{}, and we then extend the procedure to the separable \modelone{} variant.
For clarity of exposition, we will focus on $P=1$ in the main text, deferring the treatment of the general lag $P$ case to Section~\ref{sec:lagP} of the Appendix. We drop the subscript $p$ in \eqref{LIAR:lagP} and consider the following lag-1 model:
\begin{align}\label{LIAR:lag1}
\X_t = \sum_{\i\in\calS}
	\inp{\X_{t-1}}{\M_{\i}}\ \e_{i_1}\e_{i_2}^\top
	+\E_t.
\end{align}

For each $\i\in\calS$, we denote $\x_t^{(\i)} := \X_t(\calJ_{\i}), \m_{\i} = \M_{\i}(\calJ_{\i})\in\RR^{|\calJ_{\i}|}$.
Here $\M(\calJ)\in\RR^{|\calJ|}$ is the vector that collects the entries $\{[\M]_{\i}:\i\in\calJ\}$ in column-major order for $\calJ\subset\calS$. 
Then the scalar regression for entry $\i$ reads, for $t=2,\dots,T$,
	$[\X_t]_{\i} = \inp{\x_{t-1}^{(\i)}}{\m_{\i}} + [\E_t]_{\i}. $
Stacking over $t$ gives the standard linear model
$
\z_{\i} = \Y_{\i}\m_{\i}+\bnu_{\i},
$
where 
\begin{align}\label{defz}
	\z_{\i}
	&= \mat{[\X_2]_{\i}& \cdots & [\X_T]_{\i}}^\top, \notag\\
	\bnu_{\i}
	&= \mat{[\E_{2}]_{\i} & \cdots & [\E_T]_{\i}}^\top
	\in\RR^{T-1}, \notag\\
	\Y_{\i}
	&= \mat{\x_{1}^{(\i)}& \cdots& \x_{T-1}^{(\i)}}^\top
	\in\RR^{(T-1) \times |\calJ_{\i}|}.
\end{align}
The least squares estimator is
$
\hat\m_{\i}= \arg\min_{\m} \frac{1}{2}\ltwo{\z_{\i} - \Y_{\i}\m}^2= (\Y_{\i}^\top\Y_{\i})^{-1}\Y_{\i}^\top\z_{\i}.
$
This procedure can be carried out in parallel for different $\i\in\calS$.
We summarize the procedure in Algorithm \ref{alg:pls}. 
\begin{algorithm}[H]
	\caption{Parallel Least Squares}
	\begin{algorithmic}\label{alg:pls}
		\STATE{\textbf{Input: }Data $\{\X_t\}_{t=1}^T$, local neighborhood $\{\calJ_{\i}\}_{\i\in\calS}$}
		\STATE{\textbf{ParFor} $\i\in\calS$}
		\STATE{\quad Collect the covariate $\Y_{\i} $ and response $\z_{\i}$ defined in \eqref{defz}}
		\STATE{\quad Solve the least-squares problem $\hat\m_{\i}= (\Y_{\i}^\top\Y_{\i})^{-1}\Y_{\i}^\top\z_{\i}$}
		\STATE{\textbf{End ParFor}}
		\STATE{\textbf{Output: }$\{\hat\m_{\i}\}_{\i\in\calS}$}
	\end{algorithmic}
\end{algorithm}

\paragraph*{Efficiency and Computational Cost Analysis of Parallel LS. }
Due to the spatial locality, our parallel LS is highly efficient. Additionally, the estimator is non-iterative with a closed-form solution.
Let $\knb = \max_{\i}|\calJ_{\i}|$. Then for each sub-problem, we only need to solve a least squares problem, resulting in a computational cost of order $O(\knb^2T + \knb^3)$. We note that the computation of coefficients for a single entry is independent of the matrix data's ambient dimension. The total computational cost is $O\big(MN(\knb^2T + \knb^3)\big)$ .
In most applications, $\knb$ is small constant, so the computational cost reduces to $O(MNT)$. Moreover, the parallel LS can be implemented concurrently, further decreasing the algorithm’s runtime.
In contrast, while the spatial structure was utilized in \cite{hsu2021matrix}, their iterative least squares algorithm does not fully exploit it, leading to a significantly higher computational cost (see Table \ref{table:computation}).

\paragraph*{Coefficient Estimation for \modelone{}. }
The estimation for the coefficients for \modelone~ is more challenging as it admits no closed-form solution due to the separable structure. 
We proceed in two stages: first obtain per-location LS estimates as in \model{}, then enforce separability via a low-rank projection.
In \modelone{}, $\calJ_{\i}$ are rectangles by construction, and we denote $\M^{[\i]}:=\M_{\i}[\calJ_{\i}]$ as a sub-matrix of $\M_{\i}$. 
Arrange these matrices into the block matrix
\begin{align}\label{Mblk}
	\M^{\blk} = \mat{\M^{[1,1]}&\cdots&\M^{[1,N]}\\ \vdots&&\vdots \\ \M^{[M,1]}&\cdots &\M^{[M,N]}}. 
\end{align}
Then under \modelone{}, $\M^{\blk}$ has rank $R$. 
This motivates us to consider the best rank $R$ approximation of the following estimator of $\M^{\blk}$, where $\hat\m_{\i}$ are obtained from Algorithm \ref{alg:pls}:
\begin{align*}
	\hat\M^{\blk} = \mat{\textsf{mat}(\hat\m_{(1,1)})&\cdots&\textsf{mat}(\hat\m_{(1,N)})\\ \vdots&&\vdots \\ \textsf{mat}(\hat\m_{(M,1)})&\cdots &\textsf{mat}(\hat\m_{(M,N)})}. 
\end{align*}
Our algorithm is summarized in Algorithm \ref{alg:spliar}.
\begin{algorithm}[H]
	\caption{Parallel Least Square for \modelone}
	\begin{algorithmic}\label{alg:spliar}
		\STATE{\textbf{Input: }Local neighborhood $\calJ_{\i}$, data $\{\X_t\}_{t=1}^T$, rank $R$}
		\STATE{Run Algorithm \ref{alg:pls} for $\{\hat\m_{\i}\}_{\i\in\calS}$}
		\STATE{Perform rank $R$ SVD approximation on $\hat\M^{\blk}$ and get:
			$\hat\M_{R}^{\blk}=\svd_R(\hat\M^{\blk})$}
	\STATE{\textbf{Output: }$\{\hat\M_{R}^{[\i]}\}_{\i\in\calS}$, where $\hat\M_{R}^{[\i]}$ is the $(i_1,i_2)$-th block of $\hat\M_{R}^{\blk}$}
\end{algorithmic}
\end{algorithm}

\paragraph*{Identifiability Issue. }
In the case where $R=1$ \citep{chen2021autoregressive}, it is assumed that one of the coefficient matrices has a unit Frobenius norm. They also mentioned the identifiability issue becomes subtler when $R>1$. In our formulation of \modelone{}, this concern disappears as 
the primary quantity of interest is the equivalence class rather than two individual components.

\subsection{Neighborhood Selection}
We now discuss how to select the neighborhood $\calJ_{\i}$ for \model{}. 
Ideally, one would hope to select from all subsets of $\calS$.
However, due to the \textit{non-nested} model structure, this is difficult without refining the candidate class.
We illustrate this with a simple example.

\noindent\textbf{Example (Non-nested ambiguity).}
Consider the linear regression problem
$y_t = \inp{\X_t}{\M}$, $t = 1,\cdots, T,$ with $\M$ supported on $\calJ= \{(u,v):|u-i_0|\leq K_1, |v-j_0|\leq K_2\}$. 
Consider an alternative candidate $\tilde\calJ =  \{(u,v):|u-i_0|\leq k_1, |v-j_0|\leq k_2\}$ for some $k_1<K_{1}, k_2>K_{2}$ (so neither rectangle contains the other). 
There are natural designs $\X_t$ under which one can construct $\hat\M$ supported on $\tilde\calJ$ that reproduces exactly the same responses $y_t$ as $\M$. A formal statement and proof are provided in Section~\ref{sec:non-nested} of the Appendix.

Without nesting, empirical criteria cannot reliably separate such candidates. We therefore restrict the search to a nested sequence indexed by $k$ for each location $\i = (i_1,i_2)$:
$$\calJ_{\i}[1]\subsetneq\cdots\subsetneq\calJ_{\i}[k_0]\subsetneq\cdots\subsetneq\calJ_{\i}[K_0]\subset\calS,$$
where $\calJ_{\i}[k_0] = \calJ_{\i}$, and $K_0$ is a prescribed cap. 
Write $s_k = |\calJ_{\i}[k]|$ (depends on $k$, and implicitly on $\i$).
For any level $k$, let $\x_t[k]:= \X_t\big(\calJ_{\i}[k]\big)\in\RR^{s_k}$. Then we form the design matrix 
$
\Y[k]= \mat{\x_{1}[k] &\cdots &\x_{T-1}[k]}^\top.
$
Also recall $\z = \mat{[\X_{2}]_{\i}& \cdots & [\X_T]_{\i}}^\top$. 
The residual sum of squares is
$
\rss_{\i}(k) := \ltwo{\z - \Y[k](\Y[k]^\top\Y[k])^{-1}\Y[k]^\top\z}^2.
$
And we use the Bayesian information criterion for the bandwidth selection: 
\begin{align}\label{BIC-entrywise}
\bic_{\i}(k)
=
\log \rss_{\i}(k)
+
D_0\,\frac{|\calJ_{\i}[k]|}{T}\,\log(M\vee N\vee T).
\end{align}
For each $\i\in\calS$, we select $\hat k_{\i}$ by
$\hat k_{\i}=\arg\min_{k\le K_0}\ \bic_{\i}(k)$.

\paragraph*{Choice of $D_0$ and $K_0$.}
We set $D_0=\log\log T$ when considering the regime where $T,M,N\to\infty$ while
$\knb=\max_{\i\in\calS}|\calJ_{\i}|$ remains fixed. The cutoff $K_0$ can be chosen by
examining the curvature of $\bic_{\i}(k)$. When $\knb$ also diverges, choosing $D_0$
is more delicate; this high-dimensional regime is addressed in Section~\ref{sec:mar-theory}.

\subsection{Extension to Tensor Time Series}\label{sec:tensor-extension}

The \model{} framework extends naturally to tensor-valued time series. Let
\(\{\bcalX_t\}_{t\ge 0}\subset \RR^{\calS_d}\), where
\(\calS_d=[N_1]\times\cdots\times[N_d]\). For each tensor entry
\(\i=(i_1,\ldots,i_d)\in\calS_d\), let \(\calJ_{\i}\subset\calS_d\) denote its local
neighborhood, and let \(\bcalM_{\i}\) be supported on \(\calJ_{\i}\). The tensor
\model{} model is
\[
\bcalX_t
=
\sum_{\i\in\calS_d}
\inp{\bcalX_{t-1}}{\bcalM_{\i}}
\, \e_{i_1}\circ\cdots\circ\e_{i_d}
+
\bcalE_t .
\]
After vectorization, this model can be written as a structured VAR,
\[
\x_t = \M\x_{t-1}+\bepsilon_t,
\qquad
\M=\sum_{\i\in\calS_d}
(\e_{i_d}\otimes\cdots\otimes\e_{i_1})
\vec(\bcalM_{\i})^\top,
\]
where \(\x_t=\vec(\bcalX_t)\) and \(\bepsilon_t=\vec(\bcalE_t)\). Writing
\(\x_t^{(\i)}=\bcalX_t(\calJ_{\i})\) and
\(\m_{\i}=\bcalM_{\i}(\calJ_{\i})\), the scalar regression for entry \(\i\) is
$[\bcalX_t]_{\i}=\inp{\x_{t-1}^{(\i)}}{\m_{\i}}+[\bcalE_t]_{\i}.$
Thus, coefficient estimation again reduces to a local least-squares problem. The
BIC neighborhood selector and the matrix-case theoretical arguments carry over analogously after
replacing matrix neighborhoods by tensor neighborhoods. The detailed tensor
least-squares implementation is given in Appendix~\ref{sec:supp-tensor}.


\section{Theoretical Guarantees}\label{sec:mar-theory}
In this section, we study the asymptotic property of the estimators $\{\hat\m_{\i}\}_{\i\in\calS}$ from Algorithm \ref{alg:pls}.
We define 
\begin{align*}
\ks = \min\bigg\{ &(2k_1+1)(2k_2+1):
\calJ_{\i}\subset\{i_1-k_1,\cdots, i_1+k_1\}\times \\
&\hspace{3cm}\{i_2-k_2,\cdots,i_2+k_2\},\ \forall \i\bigg\}.
\end{align*}
Here $\ks$ is the size of the minimal rectangular that uniformly contains all supports, directly capturing geometric locality.
The asymptotic regimes are: \textsf{Regime 1}, only $T\rightarrow\infty$, while $M,N,\ks$ remain fixed; \textsf{Regime 2}, $T,M,N\rightarrow\infty$, while $\ks$ remains fixed; and \textsf{Regime 3}, $T,M,N,\ks\rightarrow\infty$.
We consider the case when $P=1$, and then \eqref{LIAR:lag1} can be equivalently written as $\x_t = \M\x_{t-1} + \bepsilon_t$, where $\M = \sum_{\i\in\calS} (\e_{i_2}\otimes \e_{i_1})\vec(\M_{\i})^\top$, where $\x_t = \vec(\X_t)$, and $ \bepsilon_t = \vec(\E_t)$.


\subsection{Consistency and Asymptotic Normality}
When the spectral radius $\rho(\M)<1$, \eqref{LIAR:lag1} is a stationary matrix time series and $\X_t$ admits the following expansion in terms of the innovations:
$
\x_t = \sum_{l=0}^{\infty}\M^l\bepsilon_{t-l}.
$

\begin{assumption}\label{assump:regime}
There exist $q, \tilde q>0, \delta\in(0,1)$, and $\mu_{2q}>0$ independent of $T,M, N,\ks$, such that the following holds under different regimes:
\begin{enumerate}[label = (\arabic*)]
	\item \textsf{Regime 1} (classical): $\rho(\M)<1$; the innovation process $\{\E_{t}\}$ are i.i.d. with zero mean, and each entry has a finite $2+\tilde q$ moments, $\max_{ij}\|[\E_t]_{\i}\|_{2+\tilde q}<+\infty$;
	\item \textsf{Regime 2} (growing dimension): $\op{\M}\leq \delta$; the innovation process $\{\E_{t}\}$ are i.i.d. with zero mean, and
	$\max_{ij}\|[\E_t]_{\i}\|_{2q}\leq \mu_{2q}$ for some $q>2$; and $M N = O(T^{\beta})$, where $\beta \in (0,\frac{q-2}{4})$;
	\item \textsf{Regime 3} (growing dimension and locality): in addition to (2),
	$
	T\geq \tilde C_1\ks^4\log(M\vee N\vee T),
	$
	for some $\tilde C_1>0$ depending only on $\delta$ and $\mu_{2q}$.
\end{enumerate}
\end{assumption}
\blue{These three regimes can be viewed as increasingly high-dimensional extensions of
standard autoregressive theory. Regime~1 corresponds to the classical fixed-dimensional
time-series setting. Regime~2 allows the ambient dimension \(MN\)
to grow while the uniform locality size \(\ks\) remains fixed. The restriction
\(\beta < (q-2)/4\) comes from controlling sample covariance terms uniformly over
the growing dimension under bounded \(2q\)-th moments, and is the same type of
dimension-growth condition as that used in high-dimensional banded VAR models
\cite{guo2016high}. Regime~3 further allows the locality size \(\ks\) to increase.
The lower bound \(T \gtrsim \ks^4\log(M\vee N\vee T)\) ensures that the local
sample Gram matrices remain uniformly well-conditioned when the number of local
regressors also grows.}

The regimes in which $M,N$, and possibly $\ks$, diverge are more challenging.
Given the fact $\rho(\M)\leq \op{\M}$, \textsf{Regime 2} and \textsf{Regime 3} require stronger conditions on both the coefficient matrix and the moments on the innovation process. 
Let the auto-correlation $\bGamma_0:= \cov(\x_t, \x_t)$.
We denote $\bGamma_0(\calJ_{\i}; \calJ_{\i})\in\RR^{|\calJ_{\i}|\times |\calJ_{\i}|}$ the sub-matrix of $\bGamma_0$ by extracting the elements corresponding to the row (column) indices in $\calJ_{\i}$, arranged using column-major ordering.
The following theorem states the asymptotic property for $\hat\m_{\i}$ under either regimes. 
\begin{theorem}\label{thm:entry}
Suppose Assumption \ref{assump:regime} holds. 
Then we have 
\begin{align*}
	&\sqrt{T}\cdot(\hat\m_{\i} - \m_{\i})\implies N\big(0,\sigma_{\i}^2\cdot[\bGamma_0(\calJ_{\i};\calJ_{\i})]^{-1}\big)
\end{align*}
under any of \textsf{Regimes 1--3}, where $\sigma_{\i}^2 = \var([\E_t]_{\i})$.
\end{theorem}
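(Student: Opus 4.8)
Since $\hat\m_{\i}=(\Y_{\i}^\top\Y_{\i})^{-1}\Y_{\i}^\top\z_{\i}$ and $\z_{\i}=\Y_{\i}\m_{\i}+\bnu_{\i}$, there is the exact identity $\sqrt{T}\,(\hat\m_{\i}-\m_{\i})=\big(\tfrac1{T}\Y_{\i}^\top\Y_{\i}\big)^{-1}\big(\tfrac1{\sqrt T}\Y_{\i}^\top\bnu_{\i}\big)$ (writing $T$ for $T-1$, which is immaterial). The plan is to establish two facts and combine them by Slutsky's theorem: (i) the sample Gram matrix $\tfrac1T\Y_{\i}^\top\Y_{\i}$ converges in probability to $\bGamma_0(\calJ_{\i};\calJ_{\i})$, which is invertible with eigenvalues bounded away from $0$ and $\infty$ throughout the regime; and (ii) the score vector $\tfrac1{\sqrt T}\Y_{\i}^\top\bnu_{\i}=\tfrac1{\sqrt T}\sum_{t=2}^{T}\x_{t-1}^{(\i)}[\E_t]_{\i}$ obeys a multivariate central limit theorem with limiting covariance $\sigma_{\i}^2\,\bGamma_0(\calJ_{\i};\calJ_{\i})$. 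Given (i) and (ii), the limit is $\bGamma_0(\calJ_{\i};\calJ_{\i})^{-1}\cdot N(0,\sigma_{\i}^2\bGamma_0(\calJ_{\i};\calJ_{\i}))=N\big(0,\sigma_{\i}^2\bGamma_0(\calJ_{\i};\calJ_{\i})^{-1}\big)$, as claimed.

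For (i), stationarity from $\rho(\M)<1$ (resp. $\op{\M}\le\delta$) yields the causal expansion $\x_t=\sum_{l\ge0}\M^l\bepsilon_{t-l}$, so $\EE[\tfrac1T\Y_{\i}^\top\Y_{\i}]\to\bGamma_0(\calJ_{\i};\calJ_{\i})$; positive definiteness follows because $\bGamma_0\succeq\cov(\bepsilon_t)$ (restrict to rows/columns in $\calJ_{\i}$) and $\op{\bGamma_0}\le\op{\cov(\bepsilon_t)}/(1-\delta^2)$. In \textsf{Regime 1} the ergodic theorem gives the convergence directly. In \textsf{Regimes 2--3} I would instead prove a quantitative version: first derive dimension-free $2q$-th moment bounds on the entries of $\x_t$ from the expansion and the innovation moment assumption, then bound $\EE\op{\tfrac1T\Y_{\i}^\top\Y_{\i}-\bGamma_0(\calJ_{\i};\calJ_{\i})}$ by a Rosenthal/Burkholder-type inequality applied to the weakly dependent (geometrically mixing) summands, getting a rate that is $o(1)$ precisely under $T\gtrsim\ks^4\log(M\vee N\vee T)$; the scaling $MN=O(T^{\beta})$ with $\beta<\tfrac{q-2}{4}$ enters when controlling tail events and, later, union bounds over sites.

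For (ii), observe that $\xi_t:=\x_{t-1}^{(\i)}[\E_t]_{\i}$ is a martingale difference sequence with respect to $\calF_{t}=\sigma(\E_s:s\le t)$, since $\x_{t-1}^{(\i)}$ is $\calF_{t-1}$-measurable while $\E_t$ is independent of $\calF_{t-1}$ with mean zero; moreover $\EE[\xi_t\xi_t^\top\mid\calF_{t-1}]=\sigma_{\i}^2\,\x_{t-1}^{(\i)}(\x_{t-1}^{(\i)})^\top$, whose Cesàro average is exactly $\tfrac1T\Y_{\i}^\top\Y_{\i}\to\sigma_{\i}^2\bGamma_0(\calJ_{\i};\calJ_{\i})$ up to the factor $\sigma_{\i}^2$, by part (i). I would then apply the martingale CLT through the Cramér--Wold device: for a fixed unit vector $\c$ (in \textsf{Regime 3}, a possibly $T$-dependent one, with all constants tracked uniformly), $\c^\top\xi_t$ is a scalar martingale difference whose normalized conditional variance tends to $\sigma_{\i}^2\,\c^\top\bGamma_0(\calJ_{\i};\calJ_{\i})\c$, and the conditional Lindeberg condition follows from the $2+\tilde q$ (resp.\ $2q$, $q>2$) moments on $[\E_t]_{\i}$ together with the uniform moment bounds on $\x_{t-1}^{(\i)}$, giving $\tfrac1{\sqrt T}\sum_t\c^\top\xi_t\implies N(0,\sigma_{\i}^2\c^\top\bGamma_0(\calJ_{\i};\calJ_{\i})\c)$ and hence the vector statement. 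When $\ks$ diverges, the conclusion $\implies N(\cdot)$ should be read as convergence of every such standardized linear functional, equivalently of the $\bGamma_0(\calJ_{\i};\calJ_{\i})^{-1/2}$-prewhitened coordinates.

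The main obstacle is parts (i)--(ii) in \textsf{Regimes 2} and \textsf{3}: getting concentration of the Gram matrix and verifying Lindeberg with constants that do not degrade as $MN$ and $\ks$ grow, using only polynomial ($2q$, $q>2$) moments on the innovations rather than sub-Gaussianity or structured noise. This forces a careful Rosenthal/Burkholder argument on the MA($\infty$) representation (to transfer dimension-free moment bounds from $\E_t$ to $\x_t$ and then to the quadratic forms $\x_{t-1}^{(\i)}(\x_{t-1}^{(\i)})^\top$) together with a truncation step for the heavy tails, and it is here that the hypotheses $\op{\M}\le\delta$, $MN=O(T^{\beta})$, and $T\gtrsim\ks^4\log(M\vee N\vee T)$ are consumed. \textsf{Regime 1} is the degenerate case in which all of this collapses to the classical ergodic/martingale-CLT toolkit for a fixed-dimensional stable VAR(1).
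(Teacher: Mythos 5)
Your proposal follows essentially the same route as the paper: the exact least-squares identity, a martingale-difference CLT for the score $\tfrac{1}{\sqrt T}\Y_{\i}^\top\bnu_{\i}$ via Cram\'er--Wold with a conditional Lindeberg check, consistency of the Gram matrix $\tfrac1T\Y_{\i}^\top\Y_{\i}\to\bGamma_0(\calJ_{\i};\calJ_{\i})$ (ergodic theorem in Regime 1, a quantitative concentration bound built on the MA($\infty$) expansion in Regimes 2--3), and Slutsky. The only difference is cosmetic: where you invoke a Rosenthal/Burkholder-plus-truncation argument for the Gram-matrix concentration, the paper uses the functional dependence measure of Wu and a Nagaev-type inequality (its Lemmas on $\hat\bGamma_0$ and $\bnu_{\i}^\top\x_{\j}$), but the decomposition and all key steps coincide.
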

This result does not require extra structure on the full innovation covariance as in \citep{chen2021autoregressive, sun2023matrix}. Our theorem implies that the coefficients $\hat\m_{\i}$ have a convergence rate of $\sqrt{T}$, which matches known results for matrix autoregression \citep{chen2021autoregressive} and matrix autoregression model with auxiliary covariate in \citep{sun2023matrix}, and continues to hold when $M,N$, even $\ks$ grow. 

\blue{\paragraph{Inference based on Theorem~\ref{thm:entry}.}
Theorem~\ref{thm:entry} also provides a direct way to construct confidence intervals for the
local autoregressive coefficients. For a fixed location $\i\in\calS$, define the
plug-in estimators
\[
\hat\bGamma_{\i}
= \frac{1}{T-1}\Y_{\i}^\top\Y_{\i},
\qquad
\hat\sigma_{\i}^2
=
\frac{\ltwo{\z_{\i}-\Y_{\i}\hat\m_{\i}}^2}
{T-1-|\calJ_{\i}|}.
\]
For any fixed vector $\a\in\RR^{|\calJ_{\i}|}$, since
$\hat\bGamma_{\i}\xrightarrow{p}\bGamma_0(\calJ_{\i};\calJ_{\i})$ under the same regimes
as in Theorem~\ref{thm:entry} and $\hat\sigma_{\i}^2$ is a consistent estimator of
$\sigma_{\i}^2$, Slutsky's theorem gives
\[
\frac{\a^\top(\hat\m_{\i}-\m_{\i})}
{\hat\sigma_{\i}\{\a^\top(\Y_{\i}^\top\Y_{\i})^{-1}\a\}^{1/2}}
\implies N(0,1).
\]
Therefore, an asymptotic $1-\alpha$ confidence interval for the linear functional
$\a^\top\m_{\i}$ is
\[
\a^\top \hat\m_{\i}
\pm
z_{1-\alpha/2}
\hat\sigma_{\i}
\{\a^\top(\Y_{\i}^\top\Y_{\i})^{-1}\a\}^{1/2},
\]
where $z_{1-\alpha/2}$ denotes the $1-\alpha/2$ quantile of the standard normal
distribution. In particular, for the $j$-th coordinate of $\m_{\i}$, by taking
$\a=\e_j$, we obtain
$\hat\m_{\i,j}
\pm
z_{1-\alpha/2}
\hat\sigma_{\i}
\left\{[(\Y_{\i}^\top\Y_{\i})^{-1}]_{j,j}\right\}^{1/2}$.
When the neighborhood is selected by the BIC criterion in \eqref{BIC-entrywise},
we use the selected neighborhood $\calJ_{\i}[\hat k_{\i}]$ in the above construction.
Under the selection consistency result in Theorem~\ref{thm:asymptotic}, this plug-in confidence interval has the
same first-order asymptotic validity as the oracle interval based on the true
neighborhood $\calJ_{\i}$.}

Since $\{\hat\m_{\i}\}_{\i\in\calS}$ are estimated based on possibly overlapped data, they are dependent to each other. 
In order to state the joint asymptotic behavior, we adopt column-major stacking operators: $\vecstack\{\v_{\i}: \i\in\calS\}$ denotes the vector formed by concatenating $\v_{\i}$ in column-major order, and $\blkstack\{\M_{\i,\j}:\i,\j\in\calS\}$ denotes the block matrix whose block rows and block columns follow the same column-major order.
\begin{theorem}\label{thm:overall}
Suppose Assumption \ref{assump:regime} holds. 
Then we have 
\begin{align*}
	&\sqrt{T}\cdot\vecstack\{\hat\m_{\i} - \m_{\i}: \i\in\calS\}\implies N(0,\D^{-1}\C\D^{-1})
\end{align*}
under any of \textsf{Regimes 1--3},
where the matrices $\C,\D$ are arranged in a column-major order
\begin{align*}
	\C &= \blkstack\big\{[\bSigma]_{\i,\j}\cdot
	\bGamma_0(\calJ_{\i};\calJ_{\j}): \\
	&\hspace{5.4em}\i,\j\in\calS\big\}, 
	\D &= \bdiag\big\{\bGamma_0(\calJ_\i;\calJ_\i): \\
	&\hspace{5.4em}\i\in\calS\big\},
\end{align*}
and $\bSigma = \cov(\bepsilon_t,\bepsilon_t)$. 
\end{theorem}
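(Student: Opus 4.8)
The plan is to reduce the joint statement to a \emph{scalar} central limit theorem via the Cramér--Wold device. This is also the natural reading of the conclusion in \textsf{Regime 2} and \textsf{Regime 3}, where the stacked vector $\vecstack\{\hat\m_\i-\m_\i\}$ has growing dimension: ``$\implies N(0,\D^{-1}\C\D^{-1})$'' is understood as convergence of every fixed finite-dimensional marginal, equivalently as $\sqrt{T}\,\a^\top\vecstack\{\hat\m_\i-\m_\i\}\implies N\big(0,\a^\top\D^{-1}\C\D^{-1}\a\big)$ for every deterministic $\a=\vecstack\{\a_\i:\i\in\calS\}$ with finitely many nonzero blocks. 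Fix such an $\a$ and set $S_T:=\sqrt{T}\,\a^\top\vecstack\{\hat\m_\i-\m_\i\}$. Writing $G_\i:=\bGamma_0(\calJ_\i;\calJ_\i)$ and $\hat G_\i:=\tfrac{1}{T-1}\Y_\i^\top\Y_\i=\tfrac{1}{T-1}\sum_{t=1}^{T-1}\x_t^{(\i)}(\x_t^{(\i)})^\top$, the identity $\z_\i=\Y_\i\m_\i+\bnu_\i$ gives the sandwich form $\hat\m_\i-\m_\i=\hat G_\i^{-1}\cdot\tfrac{1}{T-1}\Y_\i^\top\bnu_\i$ with $\tfrac{1}{T-1}\Y_\i^\top\bnu_\i=\tfrac{1}{T-1}\sum_{t=1}^{T-1}\x_t^{(\i)}[\E_{t+1}]_\i$.

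First I would record, exactly as in the proof of Theorem~\ref{thm:entry} (ergodic theorem in \textsf{Regime 1}; matrix-concentration, i.e.\ Hanson--Wright / matrix-Bernstein type bounds together with the scaling $T\gtrsim\ks^4\log(M\vee N\vee T)$, in \textsf{Regimes 2--3}), that $\hat G_\i\to G_\i$ in probability and that $G_\i$ is (uniformly) positive definite, whence $\hat G_\i^{-1}-G_\i^{-1}=o_P(1)$. Since also $\EE\big\|\tfrac{1}{\sqrt{T-1}}\Y_\i^\top\bnu_\i\big\|^2=\sigma_\i^2\,\mathrm{tr}(G_\i)=O(\ks)$ by independence of $\E_{t+1}$ from $\x_t$, the contribution of the Gram-matrix error to $S_T$ is $O_P\!\big(\max_\i\op{\hat G_\i^{-1}-G_\i^{-1}}\cdot\sqrt{\ks}\big)=o_P(1)$ over the finitely many active blocks. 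Hence $S_T=\tfrac{1}{\sqrt{T-1}}\sum_{t=1}^{T-1}c_t+o_P(1)$ with $c_t:=\sum_{\i\in\calS}\a_\i^\top G_\i^{-1}\x_t^{(\i)}\,[\E_{t+1}]_\i$ and $\sqrt{T}/\sqrt{T-1}\to1$.

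Next I would verify the martingale structure: with $\calF_t=\sigma(\E_s:s\le t)$, each $\a_\i^\top G_\i^{-1}\x_t^{(\i)}$ is $\calF_t$-measurable and $\E_{t+1}$ is mean-zero and independent of $\calF_t$, so $\{c_t,\calF_t\}$ is a martingale difference sequence. Its conditional second moment is
\[
\EE[c_t^2\mid\calF_t]=\sum_{\i,\j\in\calS}[\bSigma]_{\i,\j}\,\big(\a_\i^\top G_\i^{-1}\x_t^{(\i)}\big)\big((\x_t^{(\j)})^\top G_\j^{-1}\a_\j\big),
\]
and averaging over $t$, using $\tfrac{1}{T-1}\sum_t\x_t^{(\i)}(\x_t^{(\j)})^\top\to\bGamma_0(\calJ_\i;\calJ_\j)$ (ergodic theorem in \textsf{Regime 1}; an $L^2$/concentration bound in \textsf{Regimes 2--3}), yields $\tfrac{1}{T-1}\sum_t\EE[c_t^2\mid\calF_t]\to\sum_{\i,\j}[\bSigma]_{\i,\j}\,\a_\i^\top G_\i^{-1}\bGamma_0(\calJ_\i;\calJ_\j)G_\j^{-1}\a_\j=\a^\top\D^{-1}\C\D^{-1}\a=:v$, where the last equality is just the definitions of $\C=\blkstack\{[\bSigma]_{\i,\j}\bGamma_0(\calJ_\i;\calJ_\j)\}$ and $\D=\bdiag\{G_\i\}$. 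A conditional Lindeberg (or Lyapunov) condition for $\{c_t\}$ follows from the higher-moment hypotheses --- $(2+\tilde q)$ moments in \textsf{Regime 1}, $2q$ moments with $q>2$ in \textsf{Regimes 2--3} --- propagated to $\x_t^{(\i)}$ via the expansion $\x_t=\sum_{l\ge0}\M^l\bepsilon_{t-l}$ and the decay $\op{\M^l}\le\delta^l$, and thence to the finite sum $c_t$. The martingale CLT then gives $\tfrac{1}{\sqrt{T-1}}\sum_t c_t\implies N(0,v)$, hence $S_T\implies N(0,\a^\top\D^{-1}\C\D^{-1}\a)$; as $\a$ was arbitrary, Cramér--Wold delivers the claim.

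The main obstacle is the control that is genuinely new relative to Theorem~\ref{thm:entry}: the joint conditional-variance step, which requires simultaneous concentration of the \emph{cross} averages $\tfrac{1}{T-1}\sum_t\x_t^{(\i)}(\x_t^{(\j)})^\top$ around $\bGamma_0(\calJ_\i;\calJ_\j)$ for the active block pairs, producing the off-diagonal factor $[\bSigma]_{\i,\j}$ that reflects the overlap of the per-entry regressions, together with verifying the Lindeberg condition for the vector martingale when $\ks$ diverges. Both are handled by the moment bounds on $\{\E_t\}$, the geometric decay of $\M^l$, and --- in \textsf{Regime 3} --- the scaling $T\ge\tilde C_1\ks^4\log(M\vee N\vee T)$, which also ensures the $o_P(1)$ Gram-error bound above survives multiplication by the $O_P(\sqrt{\ks})$ size of $\tfrac{1}{\sqrt{T}}\Y_\i^\top\bnu_\i$. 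Everything else is bookkeeping over the single-entry argument.
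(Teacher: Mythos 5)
Your proposal is correct and follows essentially the same route as the paper: Cram\'er--Wold reduction, recognition of $\sum_{\i}\bgamma_\i^\top\x_{t-1}^{(\i)}[\E_t]_\i$ as a martingale difference array, convergence of the conditional variance to $\a^\top\D^{-1}\C\D^{-1}\a$ via concentration of the cross sample covariances (Lemma~\ref{lemma:autocov} in \textsf{Regime 1}, Lemma~\ref{lemma:main} in \textsf{Regimes 2--3}), a Lindeberg check, and a Slutsky step for the Gram matrices. The only cosmetic difference is that you absorb $G_\i^{-1}$ into the test vector before invoking the martingale CLT, whereas the paper proves the CLT for $T^{-1/2}\vecstack\{\Y_\i^\top\bnu_\i\}$ first and applies $\D^{-1}$ afterwards.
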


For \modelone{}, 
we denote the compact rank $R$ SVD of $\M^{\blk}$ (defined in \eqref{Mblk}) by
$\M^{\blk}$ be $\M^{\blk} = \U^{\blk}\bSigma^{\blk}\V^{\blk\top}$. 
Note that the vectorization of $\M^{\blk}$ differs from $\vecstack\{\m_{\i}:\i\in\calS\}$ a permutation matrix $\Q$, that is $\vec(\M^{\blk}) = \Q\cdot \vecstack\{\m_{\i}:\i\in\calS\}$. 
We have the following asymptotic result. 
\begin{theorem}\label{thm:separable}
Suppose Assumption \ref{assump:regime} holds. Then we have 
\begin{align*}
	&\sqrt{T}\cdot\vec(\hat\M_{R}^{\blk} - \M^{\blk}) \implies N(0,\P\Q\D^{-1}\C\D^{-1}\Q^\top\P)
\end{align*} 
under any of \textsf{Regimes 1--3},
where $\P = \I - \V_{\perp}^{\blk}\V_{\perp}^{\blk\top}\otimes \U_{\perp}^{\blk}\U_{\perp}^{\blk\top}$.
\end{theorem}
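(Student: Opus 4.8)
The plan is to view $\hat\M_R^{\blk} = \svd_R(\hat\M^{\blk})$ as a smooth function of $\hat\M^{\blk}$ near the true $\M^{\blk}$ and apply the delta method to the joint CLT of Theorem \ref{thm:overall}. First I would transfer the conclusion of Theorem \ref{thm:overall} through the fixed permutation $\Q$: since $\vec(\M^{\blk}) = \Q\cdot\vecstack\{\m_\i:\i\in\calS\}$ and $\Q$ is deterministic and orthogonal, the continuous mapping theorem immediately gives
\begin{align*}
	\sqrt{T}\cdot\vec(\hat\M^{\blk} - \M^{\blk}) \implies N\big(0,\ \Q\D^{-1}\C\D^{-1}\Q^\top\big)
\end{align*}
under any of the three regimes. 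This reduces the problem to understanding how the rank-$R$ truncation map acts on an asymptotically-normal perturbation of a fixed matrix of exact rank $R$.

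Next I would establish the key differentiability fact: if $\M^{\blk}$ has rank exactly $R$ with a positive $R$-th singular value (so there is a spectral gap between the $R$-th and $(R{+}1)$-th singular values, the latter being zero), then the map $\Phi: \X\mapsto \svd_R(\X)$ is Fréchet differentiable at $\M^{\blk}$, and its derivative applied to a direction $\H$ is the projection onto the tangent space of the rank-$R$ manifold at $\M^{\blk}$. Concretely, writing $\M^{\blk} = \U^{\blk}\bSigma^{\blk}\V^{\blk\top}$ with orthogonal complements $\U_\perp^{\blk}, \V_\perp^{\blk}$, the derivative is
\begin{align*}
	D\Phi(\M^{\blk})[\H] = \H - \U_\perp^{\blk}\U_\perp^{\blk\top}\,\H\,\V_\perp^{\blk}\V_\perp^{\blk\top},
\end{align*}
which in vectorized form is exactly multiplication by $\P = \I - \V_\perp^{\blk}\V_\perp^{\blk\top}\otimes\U_\perp^{\blk}\U_\perp^{\blk\top}$. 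This is a standard perturbation-theoretic computation (e.g.\ via the implicit function theorem applied to the SVD, or by expanding the eckart--Young truncation): the first-order change in the best rank-$R$ approximation kills only the component of $\H$ that maps the bottom-right block $\U_\perp^{\blk\top}\H\V_\perp^{\blk}$, since that block corresponds to perturbing in directions that only grow the already-zero trailing singular values. I would also note $\P$ is itself an orthogonal projection ($\P^2 = \P = \P^\top$), so $\P^\top = \P$ in the stated covariance.

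Finally I would invoke the delta method: combining the CLT for $\sqrt{T}\,\vec(\hat\M^{\blk}-\M^{\blk})$ with Fréchet differentiability of $\Phi$ at $\M^{\blk}$ yields
\begin{align*}
	\sqrt{T}\cdot\vec(\hat\M_R^{\blk} - \M^{\blk}) = \P\cdot\sqrt{T}\,\vec(\hat\M^{\blk}-\M^{\blk}) + o_P(1) \implies N\big(0,\ \P\,\Q\D^{-1}\C\D^{-1}\Q^\top\,\P\big),
\end{align*}
which is the claim. The main obstacle is the differentiability step: one must verify that the rank-$R$ SVD truncation is genuinely (Hadamard/Fréchet) differentiable at $\M^{\blk}$ and correctly identify the derivative as $\P$. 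This requires the nondegeneracy assumption that $\sigma_R(\M^{\blk})>0$ (i.e.\ $\M^{\blk}$ has rank exactly $R$, as guaranteed under \modelone{}), so that the truncation is locally single-valued and smooth; no assumption on the multiplicity of individual nonzero singular values is needed, since we only need the projection onto the rank-$R$ tangent space rather than the individual singular vectors. A secondary technical point is that the delta method here is applied in a possibly growing-dimension setting (Regimes 2 and 3), so strictly one should either restrict the weak-convergence statement to fixed finite-dimensional projections or carry along the implicit dependence of $\P,\Q,\D,\C$ on the dimensions; the cleanest route is to state the convergence coordinate-wise (for any fixed collection of entries of $\M^{\blk}$), mirroring how Theorems \ref{thm:entry} and \ref{thm:overall} are to be read.
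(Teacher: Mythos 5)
Your proposal is correct and follows essentially the same route as the paper: the paper's proof is exactly a delta-method argument built on the first-order expansion $\svd_R(\hat\M^{\blk})-\M^{\blk} = \Z - \U_\perp^{\blk}\U_\perp^{\blk\top}\Z\V_\perp^{\blk}\V_\perp^{\blk\top} + \R$, which it imports as a quantitative perturbation lemma (with remainder $\fro{\R}\lesssim \fro{\Z}^2/\lambda_{\min}$, hence $\sqrt{T}\fro{\R}=o_p(1)$) rather than phrasing it as Fr\'echet differentiability of the truncation map. Your identification of the derivative as $\P$, the role of $\sigma_R(\M^{\blk})>0$, and the combination with Theorem \ref{thm:overall} all match the paper's argument.
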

When there is separable structure with the coefficient matrices, the estimator $\hat\M_{R}^{\blk}$ has the same convergence rate but the covariance matrix is smaller in the sense $\P\Q\D^{-1}\C\D^{-1}\Q^\top\P \leq \Q\D^{-1}\C\D^{-1}\Q^\top$ since $\P$ is a projection matrix.

\subsection{Estimation for Auto-Correlation}
In this section, we consider the estimation for the auto-correlation. 
We consider the following estimator
$\hat\bGamma_0 = \frac{1}{T}\sum_{t=1}^T \x_t\x_t^\top$. 
And we have the following entry-wise bound for $\linf{\hat\bGamma_0 - \bGamma_0}$, where $\linf{\A} := \linf{\vec(\A)}$ for any matrix $\A$. 
\begin{theorem}\label{thm:linf:auto-cov}
Under Assumption \ref{assump:regime}, we have 
\begin{align*}
	\linf{\hat\bGamma_0 - \bGamma_0} = O_{p}\big((T^{-1}\log(M\vee N\vee T))^{1/2}\ks\big)
\end{align*}
under any of \textsf{Regimes 1--3}.
\end{theorem}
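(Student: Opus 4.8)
The plan is to reduce everything to a uniform entrywise bound. Identify $\x_t$ with $\vec(\X_t)\in\RR^{MN}$, fix a pair of coordinates $\a,\b$, and write $[\hat\bGamma_0-\bGamma_0]_{\a\b}=\tfrac1T\sum_{t=1}^T Z_t$ with $Z_t:=[\x_t]_\a[\x_t]_\b-\EE\big([\x_t]_\a[\x_t]_\b\big)$; the goal is to show $\PP\big(|\tfrac1T\sum_t Z_t|>x\big)$ is small for $x\asymp\ks\sqrt{T^{-1}\log(M\vee N\vee T)}$ and conclude by a union bound over the $(MN)^2$ coordinate pairs. The first ingredient is a moment bound, and this is where $\ks$ enters. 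From the MA$(\infty)$ expansion $\x_t=\sum_{l\ge0}\M^l\bepsilon_{t-l}$, using $\op{\M^l}\le\delta^l$ (Regimes~2--3) or $\op{\M^l}\le C\delta^l$ (Regime~1, by Gelfand's formula since $\rho(\M)<1$), together with locality — the $\a$-th row of $\M$ is supported on at most $\ks$ sites, so the $\a$-th row of $\M^l$ has at most $O(l^2\ks)$ nonzeros — one obtains
\begin{align*}
\big\|[\x_t]_\a\big\|_{2q}\;\le\;\sum_{l\ge0}\big\|\e_\a^\top\M^l\bepsilon_{t-l}\big\|_{2q}\;\le\;\mu_{2q}\sum_{l\ge0}\big\|\e_\a^\top\M^l\big\|_{\ell_1}\;\le\;\mu_{2q}\sum_{l\ge0}O\big(l\sqrt{\ks}\big)\,\delta^l\;=\;O\big(\sqrt{\ks}\big),
\end{align*}
where we used $\|\e_\a^\top\M^l\|_{\ell_1}\le\sqrt{\|\e_\a^\top\M^l\|_{\ell_0}}\,\op{\M^l}$ and $\max_{\j}\|[\bepsilon_t]_{\j}\|_{2q}\le\mu_{2q}$. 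H\"older's inequality then yields $\max_{\a,\b}\|Z_t\|_q=O(\ks)$ and $\max_{\a,\b}\var(Z_t)=O(\ks^2)$.

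The second ingredient is weak dependence together with a deviation inequality. The sequence $\{Z_t\}$ is a fixed quadratic function of the innovation stream whose dependence on the innovations $l$ steps in the past is governed by $\op{\M^l}\le\delta^l$; coupling $\bepsilon_{t-l}$ for $l\ge h$ with independent copies changes $Z_t$ in $L_q$ by $O(\delta^h\ks)$, so $\{Z_t\}$ has geometrically decaying physical dependence with $\ks$-polynomial prefactors, and in particular $\sum_{h}|\cov(Z_0,Z_h)|=O(\ks^2)$. A Nagaev-type inequality for weakly dependent, $q$-integrable stationary sequences — equivalently: truncate the linear process at lag $L\asymp\log(M\vee N\vee T)$ (the truncation error is negligible after a crude union bound, $\delta^L$ being polynomially small in $MNT$), group the resulting $L$-dependent sequence into interleaved length-$\asymp L$ blocks so that blocks in a parity class are independent, and apply the classical Fuk--Nagaev inequality blockwise — gives, for each fixed $(\a,\b)$,
\begin{align*}
\PP\!\left(\Big|\tfrac1T\textstyle\sum_{t=1}^T Z_t\Big|>x\right)\;\lesssim\;\exp\!\Big(-\frac{c\,Tx^2}{\ks^2}\Big)\;+\;\frac{L^{q-1}\,\ks^{q}}{T^{q-1}x^{q}}.
\end{align*}

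With $x=C\ks\sqrt{T^{-1}\log(M\vee N\vee T)}$, the first term equals $(M\vee N\vee T)^{-cC^2}$, which beats the $(MN)^2$-fold union bound once $C$ is large; the second term is $O\big((\log(M\vee N\vee T))^{q/2-1}T^{1-q/2}\big)$, and $(MN)^2$ times it tends to $0$ because $MN=O(T^\beta)$ with $\beta<\tfrac{q-2}{4}$ — the very trade-off already imposed in Regimes~2--3. This gives the stated bound in Regimes~2 and 3, and in Regime~3 the extra assumption $T\gtrsim\ks^4\log(M\vee N\vee T)$ is what renders it $o(1)$. In Regime~1, $M$, $N$, and $\ks$ are fixed, so the maximum is over a fixed number of coordinates and the claim is immediate from the classical $\sqrt{\log T/T}$ rate for the sample autocovariance of a weakly dependent stationary sequence.

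I expect the main obstacle to be the concentration step in Regimes~2--3: obtaining a deviation inequality for the weakly dependent, only $q$-times integrable $\{Z_t\}$ with the \emph{correct} explicit scaling in $T$ and $\ks$ — in particular pinning the long-run variance at $O(\ks^2)$ uniformly, which needs the geometric decay of $\cov(Z_0,Z_h)$ rather than the crude finite-range dependence of a naive truncation (otherwise an extra logarithmic factor enters the rate) — and then checking that the union bound over the $O((MN)^2)$ pairs survives under exactly $\beta<(q-2)/4$. A secondary point is the bookkeeping of the locality factors: verifying that the per-row sparsity of $\M^l$ grows only polynomially in $l$ so that $\op{\M^l}\le\delta^l$ still controls the series, which is what produces the clean $\sqrt{\ks}$ in the moment bound and hence $\ks$ (not $\ks^2$) in the final rate.
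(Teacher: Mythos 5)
Your proposal is correct and follows essentially the same route as the paper: reduce to a uniform entrywise bound, use the MA$(\infty)$ expansion with the per-row sparsity $\lzero{\e_\a^\top\M^l}\lesssim \ks l^2$ and $\op{\M^l}\le\delta^l$ to get the $\sqrt{\ks}$ moment bound, control the physical dependence measure of $Z_t$ geometrically with a $\ks$ prefactor, apply a Nagaev-type inequality for functionally dependent sequences (the paper cites Theorem~2 of Liu--Wu directly rather than re-deriving it by truncation and blocking), and close with a union bound over the $(MN)^2$ pairs under $\beta<(q-2)/4$. The only cosmetic difference is the extra logarithmic factor in your polynomial tail term from the blocking argument, which does not affect the conclusion.
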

The proof of Theorem \ref{thm:linf:auto-cov} is given in Section~\ref{sec:proof-thm4} of the Appendix.
Under \textsf{Regime 1 or 2}, the entry-wise convergence rate is of order $\tilde O(1/\sqrt{T})$, where $\tilde O$ hides the logarithm factor. 
In either regimes, $\hat\bGamma_0$ converges in probability to $\bGamma_0$.

\subsection{Consistency of the Neighborhood Selection}
We denote $J_{\infty} = \max\big\{\ks, \max_{\i\in\calS} |\calJ_{\i}[K_0]|\big\}$. 
Also recall $\calJ_{\i}[k_0] = \calJ_{\i}$ is the ground-truth.  
In order to establish the consistency of $\hat k_{\i}$, we need the following assumption:
\begin{assumption}\label{assump}
There exists $\kappa_1>0$, independent of the dimensions and sample size, such that
\begin{enumerate}[label = (\arabic*)]
	\item $\lambda_{\min}(\bGamma_0) \geq \kappa_1$; 
	\item For each $\i\in\calS$, there exists $\tilde C_2$ depending only on $\kappa_1, \mu_{2q}$:
	\begin{align*}
		&\fro{\M_{\i}(\calJ_{\i}\backslash\calJ_{\i}[k_0-1])} \\
		&\qquad \geq \left(\frac{\tilde C_2D_0|\calJ_{\i}|}{T}
		\log(M\vee N\vee T)\right)^{1/2};
	\end{align*}
	\item[(3)] Under \textsf{Regime 3}, in addition to (2) in Assumption \ref{assump:regime}, $T\geq \tilde C_1 J_{\infty}^4\log(M\vee N\vee T)$ for some $\tilde C_1>0$ depending only on $\delta, \mu_{2q}$.
\end{enumerate}
\end{assumption}

Here the first condition in Assumption \ref{assump} guarantees that the auto-correlation $\cov(\x_t)$ is strictly positive definite.
The second condition ensures that the index set $\calJ_{\i}[k_0]$ is asymptotically identifiable, and
\[
\left(\tilde C_2D_0\frac{|\calJ_{\i}|}{T}\log(M\vee N\vee T)\right)^{1/2}
\]
is the minimum order of the non-zero coefficients.
And in the third assumption, we need a slightly stronger assumption on the rate at which $J_{\infty}$ grows.

Together with this assumption, we can guarantee the consistency of the selection of neighborhood. 
Recall $\knb = \max_{\i}|\calJ_{\i}|$.
\begin{theorem}\label{thm:asymptotic}
Suppose Assumption \ref{assump:regime}, \ref{assump} holds.
As long as we choose $D_0 \geq \tilde C\knb^2\ks$ for some constant $\tilde C$ depending only on $\kappa_1,\delta, \mu_{2q}$, we have for each $\i\in\calS$, $\PP(\hat k_{\i} = k_0) \rightarrow 1$ under any of \textsf{Regimes 1--3}.
\end{theorem}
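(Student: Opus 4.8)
The plan is to split the event $\{\hat k_{\i}\neq k_0\}$ into an \emph{under-selection} part $\{\hat k_{\i}<k_0\}$ and an \emph{over-selection} part $\{\hat k_{\i}>k_0\}$, and show each has vanishing probability. For both parts I would work with the difference of the BIC values, $\bic_{\i}(k)-\bic_{\i}(k_0)$, and show it is positive with probability tending to one for every $k\neq k_0$ in the nested ladder. Since $\bic_{\i}(k)=\log\rss_{\i}(k)+D_0\frac{s_k}{T}\log(M\vee N\vee T)$, the argument reduces to comparing the drop (or rise) in $\log\rss_{\i}(k)$ against the change $D_0\frac{s_k-s_{k_0}}{T}\log(M\vee N\vee T)$ in the penalty. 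Throughout I would condition on the ``good'' event on which the sample Gram matrices $\frac1{T}\Y[k]^\top\Y[k]$ are uniformly (over $k\le K_0$) close to their population analogues $\bGamma_0(\calJ_{\i}[k];\calJ_{\i}[k])$, which in turn are uniformly positive definite by Assumption~\ref{assump}(1); this good event has probability $\to 1$ via a concentration bound of the type in Theorem~\ref{thm:linf:auto-cov} (the $J_\infty^4\log(M\vee N\vee T)\le C T$ requirement in Assumption~\ref{assump:regime}(3)/\ref{assump}(3) is exactly what makes this uniform control possible in \textsf{Regime 3}).

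\textbf{Under-selection ($k<k_0$).} Here the submodel $\calJ_{\i}[k]$ omits genuinely nonzero coefficients. The key quantity is the extra signal captured by the larger model: on the good event, $\rss_{\i}(k)-\rss_{\i}(k_0)$ is, up to constants, at least $T$ times the squared population prediction gain from adding the omitted entries, which by Assumption~\ref{assump}(1)--(2) is bounded below by $\kappa_1\,\fro{\M_{\i}(\calJ_{\i}\setminus\calJ_{\i}[k_0-1])}^2 \gtrsim \kappa_1\tilde C_2 D_0\frac{|\calJ_{\i}|}{T}\log(M\vee N\vee T)$ — note the telescoping over $k=k_0-1,\dots$ so it suffices to handle the ``last'' omitted layer. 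Meanwhile $\rss_{\i}(k_0)=\Theta_p(T\sigma_{\i}^2)$, so $\log\rss_{\i}(k)-\log\rss_{\i}(k_0)\gtrsim \frac{1}{\sigma_{\i}^2}\cdot\tilde C_2 D_0\frac{|\calJ_{\i}|}{T}\log(M\vee N\vee T)$, which dominates the penalty reduction $D_0\frac{|\calJ_{\i}|-s_k}{T}\log(M\vee N\vee T)$ once $\tilde C_2$ is large enough. Taking a union bound over the at most $K_0$ values of $k<k_0$ closes this case.

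\textbf{Over-selection ($k>k_0$).} Now the larger model is correct but adds $s_k-s_{k_0}$ spurious regressors, so $\rss_{\i}(k_0)-\rss_{\i}(k)$ is a (non-central becomes central) quadratic form in the noise that is $O_p\big((s_k-s_{k_0})\cdot(\text{noise level})\big)$; more carefully, on the good event it is at most a constant multiple of $\sigma_{\i}^2$ times $\|\,(\Y[k]^\top\Y[k])^{-1/2}\Y[k]^\top\bnu_{\i}\,\|_{\ell_2}^2$ restricted to the added coordinates, which one bounds by a maximal inequality using the $2q$-moment control on $\E_t$ (Regime 2/3) or the $2+\tilde q$ moment (Regime 1). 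Hence $\log\rss_{\i}(k_0)-\log\rss_{\i}(k)=O_p\big(\frac{s_k-s_{k_0}}{T}\cdot\text{polylog}\big)$, whereas the penalty gain is $D_0\frac{s_k-s_{k_0}}{T}\log(M\vee N\vee T)$; since $D_0\ge \tilde C\knb^2\ks$ (and $s_k-s_{k_0}\le J_\infty$), the penalty strictly dominates, so $\bic_{\i}(k)>\bic_{\i}(k_0)$ whp. Union bound over $k>k_0$ finishes.

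\textbf{The main obstacle} I anticipate is making the over-selection bound uniform in the high-dimensional regimes: the spurious-regressor quadratic form must be controlled not just in expectation but with enough tail decay to survive a union bound over $k\le K_0$ \emph{and} (implicitly, when one wants the whole collection $\{\hat k_{\i}\}$) over $\i\in\calS$, using only polynomial moments on the innovations rather than sub-Gaussianity. This is where the rate condition $T\gtrsim J_\infty^4\log(M\vee N\vee T)$ and the somewhat large penalty constant $D_0\gtrsim\knb^2\ks$ get used — they provide the slack needed to absorb the $\ks$-factor loss from Theorem~\ref{thm:linf:auto-cov}-style bounds and to convert the $2q$-moment tail (via Markov/Rosenthal) into a usable union bound. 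A secondary, more bookkeeping-type difficulty is handling the telescoping lower bound in the under-selection case when the nested ladder adds several entries per step, so that Assumption~\ref{assump}(2)'s lower bound on the ``outermost shell'' $\calJ_{\i}\setminus\calJ_{\i}[k_0-1]$ correctly propagates to every $k<k_0$.
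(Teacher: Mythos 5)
Your proposal follows essentially the same route as the paper: split into under- and over-selection, condition on a good event where the local Gram matrices and noise–covariate cross terms concentrate, lower-bound the RSS gain for $k<k_0$ via the margin condition of Assumption~\ref{assump}(2) together with $\lambda_{\min}(\bGamma_0)\geq\kappa_1$, and for $k>k_0$ bound the spurious quadratic form $\|\tilde\S(\tilde\S^\top\tilde\S)^{-1}\tilde\S^\top\bnu_{\i}\|_{\ell_2}^2$ entrywise so that the penalty with $D_0\gtrsim \knb^2\ks$ dominates. You also correctly identify where the $|\calJ_{\i}|^2\ks$ loss enters and why it forces the size of $D_0$, which is exactly the accounting the paper carries out.
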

Under \textsf{Regime 1 or 2}, when $\ks$ remains fixed, it suffices to set $D_0 = \log(\log T)$. However, Theorem \ref{thm:asymptotic} allows $\ks$ to go to infinity as well. In this case, we need to set $D_0 = C_J\log(\log T)$, where $C_J\geq \knb^2\ks$. Although $\knb$ is not known in practice, we can choose $C_J$ to be large and the condition in Theorem \ref{thm:asymptotic} is satisfied. However, a too large choice of $C_J$ would lead to a stricter marginal condition in Assumption \ref{assump}.

\section{Numerical Experiments}\label{sec:numeric}
In this section, we present a series of numerical experiments to demonstrate the effectiveness of our proposed methodology. Throughout, we focus on both model selection and estimation accuracy, and we compare against several existing benchmarks. 
Unless otherwise noted, we conduct 50 independent trials for each experiment.

Our first experiment evaluates the entrywise BIC in \eqref{BIC-entrywise} for neighborhood-size selection under the lag-one \model{} model \eqref{LIAR:lag1}:
\begin{align*}
\X_t = \sum_{\i}\inp{\X_{t-1}}{\M_{\i}}\e_{i_1}\e_{i_2}^\top+\E_t,
\end{align*}
where $\M_{\i}$ is supported on the region with center at $\i$ and neighborhood size $K=3$ (although our framework allows the neighborhood size to vary across locations, in this simulation we set uniform $K$ for all pixels). 
\blue{For data generation, the nonzero local coefficients were generated within the true
neighborhood and then rescaled to ensure stationarity of the resulting time series.
To maintain a clear signal at the boundary of the true neighborhood, we used slightly
larger coefficients on the outermost ring of the true neighborhood. The innovations
had independent mean-zero Gaussian entries with standard deviation \(0.12\), and the
first \(600\) observations were discarded as burn-in.}
We vary the size of the matrix $(M,N)$ from $\{(10,10), (15,15), (20,20)\}$, and the sample size 
$T\in\{600,800,1000\}$. 
For each configuration, the entrywise BIC is applied over the candidate neighborhood sizes $\{0,1,2,3,4,5\}$ and success is recorded when the selected neighborhood size equals the truth $K=3$. 

Figure~\ref{fig:BIC} displays the success rates for each $(M,N)$ and $T$ combination (together with a per-pixel success-rate heatmap). The results show that the BIC selector
remains informative in these moderate sample settings.
The success rate increases with \(T\) but declines as the size $(M,N)$ grows. The per-pixel heatmap further shows slightly lower rates near the boundaries than in the interior. 
This phenomenon can be explained by the edge effect: boundary pixels have truncated neighborhoods, weakening the signal for identifying $K$ and leading BIC to favor smaller sizes; interior pixels, with full neighborhoods, provide stronger signal.

\begin{figure}[!t]
\centering
\includegraphics[width=0.85\linewidth]{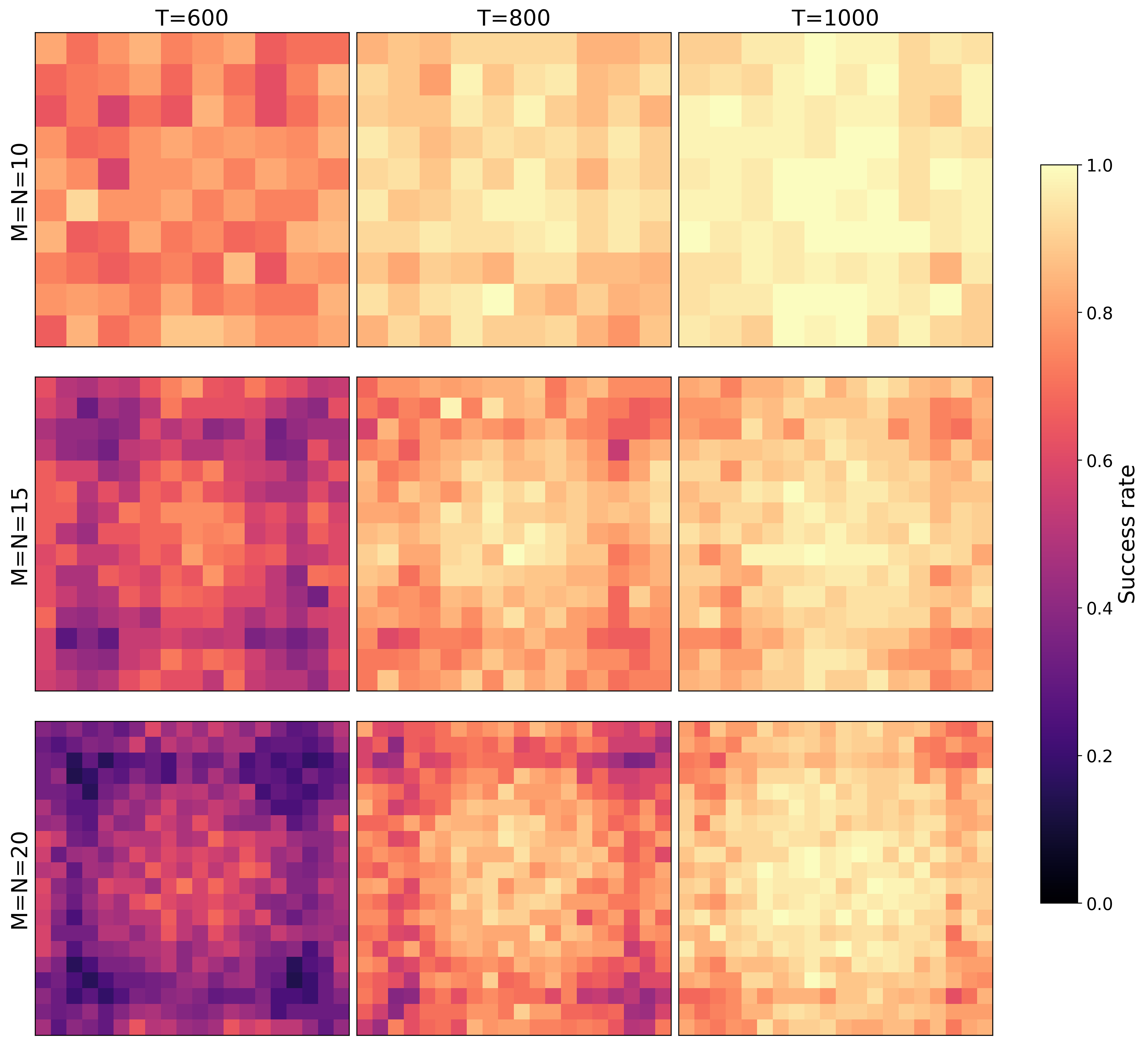}
\caption{Success rates for $M=N\in\{10,15,20\}$
	and $T\in\{600,800,1000\}$.}
\label{fig:BIC}
\end{figure}

Our second experiment evaluates estimation accuracy under the lag-one \model{} model with \(P=1\), \((M,N)=(10,10)\), and true neighborhood size \(K=3\). We vary the sample size \(T\in\{500,1000,1500,\dots,5000\}\). We consider three quantities: (i) kernel estimation error (Frobenius norm), (ii) auto-covariance \(\bGamma_0\) error (Frobenius norm), and (iii) auto-covariance \(\bGamma_0\) error (entrywise \(\ell_\infty\) norm).

Figure~\ref{fig:errors_vs_T} displays the log-scale errors versus \(T\) and the ratio $\fro{\bGamma_0 - \hat\bGamma_0}/\linf{\bGamma_0 - \hat\bGamma_0}$. All three errors decrease as \(T\) increases. The ratio remains approximately constant (about \(23\)) for larger \(T\), indicating that the error is broadly distributed across entries rather than concentrated on a few.
Figure~\ref{fig:errors_vs_invsqrtT} plots the means of the three errors against \(1/\sqrt{T}\). The errors scale approximately linearly with \(1/\sqrt{T}\), especially for larger \(T\), in line with the theoretical \(T^{-1/2}\) rate and supporting consistency under large samples.

\begin{figure}[htbp]
\centering
\includegraphics[width=0.9\linewidth]{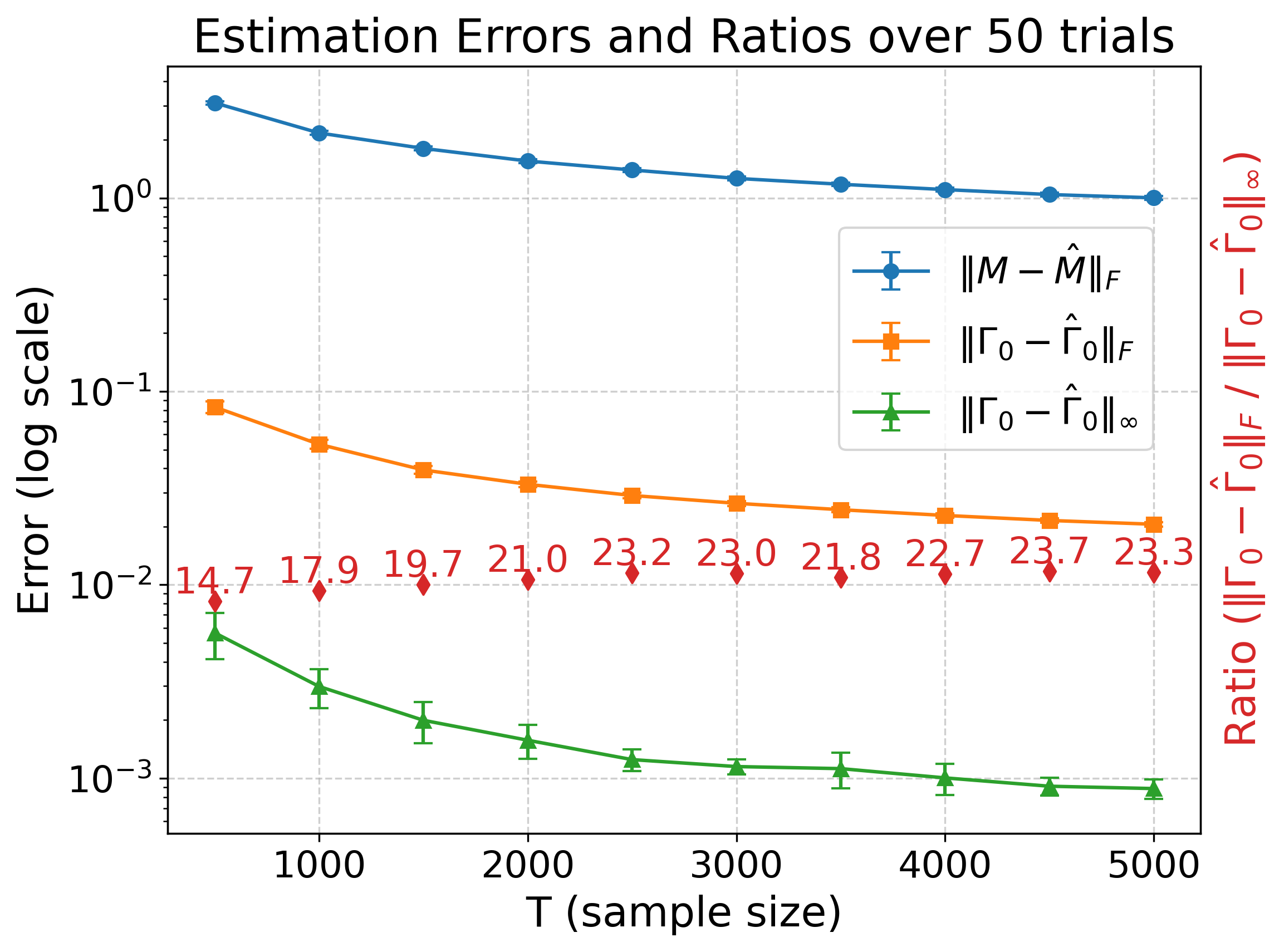}
\caption{Log scale of estimation errors of the kernels and auto-covariance $\bGamma_0$ in Frobenius and infinity norms versus $T$. The ratio of errors for the auto-covariance is also displayed.}
\label{fig:errors_vs_T}
\end{figure}

\begin{figure}[htbp]
\centering
\includegraphics[width=\linewidth]{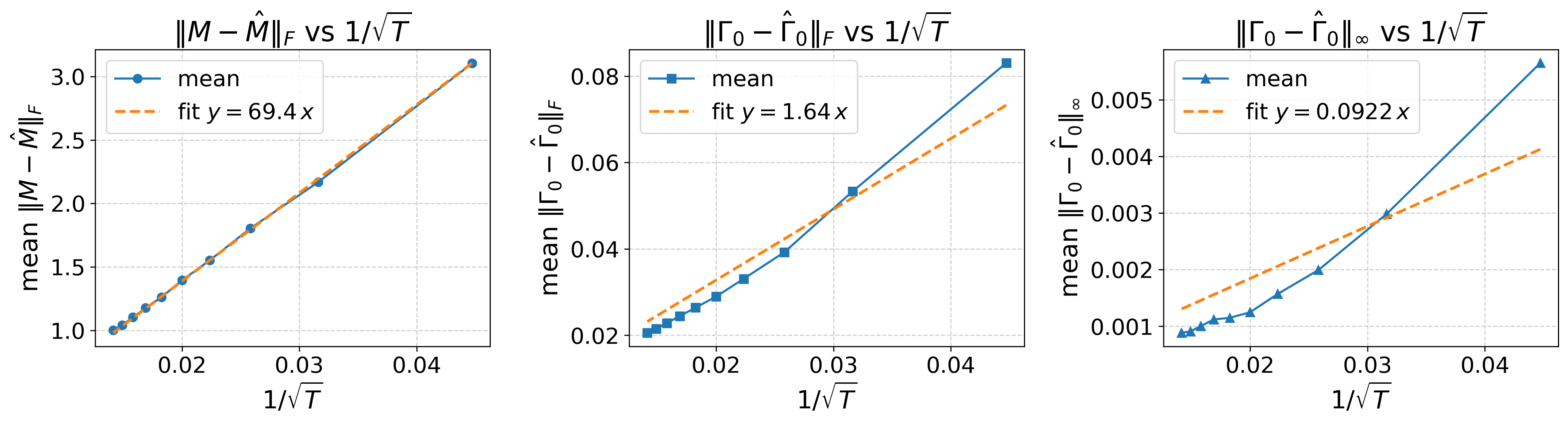}
\caption{Mean of the estimation errors of the kernels and auto-covariance $\bGamma_0$ in Frobenius and infinity norms versus $1/\sqrt{T}$.}
\label{fig:errors_vs_invsqrtT}
\end{figure}

In the third experiment, we compare the performance of our proposed model \model{} and its separable variant \modelone{} against three existing methods: (1) MAR \citep{chen2021autoregressive}, $\X_t = \sum_{p=1}^P\A_p\X_{t-p}\B_p^\top + \E_t$; (2) MAR-ST \citep{hsu2021matrix}, $\X_t = \sum_{p=1}^P\A_p\X_{t-p}\B_p^\top+ \E_t$, where $\A_p,\B_p$ are banded matrices; and (3) pixel-wise autoregression (\textsf{LIAR-P}), $[\X_t]_{\i} = \sum_{p=1}^P\beta_{\i p}[\X_{t-p}]_{\i}+[\E_t]_{\i}$ for each $\i\in\calS$. The last model can be seen as a special case of \model{} with $\calJ_{\i} = \{\i\}$.
\blue{We set \((M,N)=(20,20)\), \(T=200\), and true neighborhood size \(K=2\). The observations are split chronologically, with the first \(90\%\) used for model fitting and the remaining \(10\%\) used for testing. Prediction accuracy is measured by one-step-ahead full-matrix RMSE: at each test time point \(t\), the fitted model uses the observed previous lag(s) to form \(\widehat{\X}_t\), which is compared with the observed \(\X_t\); the RMSE is then computed over all matrix entries and all test time points.}
For MAR-ST we fix the bandwidth at \(K=2\). For MAR and MAR-ST we cap the iterations at \(50\) with tolerance \(10^{-7}\). 

We report prediction RMSE and runtime. As shown in Figure~\ref{fig:rmse_runtime}, \model{} is substantially faster than MAR and MAR-ST while achieving a markedly smaller prediction RMSE. \textsf{LIAR-P} is faster than \model{} but attains an RMSE that is about \(7\%\) higher.

\begin{figure}[htbp]
\centering
\includegraphics[width=\linewidth]{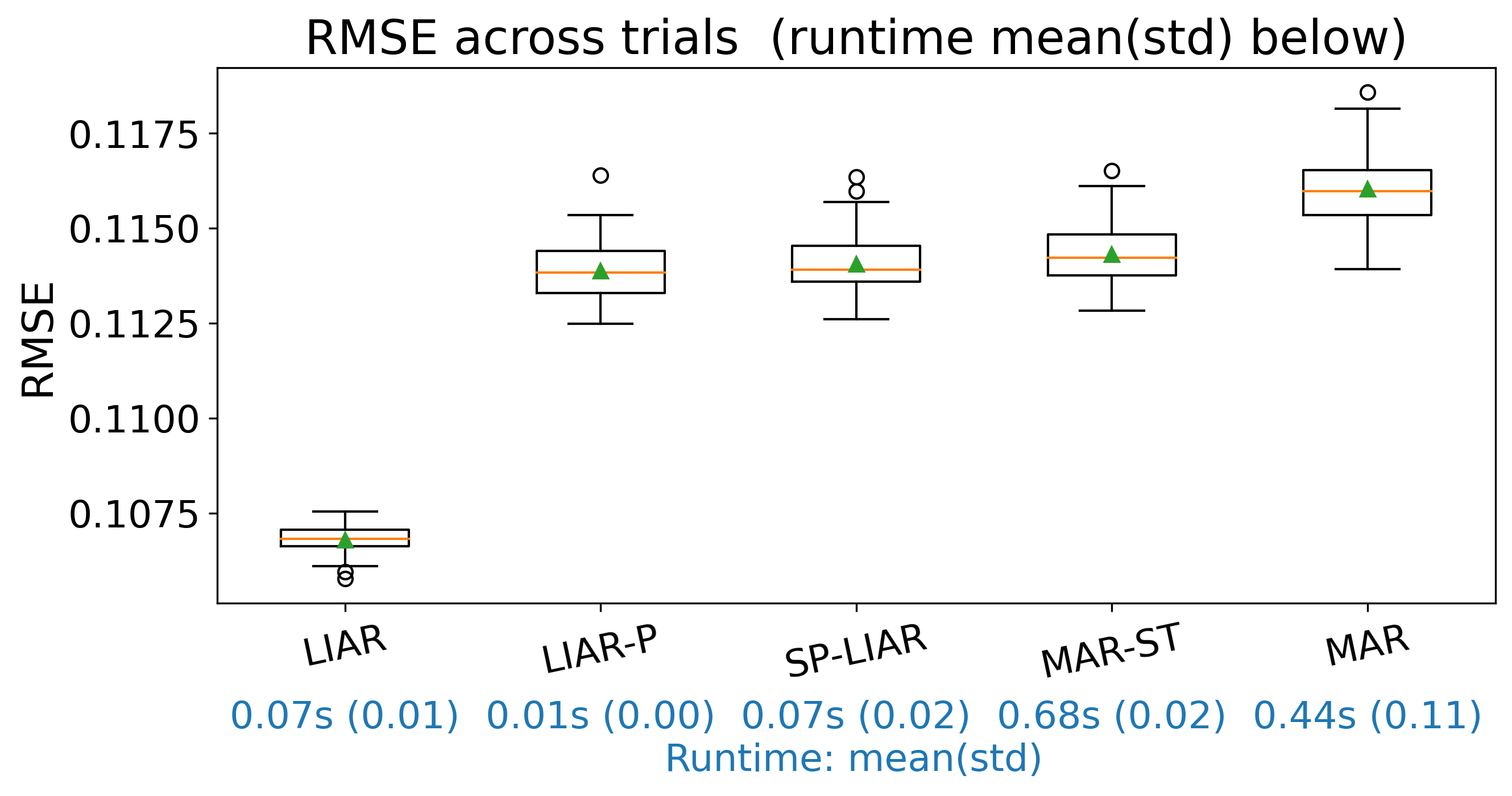}
\caption{RMSE and runtime comparison of \model{}, \modelone{}, MAR, MAR-ST, and \textsf{LIAR-P}.}
\label{fig:rmse_runtime}
\end{figure}

\section{Real data: Total Electron Content}\label{sec:real-data}
\blue{In this section, in addition to prediction accuracy and computational cost, we examine what the fitted local interaction structure reveals about TEC dynamics. We use the BIC-selected neighborhoods as data-driven summaries of local spatio-temporal dependence and compare them across different time periods. Additional tensor-based TEC experiments are reported in Appendix~\ref{sec:supp-tensor}.}

\subsection{\model{} for Matrix Time Series}
We analyze ionospheric total electron content (TEC) fields derived from multi-frequency Global Navigation Satellite System (GNSS) signals, a widely used proxy in space-weather and ionospheric studies. We use the completed TEC product of \citet{sun2023complete}. The original data are sampled on a \(1^\circ\times1^\circ\) (latitude \(\times\) longitude) grid every 5 minutes (\(181\times361\) per time point). For preprocessing, we aggregate the data to \(2^\circ\times2^\circ\) and 15-minute resolution by averaging non-overlapping \(2\times2\) spatial blocks and averaging consecutive three-frame windows in time, yielding a \(91\times181\) matrix time series at each time point.

\paragraph{Neighborhood selection.}
We analyze six representative 10-day periods: 2017-06, 2017-09, 2018-06,
2019-03, 2019-06, and 2019-12.
For each month, we use TEC data from the 11th to the 20th (inclusive), yielding
\(T=960\) time points at 15-minute resolution.
For each period, we apply the entrywise BIC to select the neighborhood size over
square candidates \(K\in\{0,1,2,3,4,5\}\).
Figure~\ref{fig:tec-bic-a} summarizes the selections; across all six periods, the
most frequently chosen sizes are \(K=1\) or \(K=2\).
\blue{The concentration on small neighborhoods suggests that short-term TEC evolution is mainly driven by localized spatial propagation. The variation across locations further indicates that the strength and spatial range of local TEC dependence are spatially heterogeneous.}
Figure~\ref{fig:tec-temporal-neighborhood} illustrates the connection between the raw TEC evolution and the fitted local interaction structure for a representative period on June 14, 2019. The first six panels show consecutive TEC fields at 15-minute intervals, and the last panel shows the corresponding BIC-selected neighborhood size \(K\). The white boundaries separate regions with \(K<2\) from those with \(K\ge 2\). 
\blue{The selected neighborhood sizes are broadly related to local TEC behavior: relatively stable regions, such as parts of the polar areas and the dark low-TEC region in the left-central part of the maps, tend to have smaller neighborhoods, whereas regions with more visible temporal and spatial variation tend to have broader neighborhoods. Thus, the selected neighborhood map gives an interpretable summary of the spatially heterogeneous local dependence learned by \model{}.}

\begin{figure}[t]
\centering
\includegraphics[width=0.95\linewidth]{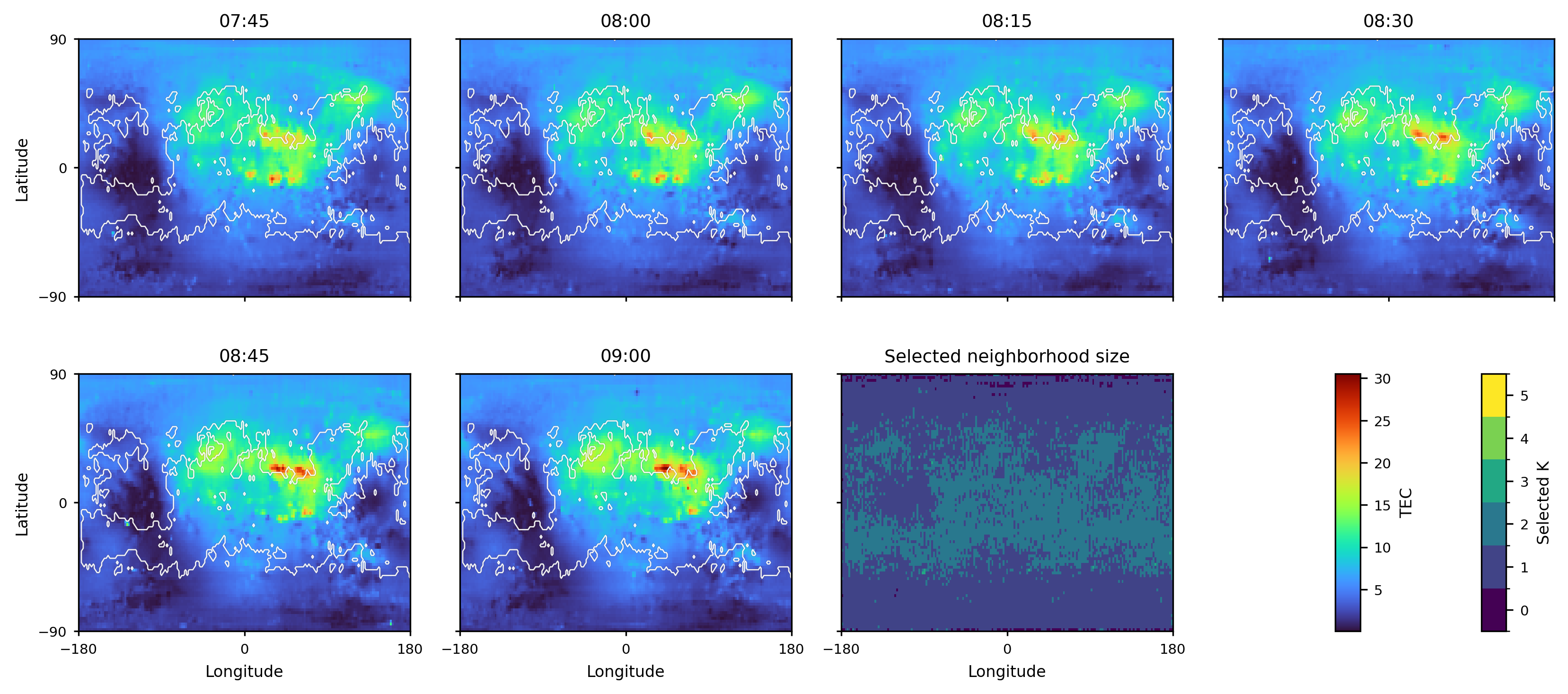}
\caption{Representative TEC fields over consecutive 15-minute time points and the corresponding BIC-selected neighborhood-size map. For visualization, the boundary overlay is slightly smoothed to highlight the large-scale spatial pattern.}
\label{fig:tec-temporal-neighborhood}
\end{figure}

For cross-period comparison, we coarsen the selected neighborhood sizes into two categories,
\(K<2\) and \(K\ge 2\), separating the most localized selections from
relatively broader local neighborhoods.
For each anchor period, we compare it with each of the other five periods by computing
(i) Cohen’s \(\kappa\) between the corresponding two-category maps and
(ii) the Pearson correlation between the corresponding raw TEC fields.
The results are shown in Figure~\ref{fig:tec-kappa-corr}.

In short, Figures~\ref{fig:tec-bic-a} and \ref{fig:tec-kappa-corr} show two points:
(i) The share of pixels with \(K<2\) versus \(K\ge 2\) is relatively stable across periods
(Figure~\ref{fig:tec-bic-a}).
(ii) There is a clear positive association between the Pearson correlation of the raw TEC
fields and Cohen’s \(\kappa\) of the neighborhood maps (Figure~\ref{fig:tec-kappa-corr}).
The positive association suggests that the selected local interaction structure is related to
the underlying ionospheric state: when the TEC fields are similar across periods, the
\emph{local} predictive dependence learned by \model{} is also similar.
\blue{Thus, the fitted \model{} model gives an interpretable summary of how the local dependence structure changes with the TEC field.}

\begin{figure}[t]
\centering
\begin{subfigure}[t]{0.58\textwidth}
	\centering
	\includegraphics[width=\linewidth]{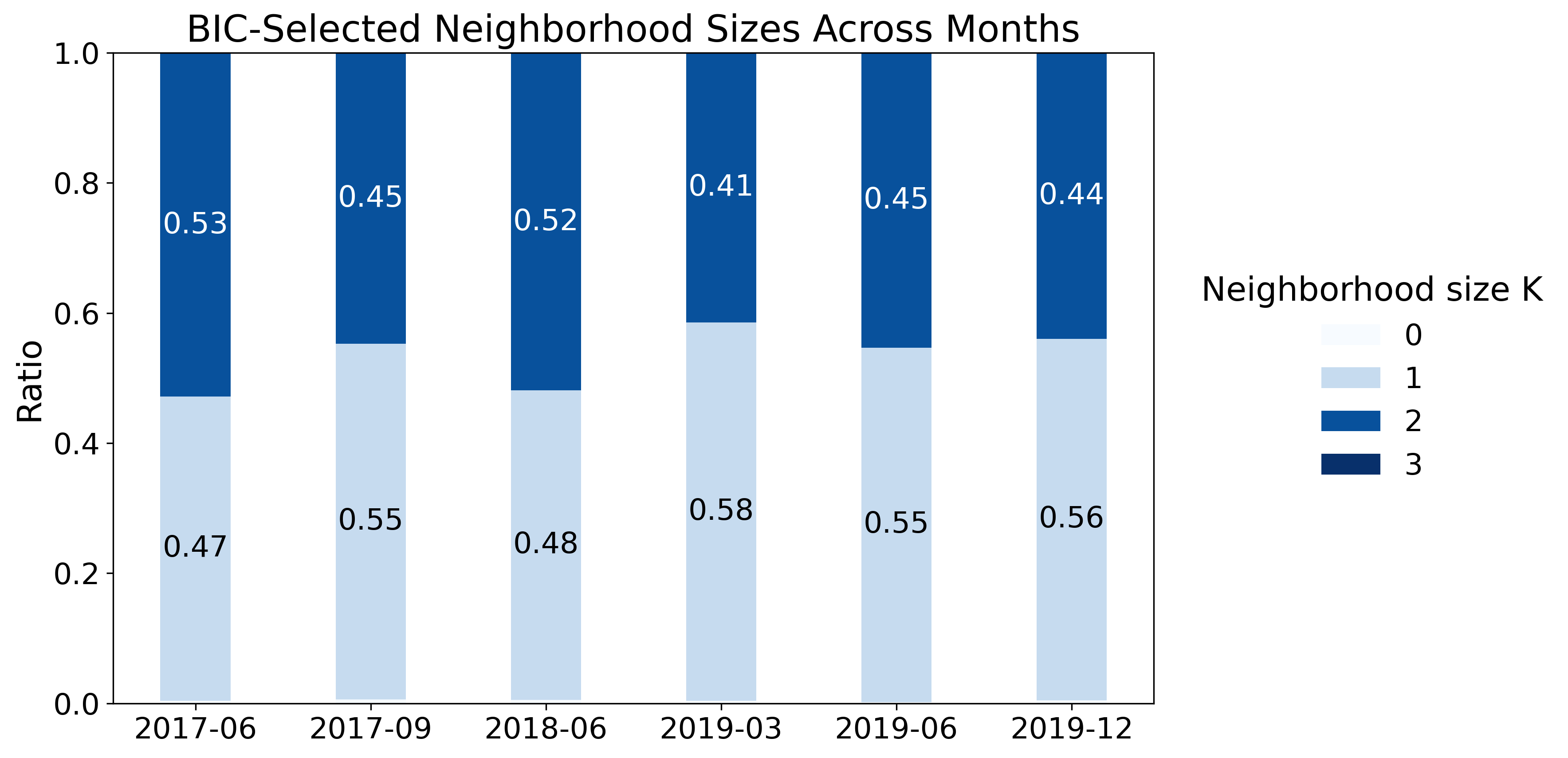}
	\caption{Distribution of BIC-selected neighborhood sizes across six time periods}
	\label{fig:tec-bic-a}
\end{subfigure}\hfill
\begin{subfigure}[t]{0.38\textwidth}
	\centering
	\includegraphics[width=\linewidth]{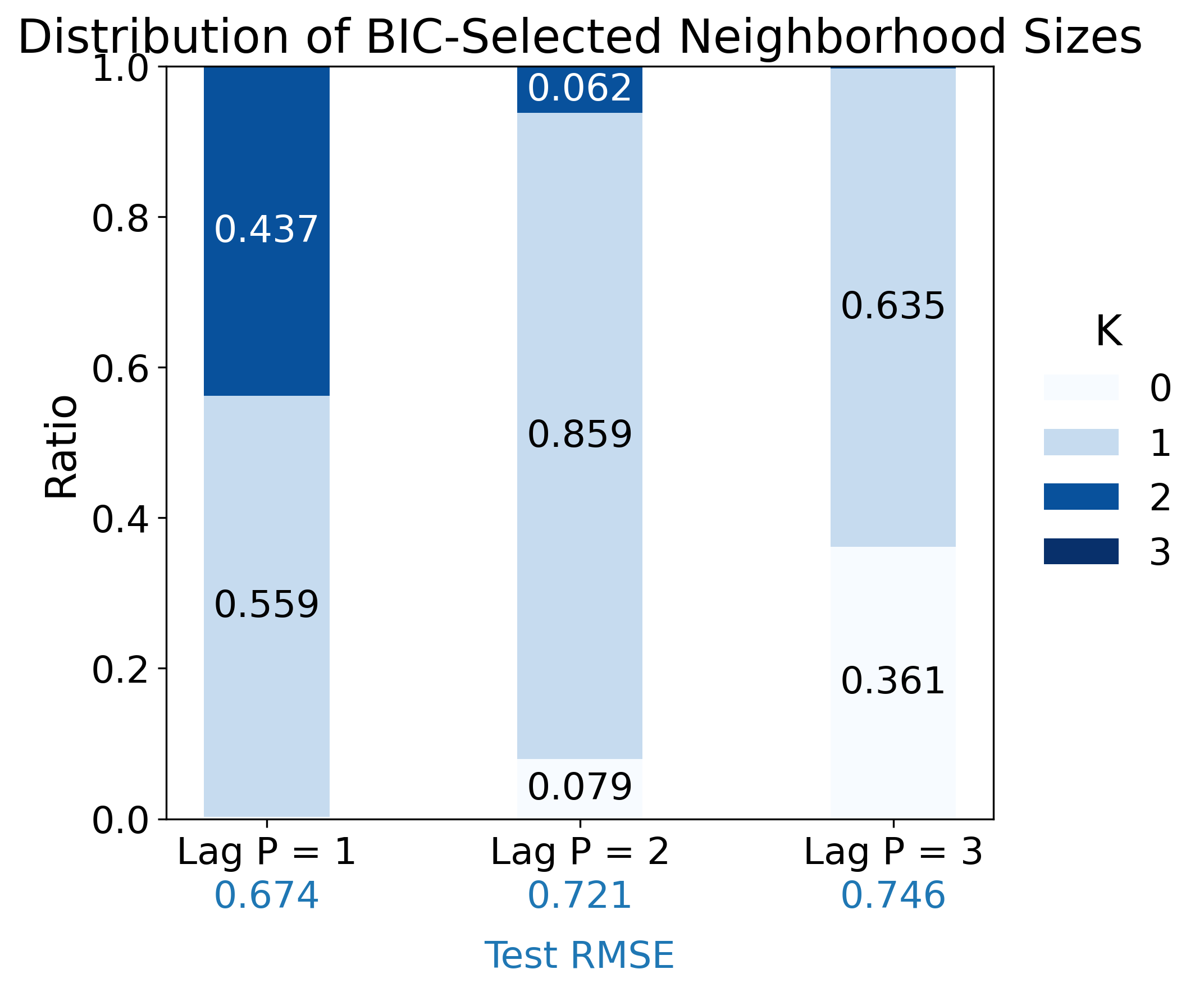}
	\caption{Distribution of BIC-selected neighborhood sizes versus lag $P$}
	\label{fig:tec-bic-b}
\end{subfigure}
\caption{Distribution of BIC-selected neighborhood sizes}
\label{fig:tec-bic-and-k-vs-p}
\end{figure}

\begin{figure}[H]
\centering
\includegraphics[width=0.9\linewidth]{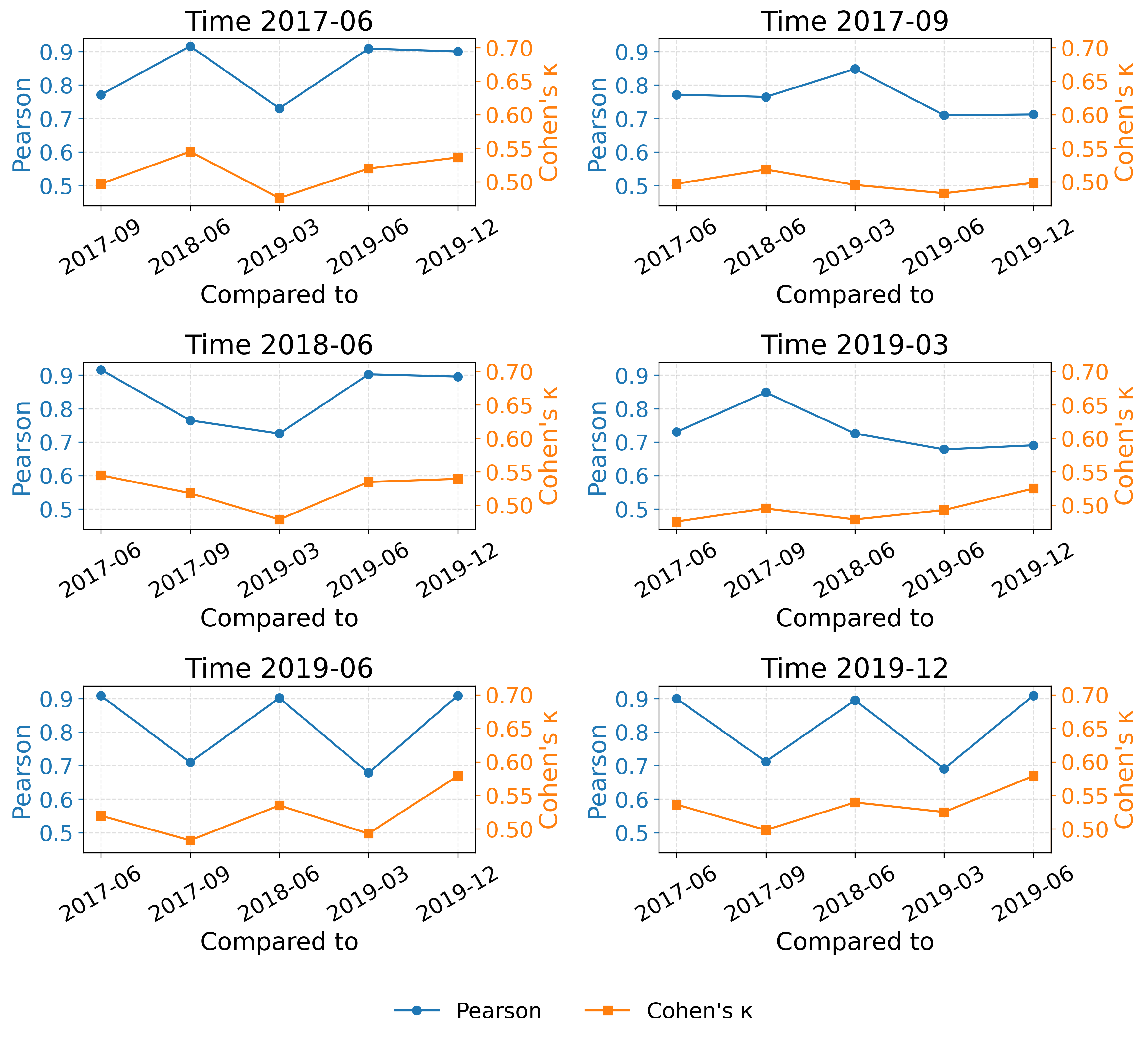}
\caption{Pairwise similarity across six periods: Pearson correlation of TEC fields and Cohen’s \(\kappa\) for the two-category neighborhood maps.}
\label{fig:tec-kappa-corr}
\end{figure}

\paragraph{Lag dependence of selected neighborhoods.}
We focus on June 2019, using \(90\%\) of the series for training and \(10\%\) for testing. We vary the lag \(P\in\{1,2,3\}\) and, for each \(P\), apply the entrywise BIC to select the neighborhood size from \(K\in\{0,1,2,3,4,5\}\). Figure~\ref{fig:tec-bic-b} reports the selection frequencies versus \(P\); the bottom panel shows the test RMSE obtained with the BIC-selected neighborhoods.

Figure~\ref{fig:tec-bic-b} shows that (i) as \(P\) increases, the selected neighborhoods tend to shrink, suggesting that additional temporal lags reduce the need for wider spatial neighborhoods; and (ii) the held-out RMSE does not improve with larger \(P\), which supports the use of \(P=1\) in the reported TEC analysis.

\paragraph{Comparisons with other methods.}
We compare \model{} and its separable variant \modelone{} with MAR \citep{chen2021autoregressive}, MAR-ST \citep{hsu2021matrix}, and \textsf{LIAR-P} across the six 10-day periods described above, using the first \(90\%\) of each series for training and the remaining \(10\%\) for testing. Test RMSE is computed from one-step-ahead full-field predictions over the test period: at each test time point \(t\), the fitted model uses the observed previous lag to form \(\widehat{\X}_t\), which is compared with \(\X_t\); the RMSE is then averaged over all test time points and spatial grid entries.
For each period, we fit all methods and report test RMSE in Figure~\ref{fig:tec-rmse-time}; runtimes for \model{}, MAR, and MAR-ST are reported in Table~\ref{table:runtime}.

The pixel-wise baseline \textsf{LIAR-P} is the fastest method, but it also has the largest RMSE. \modelone{} is closest in structure to MAR-ST; its slightly higher RMSE is expected, since \modelone{} uses a one-shot SVD projection whereas MAR-ST refines the factors iteratively, a pattern also observed by \cite{chen2021autoregressive}. The unrestricted \model{} has RMSE close to MAR and slightly below MAR-ST, while taking much less time than both. Thus, in this TEC example, imposing locality keeps most of the predictive information while reducing the computational cost.
\blue{This agrees with the neighborhood maps above: the selected local neighborhoods reflect meaningful TEC dependence and are useful for prediction, improving one-step-ahead full-field forecasts relative to pixel-wise autoregression.}

\begin{figure}[H]
\centering
\includegraphics[width=0.9\linewidth]{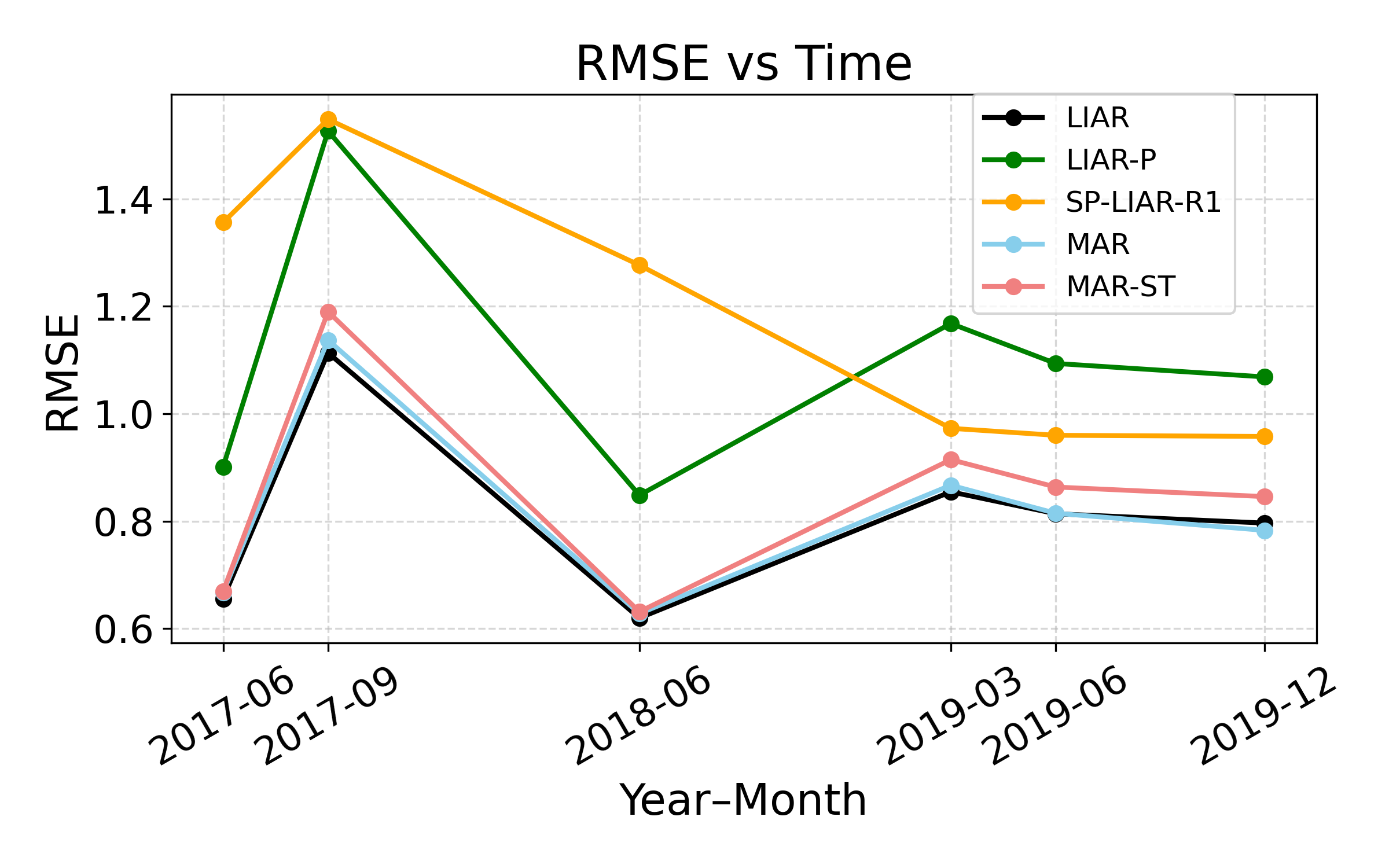}
\caption{Prediction RMSE over time on TEC data, comparing \model{}, \modelone{}, MAR, MAR-ST, and \textsf{LIAR-P}}
\label{fig:tec-rmse-time}
\end{figure}
\begin{table}[t]
\centering
\caption{Runtime (s) by method and period}
\label{table:runtime}
\setlength{\tabcolsep}{6pt}
\renewcommand{\arraystretch}{1.1}
\begin{tabular}{lrrrrrr}
	\hline
	& 2017-06 & 2017-09 & 2018-06 & 2019-03 & 2019-06 & 2019-12 \\
	\hline
	\model{}   & 11.26 &  8.41 &  8.90 &  8.45 &  4.43 &  6.84 \\
	MAR    & 467.03 & 462.89 & 451.69 & 438.63 & 225.81 & 452.26 \\
	MAR-ST & 81.30 & 63.00 & 88.03 & 81.54 & 35.18 & 69.74 \\
	\hline
\end{tabular}
\end{table}

\section{Discussion and Conclusion}
We proposed the \emph{local interaction autoregressive} (\model{}) framework, a general principle for modeling high-dimensional matrix and tensor time series via short-range spatio-temporal dependence. In the matrix setting, the proposed \modelone{} serves as a bridge between traditional MAR/MAR-ST modeling and the \model{} framework. By restricting each entry’s evolution to a data-driven neighborhood, \model{} ensures parsimony and interpretability while maintaining predictive accuracy.

\blue{A natural extension of the proposed neighborhood selector is to allow the selected neighborhoods to vary smoothly over space.
In the current implementation, the BIC-based neighborhood selection is performed
separately for each spatial location. This location-wise selection preserves local adaptivity, but it does not explicitly impose spatial coherence on the resulting neighborhood-size map. A simple practical remedy is to post-process
the selected neighborhood-size map, for example by applying a local median filter or a local majority
rule over a small spatial window. Such a step is straightforward to implement and may reduce isolated selections while retaining spatial heterogeneity.

A more integrated approach is to incorporate spatial smoothness directly into the
model-selection criterion. Let \(\mathcal E\) be the set
of adjacent spatial pairs. One may consider a spatially regularized BIC criterion
\[
\widehat{\k}
=
\arg\min_{\{\k = (k_{\i})_{\i\in\calS}: k_{\i}\leq K_0\}}
\left\{
\sum_{\i} \mathrm{BIC}_{\i}(k_{\i})
+
\lambda \sum_{(\i,\j)\in\mathcal E}|k_\i-k_\j|
\right\},
\]
where \(\lambda\ge 0\) controls the degree of spatial smoothness. The first term
retains the location-wise BIC fit, while the second term discourages abrupt changes of
neighborhood sizes between adjacent locations. This formulation can preserve
spatially varying local interactions while encouraging smoothly varying neighborhood
patterns. Its implementation, however, requires solving a discrete spatial labeling
problem and selecting the additional tuning parameter \(\lambda\). Ordered-label
optimization methods or approximate graph-based algorithms may be useful for
this purpose; we leave a detailed computational and theoretical study of this
spatially regularized selection procedure to future work.}

We developed scalable estimation procedures, including parallel least squares and projection-based methods, a BIC-type neighborhood selector, and statistical theory for kernel and auto-covariance estimation. Extensive simulations show that \model{} achieves lower prediction error and markedly better runtime over competing baselines, including MAR and MAR-ST, across diverse settings. On real-world data, \model{} captures salient local dynamics and provides practical forecasting gains.

\section{Acknowledgements}

We thank Hu Sun (University of Michigan, Ph.D. 2024) and Professor Han Xiao (Rutgers University) for discussions on the matrix autoregressive model with banded structures. YC acknowledges support from NSF AGS Award 2419187, NASA Federal Award No. 80NSSC23M0192, and No. 80NSSC23M0191.

\bibliographystyle{chicago}
\bibliography{reference}

\appendix

\section{Proofs of Main Theorems}\label{sec:main-proofs}
The auxiliary lemmas used in the proofs are collected in Section~\ref{sec:aux-lemmas}. We present
the proofs of the main theorems in the same order as they appear in the main text.

\subsection{Proof of Theorem 1}
The proof is the one-block version of the martingale central limit theorem argument
used in the proof of Theorem~2. Since Theorem~1 also follows immediately from
the joint result in Theorem~2, we present the short marginal derivation here for
completeness.

By Theorem~2,
\[
\sqrt{T}\,\vecstack\{\hat\m_{\i}-\m_{\i}:\i\in\calS\}
\Rightarrow N(0,\D^{-1}\C\D^{-1}).
\]
Taking the block corresponding to a fixed location \(\i\), the limiting covariance is
\[
\bGamma_0(\calJ_{\i};\calJ_{\i})^{-1}
\{[\bSigma]_{\i,\i}\bGamma_0(\calJ_{\i};\calJ_{\i})\}
\bGamma_0(\calJ_{\i};\calJ_{\i})^{-1}
=
\sigma_{\i}^2\bGamma_0(\calJ_{\i};\calJ_{\i})^{-1},
\]
where \([\bSigma]_{\i,\i}=\operatorname{Var}([\E_t]_{\i})=\sigma_{\i}^2\). Hence
\[
\sqrt{T}(\hat\m_{\i}-\m_{\i})
\Rightarrow
N\left(0,\sigma_{\i}^2\bGamma_0(\calJ_{\i};\calJ_{\i})^{-1}\right),
\]
which proves Theorem~1.

\subsection{Proof of Theorem 2}
\begin{proof}
For each $\i$, $\hat \m_{\i}$ admits the following closed-form solution:
\begin{align*}
	\hat \m_{\i} = (\Y_{\i}^\top\Y_{\i})^{-1}\Y_{\i}^\top\z_{\i} = \m_{\i} + (\Y_{\i}^\top\Y_{\i})^{-1}\Y_{\i}^\top\bnu_{\i}.  
\end{align*}
	Therefore, 
	\begin{align*}
		\vecstack\{\hat\m_{\i} - \m_{\i}:\i\in\calS\} = \vecstack\{(\Y_{\i}^\top\Y_{\i})^{-1}\Y_{\i}^\top\bnu_{\i}:\i\in\calS\}.
	\end{align*}
	Let $\bgamma = \vecstack\{\bgamma_{\i}:\i\in\calS\}$, then we have 
	\begin{align*}
		&T^{-1/2}\bgamma^\top\vecstack\{\Y_{\i}^\top\bnu_{\i}: \i\in\calS\}
		 =T^{-1/2}\sum_{\i} \bgamma_{\i}^\top\Y_{\i}^\top\bnu_{\i} \\
		&\qquad = \sum_{t=2}^{T} \underbrace{T^{-1/2}\sum_{\i}\bgamma_{\i}^\top\x_{t-1}^{(\i)}[\E_t]_{\i}}_{:=Z_{t,T}}.
	\end{align*}
	Let $\calA_{t-1}$ denote the $\sigma$-field generated by previous innovations. 
	Then
	\[
	\EE(Z_{t,T}\mid\calA_{t-1}) = 0,
	\]
	and
	\begin{align*}
		\EE(Z_{t,T}^2\mid \calA_{t-1})
		&= T^{-1}\sum_{\i,\j\in\calS}\bSigma_{\i,\j}\bgamma_{\i}^\top\x_{t-1}^{(\i)}
		\x_{t-1}^{(\j)\top}\bgamma_{\j}.
	\end{align*}
Denote
\begin{align*}
	V_{TT}^2
	= T^{-1}\sum_{t=2}^T \sum_{\i,\j\in\calS}
	\bSigma_{\i,\j}\,
	\bgamma_{\i}^\top\,\x_{t-1}^{(\i)} \x_{t-1}^{(\j)\top}\,\bgamma_{\j}.
\end{align*}
On the other hand, we have
\begin{align*}
	s_{TT}
	:= \EE\Big(\sum_{t=2}^T Z_{t,T}\Big)^2
	= T^{-1}(T-1)\sum_{\i,\j}
	\bSigma_{\i,\j}\,
	\bgamma_{\i}^\top\,\bGamma_0(\calJ_{\i};\calJ_{\j})\,\bgamma_{\j}.
\end{align*}
	Then from Lemma \ref{lemma:autocov} (under \textsf{Regime 1}), or from Lemma \ref{lemma:main} (under \textsf{Regime 2 or 3}), we have $V_{TT}s_{TT}^{-1}\overset{p}{\rightarrow} 1$.
	Finally, for all $\epsilon>0$, we can show using the moment condition of $[\E_t]_{\i}$, $$\lim_{T\rightarrow \infty} s_{TT}^{-2}\sum_{t=2}^T\EE \big(Z_{t,T}^2\cdot 1(|Z_{t,T}|\geq \epsilon s_{TT}\big| \calA_{t-1})\big) = 0. $$
	So we conclude from Theorem 5.3.4 in \cite{fuller2009introduction}, 
	\begin{align*}
		T^{-1/2}\vecstack\{\Y_{\i}^\top\bnu_{\i}:\i\in\calS\}\implies N(0,\C).
	\end{align*}
	From Lemma \ref{lemma:autocov} (under \textsf{Regime 1}), or from Lemma \ref{lemma:main} (under \textsf{Regime 2 or 3}), we have $\frac{1}{T}\Y_{\i}^\top\Y_{\i}\overset{p}{\rightarrow}\bGamma_0(\calJ_{\i};\calJ_{\i})$. So we conclude
	\begin{align*}
		\sqrt{T}\vecstack\{\hat\m_{\i} - \m_{\i}:\i\in\calS\}
		\Rightarrow N(0,\D^{-1}\C\D^{-1}).
	\end{align*}
	where $\D = \bdiag\{\bGamma_0(\calJ_{\i};\calJ_{\i}):\i\in\calS\}$.
\end{proof}

\subsection{Proof of Theorem 3}
The proof of this theorem relies on Lemma \ref{lemma:matrix-perturbation}. 
Since $\hat\M_{R}^{\blk}$ is the best rank $R$ approximation of $\hat\M^{\blk}$. And we have $\fro{\hat\M^{\blk} - \M^{\blk}}\rightarrow 0$ as $T\rightarrow\infty$, so the condition in Lemma \ref{lemma:matrix-perturbation} is satisfied for sufficiently large $T$. 
So we conclude
\begin{align*}
	\hat\M_{R}^{\blk} - \M^{\blk}
	&= \hat\M^{\blk} - \M^{\blk}
	- \U_{\perp}^{\blk}\U_{\perp}^{\blk\top}(\hat\M^{\blk} - \M^{\blk}) \\
	&\quad \times \V_{\perp}^{\blk}\V_{\perp}^{\blk\top} + \R,
\end{align*}
where $\fro{\R}\leq \frac{40\fro{\hat\M^{\blk} - \M^{\blk}}^2}{\lambda_{\min}(\M^{\blk})}$, and thus $\sqrt{T}\fro{\R} = o_p(1)$. Next we vectorize both sides and we obtain 
\begin{align*}
	\vec(\hat\M_{R}^{\blk} - \M^{\blk})
	&=\big(\I- \V_{\perp}^{\blk}\V_{\perp}^{\blk\top}
	\otimes\U_{\perp}^{\blk}\U_{\perp}^{\blk\top}\big) \\
	&\quad \times\vec(\hat\M^{\blk} - \M^{\blk}) + \vec(\R).
\end{align*}
Together with Theorem 2, we obtain the result.

\subsection{Proof of Theorem 4}\label{sec:proof-thm4}
\begin{proof}

The result is an immediate consequence of the first part of Lemma~\ref{lemma:main}. Indeed, Lemma~\ref{lemma:main}
implies that, under Assumption~1,
\[
\max_{\i,\j\in\calS}
\left|[\hat\bGamma_0]_{\i,\j}-[\bGamma_0]_{\i,\j}\right|
=
O_p\left\{
\left(\frac{\log(M\vee N\vee T)}{T}\right)^{1/2}
\mu_{2q}^2\ks(1-\delta)^{-4}
\right\}.
\]
Since \(\mu_{2q}\) and \(\delta\) are treated as constants, and since
\[
\linf{\hat\bGamma_0-\bGamma_0}
=
\max_{\i,\j\in\calS}
\left|[\hat\bGamma_0]_{\i,\j}-[\bGamma_0]_{\i,\j}\right|,
\]
we obtain
\[
\linf{\hat\bGamma_0-\bGamma_0}
=
O_p\left\{
\left(\frac{\log(M\vee N\vee T)}{T}\right)^{1/2}\ks
\right\}.
\]
This proves Theorem~4.
\end{proof}

\subsection{Proof of Theorem 5}
\begin{proof}
For each $\i\in\calS$, recall we have $\calJ_{\i}[k_0] = \calJ_{\i}$.
And $\{\hat k_{\i}\neq k_0\} = \{\hat k_{\i} < k_0\}\cup \{\hat k_{\i}>k_0\}.$
For $k< k_0$, we have 
\begin{align*}
	\rss_{\i}(k) &= \z_{\i}^\top (\I-\Y_{\i}[k](\Y_{\i}[k]^\top\Y_{\i}[k])^{-1}\Y_{\i}[k]^\top)\z_{\i}\\
	&= (\Y_{\i}[k_0]\m_{\i} + \bnu_{\i})^\top
	(\I-\Y_{\i}[k](\Y_{\i}[k]^\top\Y_{\i}[k])^{-1}\Y_{\i}[k]^\top)\\
	&\quad \times(\Y_{\i}[k_0]\m_{\i} + \bnu_{\i}).
\end{align*}
Since $k<k_0$, $\Y_{\i}[k]$ is a submatrix of $\Y_{\i}[k_0]$, and we can split
$\Y_{\i}[k_0]\m_{\i}$ as
\[
\Y_{\i}[k_0]\m_{\i} \;=\; \Y_{\i}[k]\b_1 \;+\; \S\b_2,
\]
where $\|\b_2\|_{\ell_2}=\|\M_{\i}(\calJ_{\i}\backslash \calJ_{\i}[k])\|_{\mathrm F}$.
Let $\calI_1 := \calJ_{\i}[k]$ and $\calI_2 := \calJ_{\i}\backslash \calJ_{\i}[k]$.
Define
\begin{align*}
	\hat\G := \frac{1}{T-1}
	\begin{bmatrix}
		\Y_{\i}[k]^\top\Y_{\i}[k] & \Y_{\i}[k]^\top\S\\
		\S^\top\Y_{\i}[k] & \S^\top\S
	\end{bmatrix}
	\;:=\;
	\begin{bmatrix}
		\hat\bGamma_0(\calI_1;\calI_1) & \hat\bGamma_0(\calI_1;\calI_2)\\
		\hat\bGamma_0(\calI_2;\calI_1) & \hat\bGamma_0(\calI_2;\calI_2)
	\end{bmatrix}.
\end{align*}
Then we denote
\begin{align*}
	\G := \EE \hat\G = \mat{\bGamma_0(\calI_1; \calI_1)& \bGamma_0(\calI_1; \calI_2)\\ \bGamma_0(\calI_2; \calI_1)& \bGamma_0(\calI_2; \calI_2)}.
\end{align*}
Since $\big(\I-\Y_{\i}[k](\Y_{\i}[k]^\top\Y_{\i}[k])^{-1}\Y_{\i}[k]^\top\big)\Y_{\i}[k_0] = 0$, we have 
\begin{align*}
	\rss_{\i}(k) = (\S\b_2+ \bnu_{\i})^\top (\I-\Y_{\i}[k](\Y_{\i}[k]^\top\Y_{\i}[k])^{-1}\Y_{\i}[k]^\top)(\S\b_2+ \bnu_{\i}).
\end{align*}
On the other hand, we have 
\begin{align*}
	\rss_{\i}(k_0) = \bnu_{\i}^\top (\I-\Y_{\i}[k_0](\Y_{\i}[k_0]^\top\Y_{\i}[k_0])^{-1}\Y_{\i}[k_0]^\top)\bnu_{\i}.
\end{align*}
Therefore,
\begin{align*}
	\rss_{\i}(k) - \rss_{\i}(k_0)
	&\geq \b_2^\top\S^\top
	(\I-\Y_{\i}[k](\Y_{\i}[k]^\top\Y_{\i}[k])^{-1}\Y_{\i}[k]^\top)\S\b_2 \\
	&\quad + 2\b_2^\top\S^\top
	(\I-\Y_{\i}[k](\Y_{\i}[k]^\top\Y_{\i}[k])^{-1}\Y_{\i}[k]^\top)\bnu_{\i}. 
\end{align*}
We denote $\hat\M = \hat\G^{-1}$, $\M = \G^{-1}$. And $\hat\M$ is partitioned with the same shape as $\hat\G$, $\hat\M = \mat{\hat\M_{1,1}& \hat\M_{1,2}\\ \hat\M_{2,1}&\hat\M_{2,2}}$. 
For the first term, we have
\begin{align*}
	\S^\top(\I-\Y_{ij}[k](\Y_{ij}[k]^\top\Y_{ij}[k])^{-1}
	\Y_{ij}[k]^\top)\S
	= (T-1)(\hat\M_{2,2})^{-1}. 
\end{align*}
And 
\begin{align}\label{hatM-M}
	\op{\hat\M - \M} = \op{\hat\G^{-1} - \G^{-1}} \leq \op{\hat\G^{-1}}\cdot\op{\hat\G - \G}\cdot\op{\G^{-1}}. 
\end{align}
Notice from Lemma \ref{lemma:main}, with probability tending to 1, 
\begin{align*}
	\op{\hat\G-\G}
	&\leq |\calJ_{\i}|\cdot\linf{\hat\G -\G} \\
	&= O_p\big((T^{-1}\log(M\vee N\vee T))^{1/2}
	|\calJ_{\i}|\ks\mu_{2q}^2(1-\delta)^{-4}\big).
\end{align*}
Since $\G$ is a submatrix of $\bGamma_0$, we have $\lambda_{\min}(\G)\geq \lambda_{\min}(\bGamma_0)\geq \kappa_1$. 
Also, as long as $(T^{-1}\log(M\vee N\vee T))^{1/2}\mu_{2q}^2\knb\ks(1-\delta)^{-4}\lesssim \kappa_1$, we have $\lambda_{\min}(\hat\G)\geq \lambda_{\min}(\G) - \frac{\kappa_1}{2}\geq \frac{\kappa_1}{2}$. 
Therefore $\op{\hat\M - \M}\leq c_0\kappa_1^{-1}$ for some $c_0\in(0,1)$. So
\begin{align*}
	&\lambda_{\min}\{\S^\top(\I-\Y_{\i}[k](\Y_{\i}[k]^\top\Y_{\i}[k])^{-1}
	\Y_{\i}[k]^\top)\S\} \\
	&\qquad = T\lambda_{\min}\big((\hat\M_{2,2})^{-1}\big)
	= T\lambda_{\max}^{-1}(\hat\M_{2,2}). 
\end{align*}
Since $\hat\M_{2,2}$ is a submatrix of $\hat\M$, we have $\lambda_{\max}(\hat\M_{2,2})\leq \lambda_{\max}(\hat\M)\leq \lambda_{\max}(\M)+\op{\hat\M - \M}\leq \kappa_1^{-1}+c_0\kappa_1^{-1}$. 
And thus
\begin{align*}
	\lambda_{\min}(\S^\top(\I-\Y_{\i}[k](\Y_{\i}[k]^\top\Y_{\i}[k])^{-1}\Y_{\i}[k]^\top)\S)
	\geq (1+c_0)^{-1}T\kappa_1,
\end{align*} 
which also implies
\begin{align*}
	\b_2^\top\S^\top(\I-\Y_{\i}[k](\Y_{\i}[k]^\top\Y_{\i}[k])^{-1}\Y_{\i}[k]^\top)\S\b_2\geq (1+c_0)^{-1}T\kappa_1\ltwo{\b_2}^2. 
\end{align*}
On the other hand, we have from Cauchy-Schwarz inequality and second part in Lemma \ref{lemma:main}, 
\begin{align*}
	&\quad2\b_2^\top\S^\top(\I-\Y_{ij;k}(\Y_{ij;k}^\top\Y_{ij;k})^{-1}\Y_{ij;k}^\top)\bepsilon_{ij} \\
	&\leq \frac{1}{2}(1+c_0)^{-1}T\kappa_1\ltwo{\b_2}^2+ C_1\kappa_1^{-1}\mu_{2q}^4(1-\delta)^{-4}\ks^2\log(M\vee N\vee T). 
\end{align*}
From Lemma \ref{lemma:rss}, we have $\rss_{\i}(k_0)\lesssim T\mu_{2q}^2$ with probability tending to 1. 
Therefore, for sufficiently large $T$, we have 
\begin{align*}
	\rss_{\i}(k) - \rss_{\i}(k_0)
	&\geq  \frac{1}{2}(1+c_0)^{-1}T\kappa_1\ltwo{\b_2}^2 \\
	&\quad - C_1\kappa_1^{-1}\mu_{2q}^4(1-\delta)^{-4}
	\ks^2\log(M\vee N\vee T)\\
	&\geq c_1\rss_{\i}(k_0)\frac{\kappa_1}{\mu_{2q}^2}\ltwo{\b_2}^2
\end{align*}
for some $c_1\in(0,1)$. And using $\log(1+x)\geq \frac{1}{2}x$ for $x\in(0,2)$, we have
\begin{align*}
	\log \rss_{\i}(k) -\log \rss_{\i}(k_0) \geq \min\{c_2\frac{\kappa_1}{\mu_{2q}^2}\ltwo{\b_2}^2, \log 3\}. 
\end{align*}
Therefore for sufficiently large $T$, we have 
\begin{align*}
	\bic_{\i}(k) - \bic_{\i}(k_0)
	&\geq \min\{c_2\frac{\kappa_1}{\mu_{2q}^2}\ltwo{\b_2}^2, \log 3\} \\
	&\quad - D_0\frac{|\calJ_{\i}|}{T}\log(M\vee N\vee T) > 0,
\end{align*}
under the margin condition that $$\ltwo{\b_2}= \fro{\M_{\i}(\calJ_{\i}\backslash \calJ_{\i}[k])}\geq\fro{\M_{\i}(\calJ_{\i}\backslash \calJ_{\i}[k_0-1])}.$$

Now we show, for all $\i\in\calS$ and $k>k_0$, $\PP(\bic_{\i}(k)<\bic_{\i}(k_0))\rightarrow 0$. We have
\begin{align*}
	\rss_{\i}(k) = \inf_{\bgamma}\ltwo{\z_{\i} - \Y_{\i}[k]\bgamma}^2.
\end{align*}
We can write $\Y_{\i}[k]\bgamma = \Y_{\i}[k_0]\bgamma_1 + \S\bgamma_2$. Denote
$$
\tilde \S = \big(I - \Y_{\i}[k_0]\big(\Y_{\i}[k_0]^\top\Y_{\i}[k_0]\big)^{-1}\Y_{\i}[k_0]^\top\big)\S.
$$
Then
\begin{align*}
	\rss_{\i}(k) &= \inf_{\bgamma}\ltwo{\z_{\i} - \Y_{\i}[k]\bgamma}^2
	=  \inf_{\bgamma_1,\bgamma_2}\ltwo{\z_{\i} - \Y_{\i}[k_0]\bgamma_1- \S\bgamma_2}^2\\
	&=\inf_{\bgamma_1,\bgamma_2}\ltwo{\z_{\i} - \Y_{\i}[k_0]\Big(\bgamma_1 + \big(\Y_{\i}[k_0]^\top\Y_{\i}[k_0]\big)^{-1}\Y_{\i}[k_0]^\top\bgamma_2\Big)- \tilde\S\bgamma_2}^2\\
	&= \inf_{\bgamma_1,\bgamma_2}\ltwo{\z_{\i} - \Y_{\i}[k_0]\bgamma_1- \tilde\S\bgamma_2}^2.
\end{align*}
An explicit minimizer is
$$
\bgamma_1 = \big(\Y_{\i}[k_0]^\top\Y_{\i}[k_0]\big)^{-1}\Y_{\i}[k_0]^\top\z_{\i},
\qquad
\bgamma_2 = \big(\tilde\S^\top\tilde\S\big)^{-1}\tilde\S^\top\z_{\i}.
$$
Therefore,
\begin{align*}
	\rss_{\i}(k) = \rss_{\i}(k_0) - \ltwo{\tilde\S\big(\tilde\S^\top\tilde\S\big)^{-1}\tilde\S^\top\bnu_{\i}}^2.
\end{align*}
Next we bound $\ltwo{\tilde\S(\tilde\S^\top\tilde\S)^{-1}\tilde\S^\top\bnu_{\i}}^2$:
\begin{align*}
	\ltwo{\tilde\S(\tilde\S^\top\tilde\S)^{-1}\tilde\S^\top\bnu_{\i}}^2
	&= \bnu_{\i}^\top\tilde\S(\tilde\S^\top\tilde\S)^{-1}\tilde\S^\top\bnu_{\i}
	\leq \lambda_{\max}\big((\tilde\S^\top\tilde\S)^{-1}\big)\ltwo{\tilde\S^\top\bnu_{\i}}^2.
\end{align*}
Since $\tilde\S^\top\bnu_{\i}\in\RR^{|\calJ_{\i}[k]| - |\calJ_{\i}|}$, we have
$$
\ltwo{\tilde\S^\top\bnu_{\i}}^2\leq \big(|\calJ_{\i}[k]| - |\calJ_{\i}|\big)\,\linf{\tilde\S^\top\bnu_{\i}}^2.
$$
Recall $\tilde\S^\top\bnu_{\i} = \S^\top\!\big(\I - \Y_{\i}[k_0](\Y_{\i}[k_0]^\top\Y_{\i}[k_0])^{-1}\Y_{\i}[k_0]^\top\big)\bnu_{\i}$. Let an arbitrary row of $\S^\top$ be $\x^\top$. Then
\begin{align}\label{x-perp-e}
	&\quad\x^\top\!\big(\I - \Y_{\i}[k_0](\Y_{\i}[k_0]^\top\Y_{\i}[k_0])^{-1}\Y_{\i}[k_0]^\top\big)\bnu_{\i}\notag\\
	&= \x^\top\bnu_{\i} - \x^\top\Y_{\i}[k_0](\Y_{\i}[k_0]^\top\Y_{\i}[k_0])^{-1}\Y_{\i}[k_0]^\top\bnu_{\i}.
\end{align}
From Lemma \ref{lemma:main}, $\x^\top\bnu_{\i} = O_p\big((T\log(M\vee N\vee T))^{1/2}\mu_{2q}^2\ks^{1/2}(1-\delta)^{-3/2}\big)$. Moreover,
\begin{align*}
	&\quad\x^\top\Y_{\i}[k_0](\Y_{\i}[k_0]^\top\Y_{\i}[k_0])^{-1}\Y_{\i}[k_0]^\top\bnu_{\i}\\
	&= \x^\top\Y_{\i}[k_0]\Big((\Y_{\i}[k_0]^\top\Y_{\i}[k_0])^{-1} - (\EE\,\Y_{\i}[k_0]^\top\Y_{\i}[k_0])^{-1} \Big)\Y_{\i}[k_0]^\top\bnu_{\i}\\
	&\qquad + \x^\top\Y_{\i}[k_0](\EE\,\Y_{\i}[k_0]^\top\Y_{\i}[k_0])^{-1}\Y_{\i}[k_0]^\top\bnu_{\i}.
\end{align*}
Since $\tfrac{1}{T}\EE\,\Y_{\i}[k_0]^\top\Y_{\i}[k_0]$ is a submatrix of $\bGamma_0$, we have
\begin{align*}
	\lambda_{\max}\big((\EE\,\Y_{\i}[k_0]^\top\Y_{\i}[k_0])^{-1}\big)
	= \lambda_{\min}^{-1}(\EE\,\Y_{\i}[k_0]^\top\Y_{\i}[k_0])
	\leq T^{-1}\kappa_1^{-1}. 
\end{align*}
Therefore, from Lemma \ref{lemma:main} and $\max_{\i}[\bGamma_0]_{\i;\i}\leq \|[\E_{t}]_{\i}\|_{2q}^2\leq \mu_{2q}^2$, we have 
\begin{align*}
	&\quad|\x^\top\Y_{\i}[k_0](\EE\,\Y_{\i}[k_0]^\top\Y_{\i}[k_0])^{-1} \Y_{\i}[k_0]^\top\bnu_{\i}|\\
	&\lesssim (T\log(M\vee N\vee T))^{1/2}\mu_{2q}^2\,|\calJ_{\i}|\,\ks^{1/2}(1-\delta)^{-2}\,\kappa_1^{-1}\mu_{2q}^2.
\end{align*}
Next we bound $\op{(\Y_{\i}[k_0]^\top\Y_{\i}[k_0])^{-1} - (\EE\,\Y_{\i}[k_0]^\top\Y_{\i}[k_0])^{-1}}$ similarly as in \eqref{hatM-M}:
\begin{align*}
	&\op{(\Y_{\i}[k_0]^\top\Y_{\i}[k_0])^{-1} - (\EE\,\Y_{\i}[k_0]^\top\Y_{\i}[k_0])^{-1}} \\
	&\hspace{3cm}\lesssim T^{-1}(T^{-1}\log(M\vee N\vee T))^{1/2}\mu_{2q}^2\,|\calJ_{\i}|\,\ks(1-\delta)^{-4}\kappa_1^{-2}. 
\end{align*}
Thus
\begin{align*}
	&\quad \x^\top\Y_{\i}[k_0]\Big((\Y_{\i}[k_0]^\top\Y_{\i}[k_0])^{-1} - (\EE\,\Y_{\i}[k_0]^\top\Y_{\i}[k_0])^{-1} \Big) \Y_{\i}[k_0]^\top\bnu_{\i} \\
	&\lesssim T^{-1}(T^{-1}\log(M\vee N\vee T))^{1/2}\mu_{2q}^2\,|\calJ_{\i}|\,\ks(1-\delta)^{-3}\kappa_1^{-2} \cdot |\calJ_{\i}|\cdot T\mu_{2q}^2\\
	&\hspace{5cm}\cdot (T\log(M\vee N\vee T))^{1/2}\mu_{2q}^2\,\ks^{1/2}(1-\delta)^{-2}\\
	&= \ks^{3/2}\,|\calJ_{\i}|^2\,\log(M\vee N\vee T)\,\mu_{2q}^6\,\kappa_1^{-2}(1-\delta)^{-6}. 
\end{align*}
Returning to \eqref{x-perp-e}, we conclude
\begin{align*}
	\linf{\tilde\S^\top\bnu_{\i}}
	= O_p\big((T\log(M\vee N\vee T))^{1/2}\mu_{2q}^4\,
	|\calJ_{\i}|\,\ks^{1/2}(1-\delta)^{-2}\kappa_1^{-1}\big).
\end{align*}
Hence
\begin{align*}
	\rss_{\i}(k_0) - \rss_{\i}(k)
	&\lesssim  \big(|\calJ_{\i}[k]| - |\calJ_{\i}|\big)\,\log(M\vee N\vee T)\,
	\mu_{2q}^8\,|\calJ_{\i}|^2\\
	&\quad \times \ks(1-\delta)^{-4}\,\kappa_1^{-3}. 
\end{align*}
From Lemma \ref{lemma:rss}, $\rss_{\i}(k)\gtrsim T\mu_{2q}^2$. Therefore
\begin{align*}
	\frac{\rss_{\i}(k_0) - \rss_{\i}(k)}{\rss_{\i}(k)}
	&\lesssim\frac{|\calJ_{\i}[k]| - |\calJ_{\i}|}{T}\,\log(M\vee N\vee T)\,
	\mu_{2q}^6\,|\calJ_{\i}|^2\\
	&\quad \times \ks(1-\delta)^{-4}\,\kappa_1^{-3}.
\end{align*}
Using $\log(1+x)\leq x$ for $x>0$, we have 
\begin{align*}
	&\quad\log\rss_{\i}(k_0) - \log\rss_{\i}(k)\\
	&\lesssim\frac{|\calJ_{\i}[k]| - |\calJ_{\i}|}{T}\,\log(M\vee N\vee T)\,
	\mu_{2q}^6\,|\calJ_{\i}|^2\\
	&\quad \times\ks(1-\delta)^{-4}\,\kappa_1^{-3}.
\end{align*}
Therefore 
\begin{align*}
	\bic_{\i}(k) - \bic_{\i}(k_0) &\geq D_0\,\frac{|\calJ_{\i}[k]| - |\calJ_{\i}|}{T}\,\log(M\vee N\vee T) \\
	&\quad+ \log\rss_{\i}(k) - \log\rss_{\i}(k_0)\\
	&> 0,
\end{align*}
as long as we choose $D_0\gtrsim \mu_{2q}^6\,|\calJ_{\i}|^2\,\ks\,(1-\delta)^{-4}\,\kappa_1^{-3}$. 

\end{proof}

\section{Auxiliary Lemmas}\label{sec:aux-lemmas}
The auxiliary results in this section are used to support the proofs in Section~\ref{sec:main-proofs}.
Lemma~\ref{lemma:concentration} provides the functional-dependence concentration inequality that is used
as the main probabilistic tool. Lemma~\ref{lemma:main} applies this concentration result to obtain
entrywise bounds for the sample auto-covariance matrix and the noise-design cross
terms; these bounds are used in the proofs of Theorems~2, 4, and 5. Lemma~\ref{lemma:rss}
controls the residual sum of squares over candidate neighborhoods and is used in
the proof of Theorem~5. Lemma~\ref{lemma:product-sparse-matrix} records a locality-induced sparsity property of
powers of the autoregressive coefficient matrix, which is used in the proof of
Lemma~\ref{lemma:main}. Lemma~\ref{lemma:matrix-perturbation} is the perturbation result for the best rank-\(R\) projection
and is used in the proof of Theorem~3. Finally, Lemma~\ref{lemma:autocov} provides the
fixed-dimensional law of large numbers used under Regime~1 in the proof of
Theorem~2.

We start with some notations. Let $\bGamma\in\RR^{MN\times MN}$. For multi-indices
$\i=(i_1,i_2),\j=(j_1,j_2)\in\calS = [M]\times[N]$, define
\begin{align}\label{def:multi-idx}
    [\bGamma]_{\i,\j} := [\bGamma]_{(i_2-1)M+i_1, (j_2-1)M+j_1}, 
\end{align}
where the last equality uses column-major linear indexing.
The sub-matrix $\bGamma(\calJ_1; \calJ_2)$ is then formed by extracting the elements of $\bGamma$ corresponding to the multi-indices in $\calJ_1$ and $\calJ_2$, arranged using column-major ordering.

Following \cite{wu2005nonlinear}, we introduce the following functional dependent measure.
Let $z_i$ be a stationary process of the form $z_i = g(\calF_i)$, where $g$ is a measurable function and $\calF_i = (\cdots, e_{-1}, e_0,e_1,\cdots)$ with independent and identically distributed random variables $\{e_i\}$. 
The functional dependent measure is defined as 
\begin{align*}
	\theta_{i,q} = \|z_i - z_i^*\|_q =  \|g(\calF_i) - g(\calF_i^*)\|_q,
\end{align*}
where $z_i^* = g(\calF_i^*)$ is the coupled process of $z_i$, $\calF_i^* = (\cdots, e_{-1}, e_0^*,e_1,\cdots)$ with $\{e_0,e_0^*\}$ being independent and identically distributed. Here $\|\cdot\|_q:=(\EE|\cdot|)^{1/q}$ for some $q\geq1$. Then we have the following from Theorem 2 in \cite{liu2013probability} that plays the crucial role in our proof. 

\begin{lemma}\label{lemma:concentration}
	Let $S_n = \sum_{i=1}^n z_n$ and $\Theta_{m,q} = \sum_{i=m}^{\infty}\theta_{i,q}$. Assume for each $m$, $\Theta_{m,q} = O(m^{-\alpha})$ with $\alpha > \frac{1}{2}-\frac{1}{q}$ and $q>2$. Then we have for all $x>0$, 
	\begin{align*}
		\PP(|S_n|\geq x) \leq C_1\frac{\Theta_{0,q}^qn}{x^q} + C_3\exp(-C_2\frac{x^2}{\Theta_{0,q}^2n}),
	\end{align*}
	for $C_1,C_2,C_3>0$ depending only on $q$. 
\end{lemma}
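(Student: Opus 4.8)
The statement is a Nagaev--Fuk type tail bound --- a Markov-type polynomial tail plus a sub-Gaussian tail --- for partial sums of a stationary causal (Bernoulli-shift) sequence whose functional dependence measure decays polynomially. The plan is to reproduce the truncation-plus-blocking argument of \citet{wu2005nonlinear, liu2013probability}, which reduces the problem to classical inequalities for independent (or $m$-dependent) summands; I sketch it here for completeness. Throughout one assumes, as the statement tacitly does, that $\EE z_i = 0$, so that the martingale decomposition $z_i=\sum_{j\ge0}\calP_{i-j}z_i$ and the bound $\|\calP_{i-j}z_i\|_q\le\theta_{j,q}$ give in particular $\|z_i\|_q\le\Theta_{0,q}$.

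First I would truncate at a level $L$ of order $x$ (in a rigorous treatment $L$ is taken slightly below $x$, which only affects constants): write $z_i = z_i' + z_i''$ with $z_i' = z_i\mathbf 1\{|z_i|\le L\} - \EE[z_i\mathbf 1\{|z_i|\le L\}]$ and $z_i'' = z_i - z_i'$. For the large-jump part, $\|z_i''\|_1 \le 2\|z_i\|_q^q/L^{q-1}\le 2\Theta_{0,q}^q/L^{q-1}$, so Markov's inequality yields $\PP(|\sum_{i\le n} z_i''|\ge x/2)\le (4n/x)\,\Theta_{0,q}^q/L^{q-1}$, which with $L\asymp x$ is of order $C\,n\Theta_{0,q}^q/x^q$ --- the first term in the bound.

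For the bounded part $\sum_{i\le n} z_i'$ --- a causal sequence bounded by $2L$ whose functional dependence measure is still $\le 2\theta_{i,q}$ because truncation is $1$-Lipschitz --- I would invoke a Bernstein-type inequality for bounded causal sequences. This is where the decay hypothesis enters: $\mathrm{Var}\big(\sum_{i\le n} z_i'\big)\le n\Theta_{0,2}^2\le n\Theta_{0,q}^2$ by the projection bound $\|\sum_{i\le n}z_i'\|_2\le\sqrt n\sum_j\|\calP_0 z_j'\|_2$, and the $m$-dependent approximation error is $O(n\,m^{-\alpha})$, so that the condition $\alpha>\tfrac12-\tfrac1q$ permits a choice of block length $m$ for which the approximation error is absorbed while the effective ``number of blocks'' stays small. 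Proving that Bernstein inequality is itself the substantive step: set $\tilde z_i=\EE[z_i'\mid e_{i-m},\dots,e_i]$, which is $m$-dependent with $\|z_i'-\tilde z_i\|_1=O(m^{-\alpha})$; split $\sum\tilde z_i$ into $m+1$ arithmetic progressions of step $m+1$, on each of which the summands are i.i.d.\ and bounded; apply the classical Bernstein inequality on each progression and union-bound; and control the residual $\sum(z_i'-\tilde z_i)$ by a first-moment bound. Collecting the pieces produces $C_3\exp(-C_2 x^2/(n\Theta_{0,q}^2))$ in the relevant range of $x$, with all remaining secondary terms dominated by the polynomial term from the previous paragraph.

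The hard part is precisely this last ingredient --- the Bernstein/Nagaev inequality for the bounded causal sequence --- together with the joint optimization of the truncation level $L$ and the block length $m$ as functions of $n$ and $x$: one must show that with the right choices the $m$-dependence approximation error, the union-bound factor $m+1$, and all the leftover polynomial pieces collapse into exactly the two advertised terms. The exponent condition $\alpha>\tfrac12-\tfrac1q$ is the threshold that makes this possible --- it is where the $\sqrt n$ (CLT) scaling of the variance takes over --- and for $\alpha$ below it the sub-Gaussian term would have to be weakened to a polynomial tail.
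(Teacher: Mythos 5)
The paper does not actually prove this lemma: it is imported verbatim as Theorem~2 of \citet{liu2013probability}, and the ``proof'' in the text is that citation alone. What you have written is therefore not an alternative to the paper's argument but a reconstruction of the argument behind the citation, and your reconstruction does match the strategy used in that reference: truncate at a level $L\asymp x$, dispose of the unbounded part by a first-moment/Markov bound using $\|z_i\|_q\le\Theta_{0,q}$ (which gives the polynomial term $n\Theta_{0,q}^q/x^q$), and handle the bounded part by an $m$-dependent approximation $\tilde z_i=\EE[z_i'\mid e_{i-m},\dots,e_i]$ followed by a Bernstein inequality on the $m+1$ independent subsequences. Your identification of where $\alpha>\tfrac12-\tfrac1q$ enters --- balancing the $O(nm^{-\alpha})$ approximation error against the union-bound cost of the blocking --- is also the right diagnosis.

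Two caveats. First, your sketch defers exactly the step that constitutes the substance of the cited theorem: the Bernstein/Nagaev inequality for the truncated causal sequence together with the joint optimization of $L$ and $m$ so that all secondary terms collapse into the two advertised ones. Naming that step as ``the hard part'' is honest but means the proposal is an outline, not a proof; as it stands it establishes nothing the citation does not already provide. Second, one technical claim is inaccurate as written: the map $u\mapsto u\,\mathbf 1\{|u|\le L\}$ is not $1$-Lipschitz (it jumps at $|u|=L$), so the assertion that the truncated sequence inherits the dependence measure up to a factor of $2$ needs either the clipped version $\max(-L,\min(L,u))$ or the more careful bookkeeping that \citet{liu2013probability} actually carry out. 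Neither point is fatal to the architecture, but both would have to be resolved before this could replace the citation.
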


Recall $[\hat\bGamma_0]_{\i,\j} = \frac{1}{T}\sum_{t= 1}^T [\X_t]_{\i}[\X_t]_{\j}$ and thus $\bGamma_0 = \EE\hat\bGamma_0$. 
We denote $$\x_{\i} = ([\X_1]_{\i},\cdots, [\X_{T-1}]_{\i})^\top,$$ and recall $\bnu_{\i} = ([\E_2]_{\i},\cdots, [\E_T]_{\i})^\top$.  
We first establish the following concentration inequality.

\begin{lemma}\label{lemma:main}
	Under Assumption 1, there exists absolute constant $C>0$, such that 
	\begin{enumerate}
		\item For $\i,\j\in\calS$, we have for $x>0$, 
\begin{align*}
		\PP\bigg(\big|[\hat\bGamma_0]_{\i,\j} - [\bGamma_0]_{\i,\j}\big|\geq x\bigg)
		&\lesssim CT\bigg(\frac{\mu_{2q}^{2}\ks(1-\delta)^{-4}}{Tx}\bigg)^q \\
		&\quad + C\exp\bigg(-\frac{Tx^2}{(\mu_{2q}^2\ks(1-\delta)^{-4})^2}\bigg). 
\end{align*}
		And this leads to 
		\begin{align*}
			\max_{\i,\j\in\calS} \big|[\hat\bGamma_0]_{\i,\j} - [\bGamma_0]_{\i,\j}\big| 
			= O_{p}\big((T^{-1}\log(M\vee N\vee T))^{1/2}\mu_{2q}^2\ks(1-\delta)^{-4}\big).
		\end{align*}
		\item For $\i,\j\in\calS$, we have for $x>0$, 
\begin{align*}
		\PP(\big|\bnu_{\i}^\top\x_{\j} \big|\geq x)
		&\leq CT\left(\frac{\mu_{2q}^2\ks^{1/2}(1-\delta)^{-2}}{x}\right)^q \\
		&\quad + C\exp\left(-\frac{x^2}{T\cdot (\mu_{2q}^2\ks^{1/2}(1-\delta)^{-2})^2}\right).
\end{align*}
		And this leads to 
\begin{align*}
		\max_{\i,\j\in\calS} \big|\bnu_{\i}^\top\x_{\j} \big|
		=O_{p}\big((T\log(M\vee N\vee T))^{1/2}
		\mu_{2q}^2\ks^{1/2}(1-\delta)^{-2}\big).
\end{align*}
	\end{enumerate}
\end{lemma}
\begin{proof}
	We shall use Lemma \ref{lemma:concentration} to prove this. Denote $\mu_q = \max_{\i}\|[\E_0]_{\i}\|_q$. 	
	Using innovation representation, we have
	\begin{align*}
		\x_t = \sum_{l=0}^{\infty} \M^l\bepsilon_{t-l}.
	\end{align*}
	As a result, for each $\i = (i_1,i_2)\in\calS$, 
	\begin{align*}
		&\quad (\e_{i_2}\otimes\e_{i_1})^\top\x_t = \sum_{l=0}^{\infty} (\e_{i_2}\otimes\e_{i_1})^\top\M^l\bepsilon_{t-l} \\
        &=  \sum_{l=0}^{\infty} \sum_{\j}(\e_{i_2}\otimes\e_{i_1})^\top\M^l(\e_{j_2}\otimes\e_{j_1})(\e_{j_2}\otimes\e_{j_1})^\top\bepsilon_{t-l}. 
	\end{align*}
Following the definition of $\ks$ and Lemma \ref{lemma:product-sparse-matrix}, we have $\lzero{(\e_{i_2}\otimes\e_{i_1})^\top\M^l}\leq \min\{\ks l^2, MN\}$.
And thus 
\begin{align}\label{EX2q}
	\max_{\i}\|(\e_{i_2}\otimes\e_{i_1})^\top\x_t\|_{2q}
	&\leq  \max_{\i}\left\|\sum_{l=0}^{\infty} \sum_{\j}
	(\e_{i_2}\otimes\e_{i_1})^\top\M^l(\e_{j_2}\otimes\e_{j_1})\right.\notag\\
	&\qquad\qquad\left.\times
	(\e_{j_2}\otimes\e_{j_1})^\top\bepsilon_{t-l}\right\|_{2q}\notag\\
	&\leq \max_{\i}\sum_{l=0}^{\infty} \sum_{\j}|[\M^l]_{\i,\j}|\cdot \|[\E_{t-l}]_{\j}\|_{2q}\notag\\
	&\leq \max_{\i}\sum_{l=0}^{\infty} \ks^{1/2} l\ltwo{(\e_{i_2}\otimes\e_{i_1})^\top\M^l}\cdot \mu_{2q}\notag\\
	&\leq \ks^{1/2}\mu_{2q}\sum_{l\geq 0} l\delta^l \leq C\ks^{1/2}\mu_{2q}(1-\delta)^{-2}\notag\\
	&< +\infty,
\end{align}
where in the second-to-last line, we use the fact $\op{\M^l}\leq \delta^l$.

On the other hand, we can bound $\|(\e_{i_2}\otimes\e_{i_1})^\top\x_t - (\e_{i_2}\otimes\e_{i_1})^\top\x_t^*\|_{2q}$ as follows using the innovation representation:
\begin{align}\label{eq:diffX}
	&\|(\e_{i_2}\otimes\e_{i_1})^\top\x_t
	 - (\e_{i_2}\otimes\e_{i_1})^\top\x_t^*\|_{2q} \notag\\
	&\quad= \left\|\sum_{\j}(\e_{i_2}\otimes\e_{i_1})^\top\M^t(\e_{j_2}\otimes\e_{j_1})
	(\e_{j_2}\otimes\e_{j_1})^\top(\bepsilon_{0} -\bepsilon_{0} ^*)\right\|_{2q}\notag\\
	&\leq C\mu_{2q} \ks^{1/2}t\delta^t. 
\end{align}
Now we can bound $\|[\X_t]_{\i}[\X_t]_{\j} - [\X_t]_{\i}^*[\X_t]_{\j}^*\|_{q}$ using \eqref{EX2q} and \eqref{eq:diffX}: 
\begin{align*}
	&\quad\|[\X_t]_{\i}[\X_t]_{\j} - [\X_t]_{\i}^*[\X_t]_{\j}^*\|_{q} \\
	&\leq \|(\e_{i_2}\otimes\e_{i_1})^\top\x_t\cdot
	(\e_{j_2}\otimes\e_{j_1})^\top(\x_t -\x_t^*)\|_{q} \\
	&\quad + \|(\e_{i_2}\otimes\e_{i_1})^\top(\x_t -\x_t^*)\cdot
	(\e_{j_2}\otimes\e_{j_1})^\top\x_t^*\|_{q}\\
	&\leq \|(\e_{i_2}\otimes\e_{i_1})^\top\x_t\|_{2q}
	\cdot\|(\e_{j_2}\otimes\e_{j_1})^\top(\x_t -\x_t^*)\|_{2q} \\
	&\quad + \|(\e_{j_2}\otimes\e_{j_1})^\top\x_t^*\|_{2q}
	\cdot\|(\e_{i_2}\otimes\e_{i_1})^\top(\x_t -\x_t^*)\|_{2q}\\
	&\leq C\mu_{2q}^2\ks(1-\delta)^{-2}t\delta^t. 
\end{align*}
Therefore
\begin{align*}
	\Theta_{m,q}
	&:= \sum_{t =m}^{\infty}\|[\X_t]_{\i}[\X_t]_{\j}
	-[\X_t]_{\i}^*[\X_t]_{\j}^*\|_{q} \\
	&\leq C\mu_{2q}^2\ks(1-\delta)^{-4}\delta^m = O(m^{-\alpha})
\end{align*}
for any $\alpha>1$. 

From Lemma \ref{lemma:concentration}, we conclude
\begin{align*}
	\PP\bigg( \big|[\hat\bGamma_0]_{\i,\j} - [\bGamma_0]_{\i,\j}\big|\geq x\bigg)
	&\lesssim T\bigg(\frac{\mu_{2q}^{2}\ks(1-\delta)^{-4}}{Tx}\bigg)^q  \\
	&\quad +\exp\bigg(-\frac{Tx^2}{(\mu_{2q}^2\ks(1-\delta)^{-4})^2}\bigg). 
\end{align*}
By setting $x = T^{-1/2}\mu_{2q}^2\ks(1-\delta)^{-4}(\log(M\vee N\vee T))^{1/2}$, we conclude
\begin{align*}
	\max_{\i,\j\in\calS} \big|[\hat\bGamma_0]_{\i,\j} - [\bGamma_0]_{\i,\j}\big|
	= O_{p}\big(T^{-1/2}\mu_{2q}^2\ks(1-\delta)^{-4}
	(\log(M\vee N\vee T))^{1/2}\big).
\end{align*}

Similarly, we can show
\begin{align*}
	\PP\big(\big|\bnu_{\i}^\top\x_{\j} \big|\geq x\big)
	&\lesssim T\left(\frac{\mu_{2q}^2\ks^{1/2}(1-\delta)^{-2}}{x}\right)^q \\
	&\quad + \exp\left(-\frac{x^2}{T\cdot (\mu_{2q}^2\ks^{1/2}(1-\delta)^{-2})^2}\right),
\end{align*}
and 
\begin{align*}
	\max_{\i,\j\in\calS} \big|\bnu_{\i}^\top\x_{\j} \big|
	= O_{p}\big((T\log(M\vee N\vee T))^{1/2}
	\mu_{2q}^2\ks^{1/2}(1-\delta)^{-2}\big).
\end{align*}

\end{proof}

\begin{lemma}\label{lemma:rss}
	Under Assumption 1, we have for each $\i$ and $k\geq k_0$, 
	\begin{align*}
		\max_{\i}\bigg|\rss_{\i}(k) - (T-1)\bSigma_{\i;\i}\bigg| = O_p\big((T\log(M\vee N\vee T))^{1/2}\mu_{2q}^2\big)
	\end{align*}
	holds as $T\rightarrow\infty$ and $\frac{\max_{\i}|\calJ_{\i}[K_0]|^2\cdot \ks^2\log(M\vee N\vee T)}{T}\leq \tilde c$ for some $\tilde c>0$ depending only on $\kappa_1,\delta, \mu_{2q}$.
\end{lemma}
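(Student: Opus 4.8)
The plan is to decompose $\rss_\i(k)$ into a sum of squared innovations plus an ``overfitting'' quadratic form and bound each uniformly over the $MN$ sites, using the concentration results already in hand. Since $k\ge k_0$ and the candidates are nested, $\calJ_\i[k]\supseteq\calJ_\i[k_0]=\supp(\M_\i)$, so the level-$k$ regression for site $\i$ is correctly specified: writing $\H_k:=\Y[k](\Y[k]^\top\Y[k])^{-1}\Y[k]^\top$ for the hat matrix, one has $\z_\i-\H_k\z_\i=(\I-\H_k)\bnu_\i$, hence
\[
\rss_\i(k)=\bnu_\i^\top(\I-\H_k)\bnu_\i=\ltwo{\bnu_\i}^2-\bnu_\i^\top\H_k\bnu_\i .
\]
Because the column spaces are nested, $0\preceq\H_k\preceq\H_{K_0}$, so
\[
\max_\i\big|\rss_\i(k)-(T-1)\bSigma_{\i;\i}\big|\le\max_\i\big|\ltwo{\bnu_\i}^2-(T-1)\bSigma_{\i;\i}\big|+\max_\i\bnu_\i^\top\H_{K_0}\bnu_\i ,
\]
which reduces the statement to two bounds and removes any union over $k$.

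For the squared-innovation term, $\ltwo{\bnu_\i}^2=\sum_{t=2}^T[\E_t]_\i^2$ is a sum of i.i.d.\ variables with mean $\bSigma_{\i;\i}=\var([\E_t]_\i)$ and $\|[\E_t]_\i^2\|_q=\|[\E_t]_\i\|_{2q}^2\le\mu_{2q}^2$; as each summand depends only on the current innovation, its functional dependence measure vanishes past lag zero, so Lemma~\ref{lemma:concentration} gives a Fuk--Nagaev tail of order $T(\mu_{2q}^2/x)^q+\exp(-x^2/(T\mu_{2q}^4))$. Taking $x\asymp(T\log(M\vee N\vee T))^{1/2}\mu_{2q}^2$ and a union bound over the $MN=O(T^\beta)$ sites, both tail contributions vanish because $\beta<(q-2)/4$, which yields $\max_\i|\ltwo{\bnu_\i}^2-(T-1)\bSigma_{\i;\i}|=O_p((T\log(M\vee N\vee T))^{1/2}\mu_{2q}^2)$. (In Regime~1 the dimensions are fixed, so this reduces to the law of large numbers and the overfitting term is $O_p(1)$.)

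For the overfitting term, write $\bnu_\i^\top\H_{K_0}\bnu_\i=\v_\i^\top(\Y[K_0]^\top\Y[K_0])^{-1}\v_\i$ with $\v_\i=\Y[K_0]^\top\bnu_\i$, whose coordinates are the cross-products $\bnu_\i^\top\x_\j$, $\j\in\calJ_\i[K_0]$. Lemma~\ref{lemma:main}(2) bounds $\max_{\i,\j}|\bnu_\i^\top\x_\j|$, so uniformly $\ltwo{\v_\i}^2\le|\calJ_\i[K_0]|\max_\j|\bnu_\i^\top\x_\j|^2=O_p\big(|\calJ_\i[K_0]|\,T\log(M\vee N\vee T)\,\mu_{2q}^4\ks(1-\delta)^{-4}\big)$. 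For the Gram matrix, $\bGamma_0(\calJ_\i[K_0];\calJ_\i[K_0])$ is a principal submatrix of $\bGamma_0$ and so has smallest eigenvalue $\ge\lambda_{\min}(\bGamma_0)\ge\kappa_1$, while by Lemma~\ref{lemma:main}(1) the operator-norm deviation of $\tfrac1T\Y[K_0]^\top\Y[K_0]$ from it is at most $|\calJ_\i[K_0]|$ times the entrywise error, i.e.\ $O_p(|\calJ_\i[K_0]|(T^{-1}\log(M\vee N\vee T))^{1/2}\mu_{2q}^2\ks(1-\delta)^{-4})$ (up to a negligible boundary correction). Under the hypothesis $\max_\i|\calJ_\i[K_0]|^2\ks^2\log(M\vee N\vee T)/T\le\tilde c$ this deviation drops below $\kappa_1/2$ with high probability once $\tilde c$ is small (depending on $\kappa_1,\delta,\mu_{2q}$), so $\lambda_{\min}(\Y[K_0]^\top\Y[K_0])\ge\tfrac12 T\kappa_1$ uniformly; dividing gives $\max_\i\bnu_\i^\top\H_{K_0}\bnu_\i=O_p\big(|\calJ_\i[K_0]|\ks\log(M\vee N\vee T)\big)$ up to $\kappa_1,\delta,\mu_{2q}$-constants, and the same scaling condition converts this to $O_p((T\log(M\vee N\vee T))^{1/2})$ via $|\calJ_\i[K_0]|\ks\log(M\vee N\vee T)\le(\tilde c\,T\log(M\vee N\vee T))^{1/2}$.

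I expect the last step to be the main obstacle. The only quantitative handle on $\tfrac1T\Y[K_0]^\top\Y[K_0]-\bGamma_0(\cdot;\cdot)$ is the entrywise bound of Lemma~\ref{lemma:main}(1), which must be inflated by the neighborhood size to pass to the operator norm and to secure a \emph{uniform} lower eigenvalue bound over all $MN$ sites; it is exactly this lossy conversion that forces the scaling condition $\max_\i|\calJ_\i[K_0]|^2\ks^2\log(M\vee N\vee T)/T\le\tilde c$, and, pleasantly, the same condition is precisely what brings $\ltwo{\v_\i}^2$ down to the target $\sqrt{T\log}$ rate.
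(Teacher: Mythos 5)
Your proposal is correct and follows essentially the same route as the paper: the decomposition $\rss_{\i}(k)=\ltwo{\bnu_{\i}}^2-\bnu_{\i}^\top\H_k\bnu_{\i}$ for $k\ge k_0$, concentration of $\ltwo{\bnu_{\i}}^2$ via Lemma~\ref{lemma:concentration}, and control of the quadratic form by combining the cross-product bound of Lemma~\ref{lemma:main}(2) with a uniform lower eigenvalue bound on the Gram matrix obtained from the entrywise bound of Lemma~\ref{lemma:main}(1) inflated by the neighborhood size, with the stated scaling condition doing exactly the work you identify. Your use of the monotonicity $\H_k\preceq\H_{K_0}$ to handle all $k\ge k_0$ simultaneously is a small, clean refinement over the paper's per-$k$ treatment but does not change the substance of the argument.
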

\begin{proof}
When $k\geq k_0$, we can decompose $\rss_{\i}(k)$ as 
\begin{align*}
	\rss_{\i}(k) = \bnu_{\i}^\top\bnu_{\i} - \bnu_{\i}^\top\Y_{\i}[k] (\Y_{\i}[k]^\top\Y_{\i}[k])^{-1}\Y_{\i}[k]^\top\bnu_{\i}.
\end{align*}
We consider the event 
\begin{align*}
	\calE_1 = \bigg\{&\max_{\i,\j\in\calS} \big|[\hat\bGamma_0]_{\i,\j} - [\bGamma_0]_{\i,\j}\big| 
	\lesssim (T^{-1}\log(M\vee N\vee T))^{1/2}\mu_{2q}^2\ks(1-\delta)^{-4},\\
	& \max_{\i,\j\in\calS} \big|\bnu_{\i}^\top\x_{\j} \big|
	\lesssim \big(T\log(M\vee N\vee T)\big)^{1/2}
	\mu_{2q}^2\ks^{1/2}(1-\delta)^{-2}\bigg\}.
\end{align*}
From Lemma \ref{lemma:main}, $\PP(\calE_1)\rightarrow 0$. We shall continue our proof under $\calE_1$. 
From Lemma \ref{lemma:concentration}, we obtain 
\begin{align}\label{eij2}
	\max_{\i} \big|\bnu_{\i}^\top \bnu_{\i} - (T-1)\sigma_{\i}^2\big| = O_{p}\big((T\log(M\vee N\vee T))^{1/2}\mu_{2q}^2\big). 
\end{align}
	Recall $\calJ_{\i}[k]$ is the index set corresponding with the columns of $\Y_{ij;k}$, then for all $i,j$, under $\calE_1$, 
\begin{align*}
		&\quad (T-1)^{-1}\lambda_{\min}(\Y_{\i}[k]^\top\Y_{\i}[k]) \\
		&\geq \lambda_{\min}\big(\bGamma_0(\calJ_{\i}[k];\calJ_{\i}[k])\big) \\
		&\quad - \op{(T-1)^{-1}\Y_{\i}[k]^\top\Y_{\i}[k]
		 - \bGamma_0(\calJ_{\i}[k];\calJ_{\i}[k])}\\
		& \geq \kappa_1 - |\calJ_{\i}[k]|(T^{-1}\log(M\vee N\vee T))^{1/2}
		\mu_{2q}^2\ks(1-\delta)^{-4}\\
		&\geq \kappa_1(1-c_1),
\end{align*}
	for some $c_1\in(0,1)$ as long as
	\[
	|\calJ_{i}[k]|(T^{-1}\log(M\vee N\vee T))^{1/2}\mu_{2q}^2\ks(1-\delta)^{-4} \leq c_1\kappa_1.
	\]
	And under $\calE_1$,
	\[
	\linf{\Y_{\i}[k_0]^\top\bnu_{\i}}
	\lesssim (T\log(M\vee N\vee T))^{1/2}\mu_{2q}^2\ks^{1/2}(1-\delta)^{-2}.
	\]
	Therefore, with
	\[
	\Pi_{\i,k_0}=\Y_{\i}[k_0](\Y_{\i}[k_0]^\top\Y_{\i}[k_0])^{-1}\Y_{\i}[k_0]^{\top},
	\]
\begin{align*}
		|\bnu_{\i}^\top\Pi_{\i,k_0}\bnu_{\i}|
		&\lesssim |\calJ_{\i}|\ks(1-\delta)^{-4}\mu_{2q}^4
		\kappa_1^{-1}\log(M\vee N\vee T).
\end{align*}
	Together with \eqref{eij2}, and
	\[
	|\calJ_{\i}|\ks(1-\delta)^{-4}\mu_{2q}^4\kappa_1^{-1}\log(M\vee N\vee T)
	\lesssim (T\log(M\vee N\vee T))^{1/2}\mu_{2q}^2,
	\]
	we finish the proof. 
\end{proof}

\begin{lemma}\label{lemma:product-sparse-matrix}
For any $p>0$, we have 
\begin{align*}
	\lzero{(\e_{i_2}\otimes \e_{i_1})^\top\M^p}\leq \min\{\ks p^2,MN\}. 
\end{align*}
\end{lemma}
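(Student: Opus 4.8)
The plan is to track how the support of a row vector spreads under repeated right-multiplication by $\M$, exploiting that each row of $\M$ is supported on a small rectangle centered at its own index. First I would record the structural consequence of the definition of $\ks$: writing $\ks=(2k_1+1)(2k_2+1)$ for the minimal $k_1,k_2$ in that definition, every $\calJ_{\i}$ is contained in the clipped rectangle $R_{\i}:=\big(\{i_1-k_1,\dots,i_1+k_1\}\times\{i_2-k_2,\dots,i_2+k_2\}\big)\cap\calS$. Since the $\i$-th row of $\M$ equals $\vec(\M_{\i})^\top$ (column-major indexing identifies $\e_{i_2}\otimes\e_{i_1}$ with the multi-index $\i$) and $\supp(\M_{\i})=\calJ_{\i}$, the $\i$-th row of $\M$ is supported on $R_{\i}$.

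Next I would prove the key support-propagation step: if $\v\in\RR^{MN}$ has $\supp(\v)$ contained in the lattice box of half-widths $(a,b)$ centered at some $\c=(c_1,c_2)$ (the indices $(u_1,u_2)$ with $|u_1-c_1|\le a$, $|u_2-c_2|\le b$), then $\v^\top\M$ is supported on the box of half-widths $(a+k_1,b+k_2)$ centered at $\c$. This holds because $\v^\top\M=\sum_{\j\in\supp(\v)}v_{\j}\,(\M)_{\j,:}$, so $\supp(\v^\top\M)\subseteq\bigcup_{\j\in\supp(\v)}\calJ_{\j}\subseteq\bigcup_{\j\in\supp(\v)}R_{\j}$, and for $\j$ in the half-width-$(a,b)$ box around $\c$ each $R_{\j}$ lies in the half-width-$(a+k_1,b+k_2)$ box around $\c$ by the triangle inequality in each coordinate.

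Then I would induct on $p$. For $p=1$, $(\e_{i_2}\otimes\e_{i_1})^\top\M$ is the $\i$-th row of $\M$, supported on $R_{\i}$, a box of half-widths $(k_1,k_2)$ around $\i$. For the inductive step write $(\e_{i_2}\otimes\e_{i_1})^\top\M^p=\big[(\e_{i_2}\otimes\e_{i_1})^\top\M^{p-1}\big]\M$; by the inductive hypothesis the left factor is supported on a box of half-widths $((p-1)k_1,(p-1)k_2)$ around $\i$, so by the propagation step $(\e_{i_2}\otimes\e_{i_1})^\top\M^p$ is supported on a box of half-widths $(pk_1,pk_2)$ around $\i$. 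Counting lattice points gives $\lzero{(\e_{i_2}\otimes\e_{i_1})^\top\M^p}\le(2pk_1+1)(2pk_2+1)$, and since $2pk_i+1\le p(2k_i+1)$ for $p\ge1$ this is $\le p^2(2k_1+1)(2k_2+1)=\ks p^2$. Finally, the vector has only $MN$ entries, so its $\ell_0$ norm is trivially at most $MN$, yielding the claimed bound $\min\{\ks p^2,MN\}$.

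There is no serious obstacle here; the argument is an elementary support-propagation induction. The only points requiring care are the column-major identification of $\e_{i_2}\otimes\e_{i_1}$ with the multi-index $\i=(i_1,i_2)$ (so that ``the $\i$-th row of $\M$'' is literally $\vec(\M_{\i})^\top$), and the bookkeeping that the boxes may be clipped by the boundary of $\calS$ — but clipping only shrinks supports, so it can be ignored throughout.
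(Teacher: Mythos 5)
Your proof is correct and follows essentially the same route as the paper's: both arguments rest on the observation that each row of $\M$ is supported in a half-width-$(k_1,k_2)$ box around its own index, so the support spreads by $(k_1,k_2)$ per multiplication, giving a box of half-widths $(pk_1,pk_2)$ and hence at most $(2pk_1+1)(2pk_2+1)\le p^2\ks$ nonzeros. The only difference is presentational — the paper verifies the case $p=2$ by examining when $\sum_{\k}[\M]_{\i,\k}[\M]_{\k,\j}$ can be nonzero and asserts the extension to general $p$, whereas you carry out the full induction explicitly.
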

\begin{proof}
We prove this for $p=2$ and it can be similarly extended to larger $p$. 
Recall all the support $\calJ_{\i}$ are included in the same rectangle 
$\{i_1-k_1,\cdots, i_1+k_1\}\times \{i_2-k_2,\cdots,i_2+k_2\}$
of size $\ks = (2k_1+1)(2k_2+1)$. We denote $\M$ by its columns: $\M = \mat{\tilde\m_{11}&\cdots&\tilde \m_{MN}}$. Then 
\begin{align*}
	(\e_{i_2}\otimes \e_{i_1})^\top\M^2(\e_{j_2}\otimes \e_{j_1}) = \m_{\i}^\top\m_{\j} = \sum_{\k}[\M]_{\i,\k}[\M]_{\k,\j}.
\end{align*}
The summation is non-vanishing only when $\calJ_{\i}\cap\calJ_{\j}\neq \emptyset$. When $|i_1-j_1| > 2k_1$ or $|i_2-j_2|>2k_2$, the intersection is empty. And thus there are at most $(4k_1+1)(4k_2+2)\leq 4\ks$ $\j$s making the term non-vanishing.
\end{proof}


\begin{lemma}\label{lemma:matrix-perturbation}
	Let $\M\in\RR^{P_1\times P_2}$ be a rank $R$ matrix. Let $\M = \U\bSigma\V^\top$ be its compact singular value decomposition and denote $\lambda_{\min}>0$ the smallest non-zero singular value of $\M$. Then for any matrix $\hat\M$ such that $\fro{\hat\M - \M}\leq\frac{\lambda_{\min}}{8}$, we have that the best rank $R$ approximation of $\hat\M$ under Frobenius norm (denoted by $\svd_R(\hat\M)$) satisfies
	\begin{align*}
		\svd_R(\hat\M) - \M = \Z - \U_{\perp}\U_{\perp}^\top\Z\V_{\perp}\V_{\perp}^\top + \R, 
	\end{align*}
	where $\Z = \hat\M - \M$ and $\U_{\perp}$ $(\V_{\perp} \text{resp.})$ is the orthogonal complement of $\U$ $(\V \text{resp.})$, and the remainder term $\R$ satisfies 
	\begin{align*}
		\fro{\R} \leq 40\frac{\fro{\hat\M - \M}^2}{\lambda_{\min}}. 
	\end{align*}
\end{lemma}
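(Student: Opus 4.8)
The plan is to prove Lemma~\ref{lemma:matrix-perturbation} via a first-order perturbation expansion of the best rank-$R$ approximation, treating $\Z = \hat\M - \M$ as the perturbation and controlling the higher-order terms. First I would fix notation: write $\P_{\U} = \U\U^\top$, $\P_{\U_\perp} = \U_\perp\U_\perp^\top = \I - \P_{\U}$, and similarly $\P_{\V},\P_{\V_\perp}$, so that $\M = \P_{\U}\M\P_{\V}$ and the claimed identity reads $\svd_R(\hat\M) = \M + \Z - \P_{\U_\perp}\Z\P_{\V_\perp} + \R$. Since $\fro{\Z}\le \lambda_{\min}/8$, Weyl's inequality guarantees the $R$-th singular value of $\hat\M$ is at least $7\lambda_{\min}/8>0$ while the $(R+1)$-th is at most $\lambda_{\min}/8$, so there is a clear spectral gap and $\svd_R(\hat\M)$ is well-defined and unique. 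The key object is $\widehat\P_L$, the projector onto the top-$R$ left singular subspace of $\hat\M$ (and $\widehat\P_R$ on the right), so that $\svd_R(\hat\M) = \widehat\P_L\,\hat\M\,\widehat\P_R$.

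The core of the argument is a Davis--Kahan / $\sin\Theta$-type bound combined with a first-order expansion of the projectors. The standard result gives $\op{\widehat\P_L - \P_{\U}} \le 2\fro{\Z}/\lambda_{\min}$ and likewise for the right projectors, and moreover the first-order expansion $\widehat\P_L = \P_{\U} + \P_{\U_\perp}\Z\M^{+\top} + \M^{+\top}\Z^\top\P_{\U_\perp}^\top + (\text{2nd order})$ — actually it is cleaner to expand the product directly. The plan is: write
\begin{align*}
\svd_R(\hat\M) - \M
&= \widehat\P_L\,\hat\M\,\widehat\P_R - \P_{\U}\M\P_{\V} \\
&= \widehat\P_L(\M+\Z)\widehat\P_R - \P_{\U}\M\P_{\V},
\end{align*}
expand $\widehat\P_L = \P_{\U} + \Delta_L$, $\widehat\P_R = \P_{\V} + \Delta_R$ with $\op{\Delta_L},\op{\Delta_R}$ of order $\fro{\Z}/\lambda_{\min}$, and collect terms. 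Using $\P_{\U}\M = \M = \M\P_{\V}$, the zeroth-order term is $\M$, the first-order terms in $\Z$ contribute $\P_{\U}\Z\P_{\V} + \Delta_L\M\P_{\V} + \P_{\U}\M\Delta_R$, and crucially one must identify the leading parts of $\Delta_L\M$ and $\M\Delta_R$. Here the key algebraic fact is that $\widehat\P_L\,\hat\M = \hat\M\,\widehat\P_R$ (both equal $\svd_R(\hat\M)$ composed appropriately — more precisely $\widehat\P_L\hat\M = \widehat\P_L\hat\M\widehat\P_R + \widehat\P_L\hat\M\widehat\P_{R\perp}$ and $\widehat\P_L\hat\M\widehat\P_{R\perp}=0$ since the top-$R$ left space maps under $\hat\M^\top$-action... ), which lets one show that the first-order contribution of $\Delta_L$ and $\Delta_R$ together assembles exactly into $\P_{\U_\perp}\Z\P_{\V} + \P_{\U}\Z\P_{\V_\perp} + \P_{\U_\perp}\Z\P_{\V_\perp}$ is \emph{not} present, giving $\Z - \P_{\U_\perp}\Z\P_{\V_\perp}$. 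All remaining terms — products of two or more of $\{\Delta_L,\Delta_R,\Z\}$ — are second order and go into $\R$, with $\fro{\R}\lesssim \fro{\Z}^2/\lambda_{\min}$.

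The main obstacle I anticipate is getting the constant $40$ and organizing the bookkeeping of the second-order remainder cleanly, rather than any conceptual difficulty. Concretely, after writing $\svd_R(\hat\M) - \M = \Z - \P_{\U_\perp}\Z\P_{\V_\perp} + \R$ with $\R$ an explicit sum of a bounded number (around six to ten) of terms each of the form $(\text{projector or }\Delta)\cdot(\text{$\Z$ or }\M)\cdot(\cdots)$ with total "perturbation degree" $\ge 2$, I would bound each using $\op{\Delta_L},\op{\Delta_R}\le 2\fro{\Z}/\lambda_{\min}\le 1/4$, $\fro{\Delta_L\M}\le \op{\Delta_L}\fro{\M}$ is \emph{not} useful since $\fro{\M}$ is large; instead one uses $\fro{\Delta_L\M} = \fro{(\widehat\P_L-\P_{\U})\M}\le \op{\widehat\P_L-\P_{\U}}\cdot\op{\M}\cdot\sqrt R$, still large — so the right move is to bound $\fro{\Delta_L\M}$ directly as $\fro{(\I-\widehat\P_L)\M} = \fro{(\I-\widehat\P_L)(\M-\hat\M)}\le \fro{\Z}$ (since $(\I-\widehat\P_L)\hat\M$ has rank $\le \min(P_1,P_2)-R$... actually $(\I-\widehat\P_L)\hat\M$ is the bottom part, but we need $(\I-\widehat\P_L)\P_{\U}$; the clean bound is $\fro{(\widehat\P_L-\P_U)\M}\le 2\fro{\Z}$ via $\widehat\P_L\M = \widehat\P_L\hat\M - \widehat\P_L\Z$ and $\fro{\widehat\P_L\hat\M - \M}\le \fro{\hat\M - \M} + \fro{(\I-\widehat\P_L)\hat\M}\le 2\fro{\Z}$). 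With that substitution every remainder term is genuinely $O(\fro{\Z}^2/\lambda_{\min})$, and summing the (explicitly listed) constants yields the bound $40\fro{\Z}^2/\lambda_{\min}$, perhaps after using $\fro{\Z}/\lambda_{\min}\le 1/8$ to absorb cubic-and-higher terms into the quadratic one. I would present the projector perturbation bounds (Davis--Kahan) as a cited black box and spend the written proof on the expansion and the remainder estimate.
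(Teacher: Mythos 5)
The paper itself offers no argument for this lemma---it is proved by a one-line citation to Lemma 18 of \cite{shen2025computationally}---so the relevant comparison is with the standard projector-perturbation argument, which is exactly what you are attempting. Your scaffolding is sound: Weyl's inequality does give the spectral gap $\sigma_R(\hat\M)\ge 7\lambda_{\min}/8 > \lambda_{\min}/8\ge\sigma_{R+1}(\hat\M)$, the identity $\svd_R(\hat\M)=\hat\P_L\hat\M=\hat\M\hat\P_R$ is correct, and your observations $\fro{(\I-\hat\P_L)\hat\M}\le\fro{\Z}$ and $\fro{\svd_R(\hat\M)-\M}\le 2\fro{\Z}$ are precisely the right tools for avoiding the ``large $\fro{\M}$'' problem.

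The gap is at the one step that is the actual content of the lemma: showing that the first-order part of $\svd_R(\hat\M)-\M$ is exactly $\Z-\U_\perp\U_\perp^\top\Z\V_\perp\V_\perp^\top$ with a remainder that is genuinely $O(\fro{\Z}^2/\lambda_{\min})$. The sentence in which you claim the first-order contributions of $\Delta_L$ and $\Delta_R$ ``assemble'' into the right expression is not an argument (as written it is not even a grammatical assertion). Plain Davis--Kahan/Wedin gives only the zeroth-order bound $\op{\hat\P_L-\P_{\U}}\lesssim\fro{\Z}/\lambda_{\min}$; from that alone, a term like $\Delta_L\M$ is first order in $\Z$ with an uncontrolled direction, and your bound $\fro{\Delta_L\M}\lesssim\fro{\Z}$ only shows the total error is $O(\fro{\Z})$, not that it has the claimed leading term. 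To close this you need either (i) the first-order resolvent expansion of the spectral projector, $\hat\P_L=\P_{\U}+\P_{\U_\perp}\Z\V\bSigma^{-1}\U^\top+\U\bSigma^{-1}\V^\top\Z^\top\P_{\U_\perp}+O(\fro{\Z}^2/\lambda_{\min}^2)$, which yields $(\hat\P_L-\P_{\U})\M=\P_{\U_\perp}\Z\P_{\V}+O(\fro{\Z}^2/\lambda_{\min})$ and then $\P_{\U}\Z+\P_{\U_\perp}\Z\P_{\V}=\Z-\P_{\U_\perp}\Z\P_{\V_\perp}$; or (ii) a blockwise route: split $\svd_R(\hat\M)-\M-\bigl(\Z-\P_{\U_\perp}\Z\P_{\V_\perp}\bigr)$ by $\P_{\U}/\P_{\U_\perp}$ on the left and $\P_{\V}/\P_{\V_\perp}$ on the right, and in each block pair one small-angle factor ($\op{\P_{\U_\perp}\hat\P_L}$ or $\op{\hat\P_{R\perp}\P_{\V}}\lesssim\fro{\Z}/\lambda_{\min}$) with one small-residual factor ($\fro{\hat\P_{L\perp}\hat\M}\le\fro{\Z}$ or $\fro{\svd_R(\hat\M)-\M}\le 2\fro{\Z}$), using $\hat\P_{L\perp}\hat\M=\hat\P_{L\perp}\hat\M\hat\P_{R\perp}$ and $\P_{\U_\perp}\M=\M\P_{\V_\perp}=\boldsymbol{0}$. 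Either route works and produces a constant of the advertised order once $\fro{\Z}\le\lambda_{\min}/8$ is used to absorb higher-order terms, but as submitted the sketch asserts the conclusion of this central step rather than proving it.
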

\begin{proof}
	See the proof of Lemma 18 in \cite{shen2025computationally}. 
\end{proof}

\begin{lemma}\label{lemma:autocov}
	Under \textsf{Regime 1}, suppose $\rho(\M)<1$. And let $\E_t$ be i.i.d. with mean zero and $\EE|[\E_t]_{\i}|^{2+\tilde q}<+\infty$ for each $\i\in\calS$ and some $\tilde q>0$.
	Then we have 
	\begin{align*}
		\frac{1}{T-1}\sum_{t=1}^{T-1}\vec(\X_t)\vec(\X_t)^\top \overset{p}{\rightarrow} \bGamma_0
	\end{align*} 
	as $T\rightarrow\infty$. 
\end{lemma}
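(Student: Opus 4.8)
The plan is to exploit that, under $\rho(\M)<1$, the vectorized process $\{\x_t\}$ is a causal linear functional of an i.i.d.\ sequence, so that the matrix-valued process $\{\x_t\x_t^\top\}$ is strictly stationary and ergodic with integrable entries, and then to invoke the pointwise (Birkhoff) ergodic theorem.

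First I would record the MA$(\infty)$ representation: since $\rho(\M)<1$, Gelfand's formula gives $\op{\M^l}\le C\delta_0^{l}$ for some $C>0$ and $\delta_0\in(0,1)$, hence $\sum_{l\ge 0}\op{\M^l}<\infty$, so $\x_t=\sum_{l\ge 0}\M^l\bepsilon_{t-l}$ converges in $L^2$ and almost surely and is the unique stationary solution of $\x_t=\M\x_{t-1}+\bepsilon_t$. In particular there is a fixed measurable map $g$ with $\x_t\x_t^\top=g(\dots,\bepsilon_{t-1},\bepsilon_t)$. Next I would invoke ergodicity: the two-sided sequence $\{\bepsilon_t\}_{t\in\ZZ}$ is i.i.d., so the coordinate shift is measure-preserving and ergodic, and since $\{\x_t\x_t^\top\}_{t}$ is obtained by composing the fixed map $g$ with iterates of that shift, it is strictly stationary and ergodic.

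For integrability I would use Minkowski's inequality, $\|\x_t\|_{L^2}\le\big(\sum_{l\ge 0}\op{\M^l}\big)\,\|\bepsilon_0\|_{L^2}$, together with $\|\bepsilon_0\|_{L^2}^2=\EE\ltwo{\bepsilon_0}^2\le MN\max_{\i\in\calS}\EE[\E_0]_{\i}^2<\infty$, which holds by the moment assumption (the $2+\tilde q$ moments in particular give finite second moments). Hence $\EE\fro{\x_t\x_t^\top}\le\EE\ltwo{\x_t}^2<\infty$, so every entry of $\x_t\x_t^\top$ is integrable, and the pointwise ergodic theorem applied entrywise (equivalently, to the $\RR^{MN\times MN}$-valued functional) yields $\frac{1}{T-1}\sum_{t=1}^{T-1}\x_t\x_t^\top\to\EE[\x_1\x_1^\top]=\bGamma_0$ almost surely, hence in probability; since $\x_t=\vec(\X_t)$ this is the claim.

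I do not expect a genuine obstacle, but the point that needs care is that a naive second-moment (Chebyshev) argument on the partial sums is unavailable: bounding $\var\!\big(\sum_{t}[\X_t]_{\i}[\X_t]_{\j}\big)$ would require fourth moments of the innovations, whereas only $2+\tilde q$ moments with $\tilde q$ possibly tiny are assumed. The ergodic route sidesteps this because it needs only $L^1$ integrability of $\x_t\x_t^\top$, i.e.\ second moments of $\E_t$. If one wanted to avoid invoking ergodicity altogether, an alternative would be a truncation argument: approximate $\x_t$ by the $L$-dependent process $\x_t^{(L)}=\sum_{l=0}^{L}\M^l\bepsilon_{t-l}$, prove the weak law for the $L$-dependent products, and let $L\to\infty$ using the geometric tail $\op{\M^l}\le C\delta_0^{l}$ — but this is messier (and still wants a higher-moment input), so the ergodic-theorem argument is the cleanest.
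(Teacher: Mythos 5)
Your proposal is correct, but it is a genuinely different route from what the paper does: the paper's entire proof of this lemma is a citation to Theorem~8.2.3 of \cite{fuller2009introduction}, which establishes almost sure convergence of sample second moments for stationary autoregressive processes with i.i.d.\ innovations possessing $2+\tilde q$ moments. You instead give a self-contained argument: Gelfand's formula turns $\rho(\M)<1$ into a summable MA$(\infty)$ representation, the process $\x_t\x_t^\top$ is then a fixed measurable functional of the i.i.d.\ innovation shift and hence strictly stationary and ergodic, and Birkhoff's theorem applied entrywise (which needs only $L^1$ integrability, i.e.\ second moments of $\E_t$) delivers almost sure convergence, which is stronger than the stated convergence in probability. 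Each approach has its merits: the citation is shorter and matches the classical time-series literature the paper leans on, while your argument is transparent about exactly which moment assumptions are doing work --- in particular, your observation that a Chebyshev bound on $\var\bigl(\sum_t [\X_t]_{\i}[\X_t]_{\j}\bigr)$ would demand fourth moments, whereas the ergodic route needs only second moments of the innovations, correctly explains why the weak $2+\tilde q$ condition suffices here (the surplus $\tilde q$ is needed elsewhere, e.g.\ for the Lindeberg condition in Theorem~\ref{thm:overall}, not for this lemma). No gap.
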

\begin{proof}
	See Theorem 8.2.3 in \cite{fuller2009introduction}. 
\end{proof}

\section{Additional Methodological and Numerical Details}\label{sec:supp-additional-details}
This section collects additional methodological details and numerical results.

\subsection{Non-identifiability under non-nested neighborhoods}\label{sec:non-nested}
In this section, we give the formal statement of the ambiguity for neighborhood selection within non-nested candidates.
\begin{proposition}\label{thm:non-nested}
	Let $K_{1}> 0, K_{2}\geq 0$. Then there exists a distribution $\calP$, and $\X_t\overset{\text{i.i.d.}}{\sim} \calP$, such that there exists $k_1<K_{1}, k_2>K_{2}$, and $\hat\M\in\RR^{\calS}$, the following holds:
	\begin{enumerate}[label*=$\bullet$]
		\item $(2k_1+1)(2k_2+1)<(2K_{1}+1)(2K_{2}+1)$,
		\item $\supp(\hat\M) = \tilde\calJ$,
		\item $y_t = \inp{\X_t}{\hat\M}$. 
	\end{enumerate}
\end{proposition}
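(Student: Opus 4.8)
The plan is to defeat identifiability by choosing a \emph{degenerate} design: a distribution $\calP$ on $\RR^{\calS}$ whose entire support lies in a single hyperplane, picked so that the normal direction of that hyperplane is exactly the difference between a true kernel supported on the large rectangle and a decoy kernel supported on a smaller, non-nested rectangle. Because the response functional $\X\mapsto\inp{\X}{\M}$ probes $\X$ only through that inner product, the decoy then reproduces the true responses verbatim, and the degenerate design is in no way pathological — it is exactly what one obtains from a low-dimensional latent spatial process, which is precisely the sense in which such designs are ``natural''.

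First I would fix the center $(i_0,j_0)$ and the two candidate neighborhoods
\[
\calJ := \{(u,v):|u-i_0|\le K_{1},\ |v-j_0|\le K_{2}\},\qquad
\tilde\calJ := \{(u,v):|u-i_0|\le k_1,\ |v-j_0|\le k_2\},
\]
taking for concreteness $k_1=0$ and $k_2=K_{2}+1$, so that $k_1<K_1$, $k_2>K_2$, and a one-line computation gives $(2k_1+1)(2k_2+1)=2K_2+3<(2K_1+1)(2K_2+1)$ — the first bullet — the only exception being the corner $K_1=1,K_2=0$, where the two candidates happen to have identical volume $3$ and which may be excluded. By construction $\calJ$ and $\tilde\calJ$ are non-nested, so both $\calJ\setminus\tilde\calJ$ and $\tilde\calJ\setminus\calJ$ are nonempty.

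Next I would pick the kernels and the design. Let $\M$ be the all-ones array on $\calJ$ (and zero elsewhere), so $\supp(\M)=\calJ$ and $y_t:=\inp{\X_t}{\M}$ is the true response of the Example; let $\hat\M$ be the all-ones array on $\tilde\calJ$, so $\supp(\hat\M)=\tilde\calJ$, which is the second bullet. Put $\Delta:=\M-\hat\M$; it equals $+1$ on $\calJ\setminus\tilde\calJ$ and $-1$ on $\tilde\calJ\setminus\calJ$, hence $\Delta\neq 0$. Now let $\{\B_1,\dots,\B_{|\calS|-1}\}$ be an orthonormal basis of the hyperplane $H:=\{\Z\in\RR^{\calS}:\inp{\Z}{\Delta}=0\}$, draw $g_{t,j}\overset{\text{i.i.d.}}{\sim}N(0,1)$, and set $\X_t:=\sum_{j=1}^{|\calS|-1} g_{t,j}\,\B_j$, with $\calP$ the law of $\X_1$. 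The $\X_t$ are then i.i.d.\ and lie in $H$ almost surely, so $\inp{\X_t}{\Delta}=0$ for every $t$, whence $\inp{\X_t}{\hat\M}=\inp{\X_t}{\M}-\inp{\X_t}{\Delta}=y_t$ — the third bullet — and all three requirements hold simultaneously.

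I do not expect a genuine analytic obstacle; the hard part, such as it is, is only bookkeeping: verifying the volume inequality for the chosen $k_1,k_2$ (and handling the degenerate corner noted above), and checking that the all-ones choices give supports equal to $\calJ$ and $\tilde\calJ$ rather than proper subsets. The conceptual message — and the reason the main text restricts neighborhood selection to a nested ladder $\calJ_{\i}[1]\subsetneq\cdots\subsetneq\calJ_{\i}[K_0]$ — is that a rank-deficient (yet perfectly natural) design collapses the distinction between non-nested neighborhoods, so no residual-sum-of-squares criterion can hope to separate them.
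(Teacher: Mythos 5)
Your proof is correct and takes a genuinely different route from the paper's. The paper uses the same neighborhoods ($k_1=0$, $k_2=K_2+1$) but an extremely degenerate design: $\X_t$ is supported on exactly two entries, $(i_0-K_1,j_0)\in\calJ\setminus\tilde\calJ$ and $(i_0,j_0-K_2-1)\in\tilde\calJ\setminus\calJ$, forced to be equal and drawn from an arbitrary scalar law; the decoy $\hat\M$ is then a single spike at $(i_0,j_0-K_2-1)$ carrying the true coefficient $\M(i_0-K_1,j_0)$, and the response identity is a one-line computation. Your construction instead confines a nondegenerate Gaussian design to the hyperplane orthogonal to $\Delta=\M-\hat\M$ with both kernels taken as indicator arrays. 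Each approach buys something: the paper's is shorter and maximally concrete, while yours shows the ambiguity survives under designs of corank only one (rather than corank $|\calS|-2$), and — notably — your all-ones $\hat\M$ actually satisfies the second bullet $\supp(\hat\M)=\tilde\calJ$ exactly, whereas the paper's single-spike $\hat\M$ has one-point support and therefore only satisfies $\supp(\hat\M)\subset\tilde\calJ$. You also correctly flag the corner $K_1=1$, $K_2=0$, where no admissible $(k_1,k_2)$ can satisfy the strict volume inequality $(2k_1+1)(2k_2+1)<(2K_1+1)(2K_2+1)$; the paper's proof silently shares this defect, so it is a flaw of the statement rather than of your argument. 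The only cosmetic caveat is that you prove a mild strengthening of what is literally needed — any design annihilating $\Delta$ would do — but that generality is precisely what makes the "natural designs" claim in the main text credible.
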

\begin{proof}
	We set $k_1 = 0, k_2 = K_{2}+1$ (see Figure \ref{fig:non-nested}). 
	Let $\X$ be supported on only two entries $\{(i_0-K_{1},j_0), (i_0, j_0-K_{2}-1)\}$ and that $\X_t(i_0-K_{1},j_0) = \X_t(i_0, j_0-K_{2}-1)\sim\calP_0$, where $\calP_0$ can be an arbitrary distribution.  
	And let $\hat\M= \M(i_0-K_{1}, j_0)\e_{i_0}\e_{j_0-K_{2}-1}^\top$. Then we have $y_t = \inp{\X_t}{\M} = \X_t(i_0-K_{1}, j_0)\cdot\M(i_0-K_{1}, j_0) = \inp{\X_t}{\hat\M}$. 
\end{proof}
\begin{figure}[h] 
	\centering
	\includegraphics[width=0.7\textwidth]{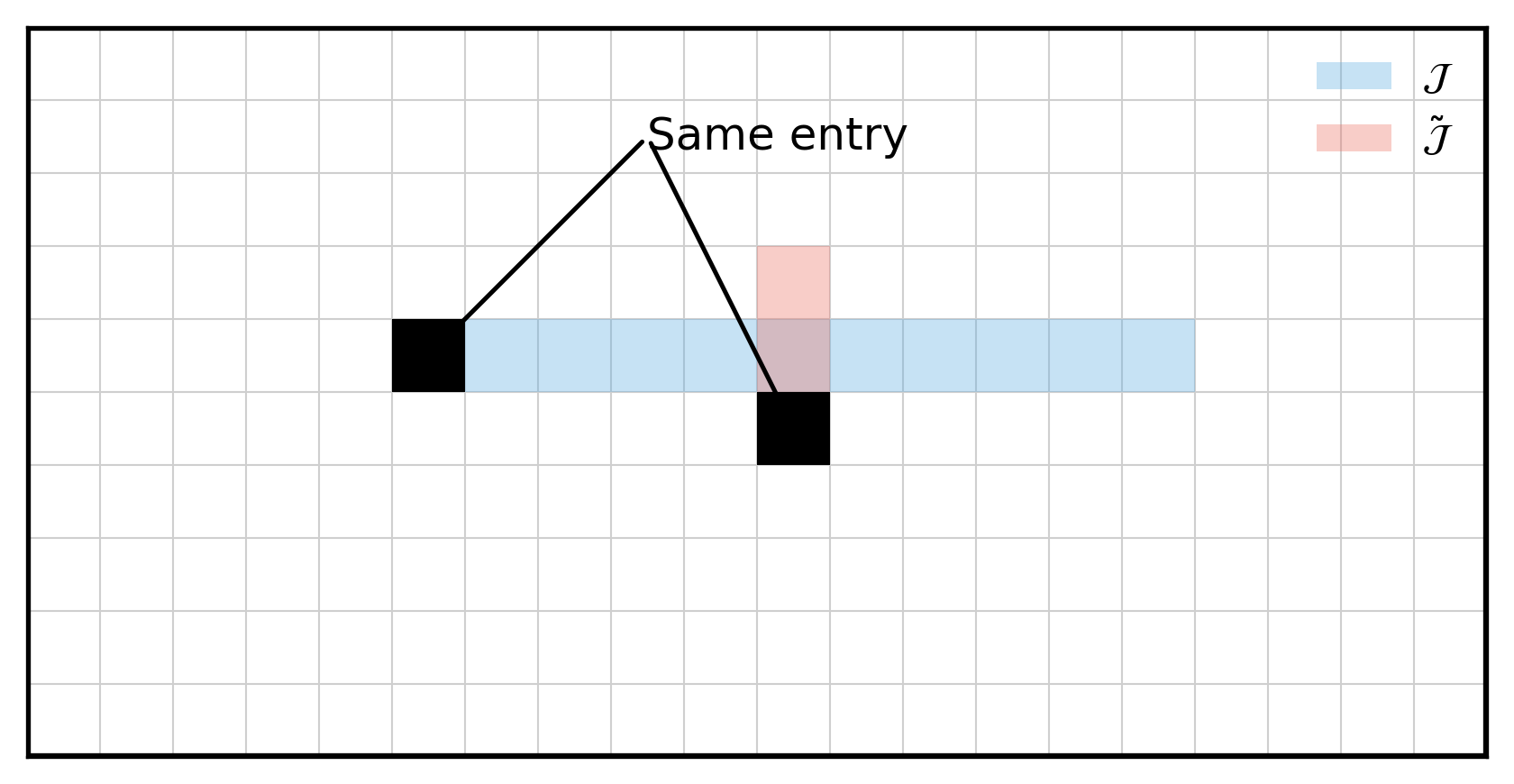} 
	\caption{An example with the ground-truth bandwidths $K_{1} = 5, K_{2} = 0$ (in black), and $k_{1} = 0, k_{2} = 1$ (in red). }
	\label{fig:non-nested}
\end{figure}


\subsection{Extension to the general lag-P model}\label{sec:lagP}
This subsection collects the extension of \model{} to the general lag-\(P\) setting.

\subsubsection{Estimation for the general lag-P model}
We introduce the estimation for the general lag $P$ model. We work on the following model:
\begin{align}\label{supp:LIAR:lagP} 
	\X_t
	=
	\sum_{\i\in\calS}\sum_{p\in[P]}
	\inp{\X_{t-p}}{\M_{p,\i}}\ \e_{i_1}\e_{i_2}^\top
	+\E_t,
\end{align}
For each $\i\in\calS$, we denote $\x_t^{(\i)} := \X_t(\calJ_{\i}), \m_p^{(\i)} = \M_{p,\i}(\calJ_{\i})\in\RR^{|\calJ_{\i}|}$.
Define the stacked coefficient and per-time design vectors
\begin{align*}
	\m_{\i} := \mat{\m_1^{(\i)}\\\vdots\\ \m_P^{(\i)}}, \quad \r_t^{(\i)} := \mat{\x_{t-1}^{(\i)}\\ \vdots\\ \x_{t-P}^{(\i)}}\in\RR^{P|\calJ_{\i}|}
\end{align*}
Then the scalar regression for entry $\i$ reads, for $t=P+1,\dots,T$,
	$[\X_t]_{\i} = \inp{\r_t^{(\i)}}{\m_{\i}} + [\E_t]_{\i}. $
    Stacking over $t$ gives the standard linear model
$
\z_{\i} = \Y_{\i}\m_{\i}+\bnu_{\i},
$
where 
\begin{align}\label{supp:defz}
	\z_{\i} &= \mat{[\X_{P+1}]_{\i}\\ \vdots \\ [\X_T]_{\i}},
	&\bnu_{\i}&= \mat{[\E_{P+1}]_{\i}\\ \vdots \\ [\E_T]_{\i}}\in\RR^{T-P},\notag\\
	\Y_{\i} &= \mat{\r_{P+1}^{(\i)\top}\\ \vdots\\ \r_{T}^{(\i)\top}}
	\in\RR^{(T-P) \times P|\calJ_{\i}|}.
\end{align}
The least square estimator is
\[
\hat\m_{\i}= \arg\min_{\m} \frac{1}{2}\ltwo{\z_{\i} - \Y_{\i}\m}^2
= (\Y_{\i}^\top\Y_{\i})^{-1}\Y_{\i}^\top\z_{\i}.
\]
Notice this procedure can be done in parallel for different $\i\in\calS$.
We summarize the procedure in Algorithm \ref{alg:pls-P}. 
\begin{algorithm}[H]
	\caption{Parallel Least Square for General Lag $P$}
	\begin{algorithmic}\label{alg:pls-P}
		\STATE{\textbf{Input: }Data $\{\X_t\}_{t=1}^T$, local neighborhood $\{\calJ_{\i}\}_{\i\in\calS}$, lag parameter $P$}
		\STATE{\textbf{ParFor} $\i\in\calS$}
		\STATE{\quad Collect the covariate $\Y_{\i} $ and response $\z_{\i}$ defined in \eqref{supp:defz}}
		\STATE{\quad Solve least square $\hat\m_{\i}= (\Y_{\i}^\top\Y_{\i})^{-1}\Y_{\i}^\top\z_{\i}$}	
		\STATE{\textbf{End ParFor}}
		\STATE{\textbf{Output: }$\{\hat\m_{\i}\}_{\i\in\calS}$}
	\end{algorithmic}
\end{algorithm}

\paragraph*{Estimation of Coefficients for \modelone{}. }
For \modelone{}, $\calJ_{\i}$ are rectangles by construction, and we denote $\M_{p}^{[\i]}:=\M_{p,\i}[\calJ_{\i}]$ as a sub-matrix of $\M_{p,\i}$. 
Arrange these matrices into the block matrix
\begin{align*}
	\M_{p}^{\blk} = \mat{\M_{p}^{[1,1]}&\cdots&\M_{p}^{[1,N]}\\ \vdots&&\vdots \\ \M_{p}^{[M,1]}&\cdots &\M_{p}^{[M,N]}}. 
\end{align*}
Then under \modelone{}, $\M_{p}^{\blk}$ has rank at most $R$. 
This motivates us to consider the best rank $R$ approximation of the following estimator of $\M_{p}^{\blk}$, where $\hat\m_{\i} = \mat{\hat\m_{1}^{(\i)\top}& \cdots& \hat\m_{P}^{(\i)\top}}^\top$:
\begin{align*}
	\hat\M_{p}^{\blk} = \mat{\textsf{mat}(\hat\m_{p}^{(1,1)})&\cdots&\textsf{mat}(\hat\m_{p}^{(1,N)})\\ \vdots&&\vdots \\ \textsf{mat}(\hat\m_{p}^{(M,1)})&\cdots &\textsf{mat}(\hat\m_{p}^{(M,N)})},
\end{align*}
where $\textsf{mat}$ is the inverse operation of $\vec$. 
Our algorithm is summarized in Algorithm \ref{supp:alg:spliar}.
\begin{algorithm}[H]
	\caption{Parallel Least Square for \modelone}
	\begin{algorithmic}\label{supp:alg:spliar}
		\STATE{\textbf{Input: }Local neighborhood $\calJ_{\i}$, data $\{\X_t\}_{t=1}^T$, lag parameter $P$, rank $R$}
		\STATE{Run Algorithm \ref{alg:pls-P} for $\{\hat\m_{\i}\}_{\i\in\calS}$}
		\STATE{Perform rank $R$ SVD approximation on $\hat\M_{p}^{\blk}$ and get:
			$\hat\M_{p,R}^{\blk}=\svd_R(\hat\M_{p}^{\blk})$}
	\STATE{\textbf{Output: }$\{\hat\M_{p,R}^{(\i)}\}_{p\in[P], \i\in\calS}$, where $\hat\M_{p,R}^{(\i)}$ be the $(i_1,i_2)$-th block of $\hat\M_{p,R}^{\blk}$}
\end{algorithmic}
\end{algorithm}


\subsubsection{Bandwidth selection for the general lag-P model}
We also restrict the search to a nested sequence indexed by $k$ for each location $\i = (i_1,i_2)$:
$$\calJ_{\i}[1]\subsetneq\cdots\subsetneq\calJ_{\i}[k_0]\subsetneq\cdots\subsetneq\calJ_{\i}[K_0]\subset\calS,$$
where $\calJ_{\i}[k_0] = \calJ_{\i}$, and $K_0$ is a prescribed cap. 
Write $s_k = |\calJ_{\i}[k]|$ (depends on $k$, and implicitly on $\i$).
For any level $k$, let $\x_t[k]:= \X_t\big(\calJ_{\i}[k]\big)\in\RR^{s_k}$ and stack $P$ lags into one vector $\r_t[k] = \mat{\x_t[k]^\top&\cdots&\x_{t-P+1}[k]^\top}^\top\in\RR^{Ps_k}$. Then we form the design matrix $\Y[k]\in\RR^{(T-P)\times Ps_k}$:
\begin{align*}
\Y[k]= \mat{\r_{P+1}[k] &\cdots &\r_T[k]}^\top.
\end{align*}
Also recall $\z = \mat{[\X_{P+1}]_{\i}& \cdots & [\X_T]_{\i}}^\top$. 
The residual sum of squares is
\begin{align*}
\rss_{\i}(k) := \ltwo{\z - \Y[k](\Y[k]^\top\Y[k])^{-1}\Y[k]^\top\z}^2.
\end{align*}
And we use the Bayesian information criterion for the bandwidth selection. 
\begin{align}\label{supp:BIC-entrywise}
\bic_{\i}(k)
=
\log \rss_{\i}(k)
+
D_0\,\frac{|\calJ_{\i}[k]|\cdot P}{T}\,\log(M\vee N\vee T).
\end{align}
For each $\i\in\calS$, we select $\hat k_{\i}$ by
$\hat k_{\i}=\arg\min_{k\le K_0}\ \bic_{\i}(k).$

\subsection{Additional Simulation Results}

\subsubsection{Visualization of the CLT}
To illustrate the asymptotic normality result in Theorem~1 of the main paper, we conduct an oracle diagnostic under the same data-generating mechanism as the estimation-accuracy experiment in Section~5. Specifically, we consider the lag-one \model{} model with \(P=1\), \((M,N)=(10,10)\), and a square neighborhood with radius \(K=3\). The local coefficients are generated as in Section~5 and the resulting transition matrix is rescaled to have spectral radius smaller than one. The innovations are Gaussian with the same noise level as in Section~5, and the first \(500\) observations are discarded as burn-in.

We focus on the interior location \(\i_0=(5,5)\). Its true neighborhood \(\calJ_{\i_0}\) is not truncated, so \(|\calJ_{\i_0}|=(2K+1)^2=49\). Let \(j_0\) denote the coordinate of \(\m_{\i_0}\) corresponding to the self-lag entry at \(\i_0\). Under the column-major ordering used for \(\calJ_{\i_0}\), this coordinate is \(j_0=25\) in one-based indexing.

For each sample size \(T\in\{500,1000,2000,5000\}\), we generate \(B=500\) independent replications. In each replication, we estimate \(\m_{\i_0}\) by least squares using the oracle true neighborhood \(\calJ_{\i_0}\), rather than a BIC-selected neighborhood. This construction isolates the least-squares CLT from the additional effect of neighborhood selection. We then compute the studentized statistic
\[
Z_{\i_0,j_0}
=
\frac{\hat\m_{\i_0,j_0}-\m_{\i_0,j_0}}
{\hat\sigma_{\i_0}\left\{\left[(\Y_{\i_0}^\top\Y_{\i_0})^{-1}\right]_{j_0,j_0}\right\}^{1/2}},
\]
where
\[
\hat\sigma_{\i_0}^2
=
\frac{\ltwo{\z_{\i_0}-\Y_{\i_0}\hat\m_{\i_0}}^2}
{T-1-|\calJ_{\i_0}|}.
\]
Figure~\ref{fig:clt-visualization} compares the empirical distributions of \(Z_{\i_0,j_0}\) with the standard normal benchmark across sample sizes. The distributions are centered around zero and become increasingly close to the standard normal density as \(T\) increases.

\begin{figure}[htbp]
\centering
\includegraphics[width=0.95\textwidth]{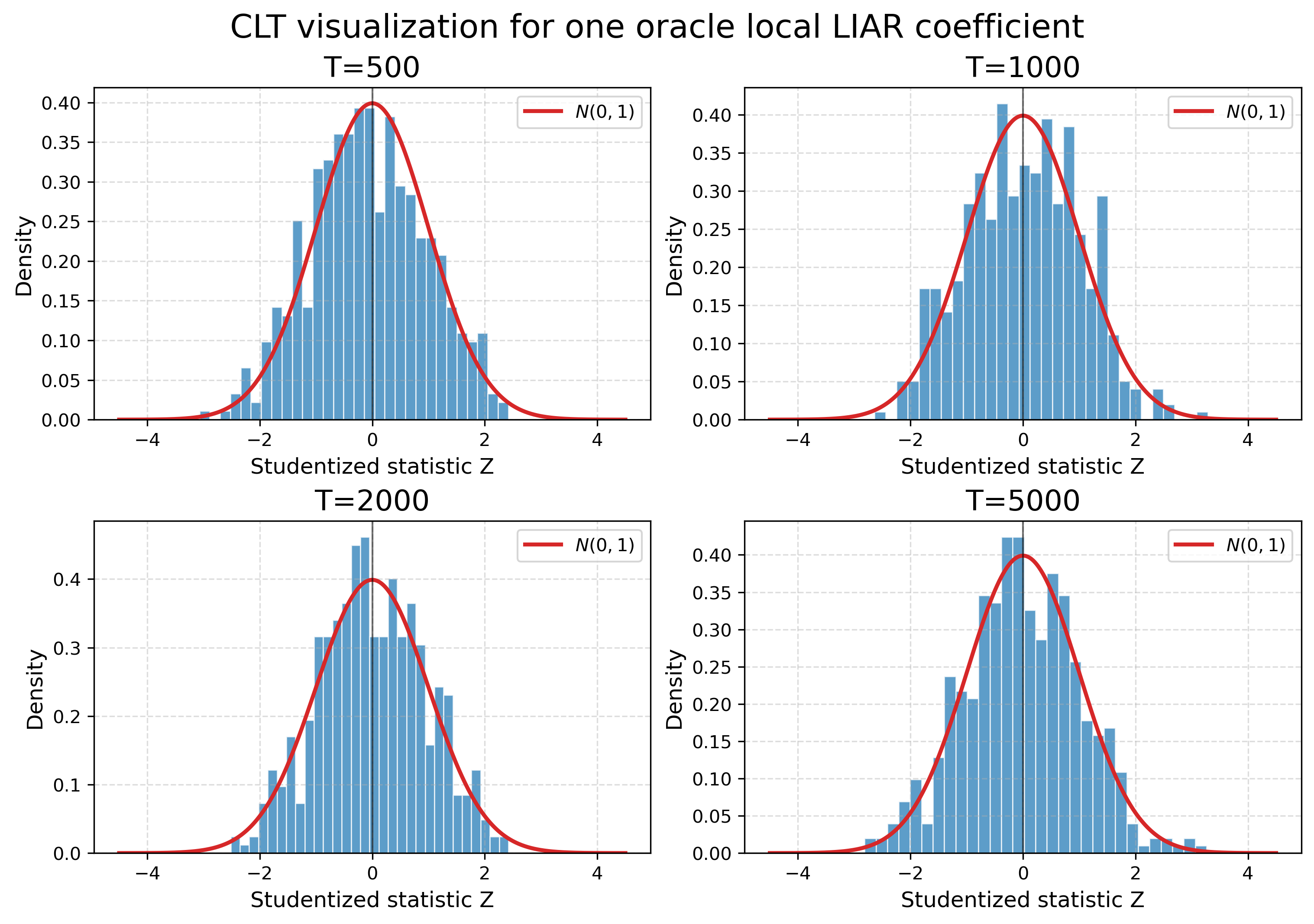}
\caption{Visualization of the CLT for the representative coefficient \(\m_{\i_0,j_0}\). The empirical distributions of \(Z_{\i_0,j_0}\) are compared with the standard normal benchmark across sample sizes.}
\label{fig:clt-visualization}
\end{figure}

\subsubsection{Additional numerical results for neighborhood selection}
After the oracle CLT diagnostic above, we next consider a larger-sample version of the BIC selection experiment in Section~5. The data-generating mechanism is unchanged: the true neighborhood radius is \(K=3\), and BIC selects over \(k\in\{0,1,2,3,4,5\}\). Only the sample size is increased to \(T\in\{4000,5000,6000\}\).

Figure~\ref{fig:bic-large-sample} reports the per-pixel success rates for \(M=N\in\{10,15,20\}\). Compared with the moderate-sample results in the main paper, the success rates further improve as \(T\) increases. These results are consistent with the neighborhood-selection consistency result in Theorem~5.

\begin{figure}[htbp]
    \centering
    \includegraphics[width=0.95\textwidth]{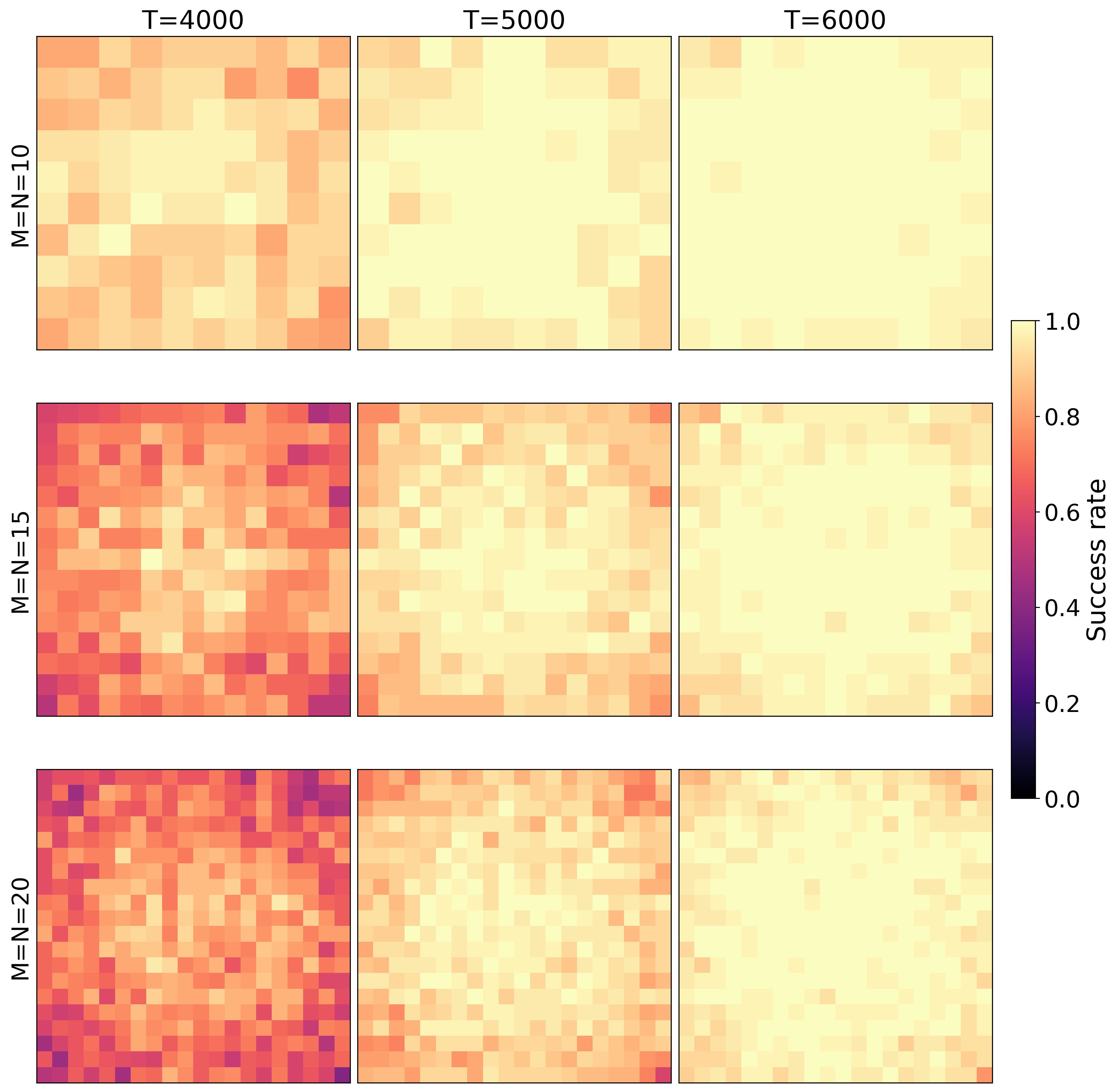}
    \caption{Additional neighborhood-selection results for
    \(M=N\in\{10,15,20\}\) and \(T\in\{4000,5000,6000\}\).}
    \label{fig:bic-large-sample}
\end{figure}

\section{Tensor Local Interaction Autoregression and TEC Application}\label{sec:supp-tensor}
This section collects the tensor extension of \model{} and the tensor TEC application. We first formulate the tensor local interaction autoregression and its entrywise least-squares estimator, and then apply the same construction to TEC data organized by two spatial modes and one within-hour temporal mode.

\subsection{Tensor \model{} formulation and estimation}
We consider a local interaction autoregressive model for tensor-valued time series. The modeling idea mirrors the matrix case: each tensor entry evolves from past values in a local neighborhood, while the neighborhood itself may vary across entries and across tensor modes.

We use calligraphic-font bold-face letters, such as \(\bcalM\) and \(\bcalX\), to denote tensors. An order-\(d\) tensor is a \(d\)-way array; for example, \(\bcalX\in\RR^{N_1\times \cdots \times N_d}\) has size \(N_j\) along its \(j\)th mode. Let \(\calS_d=[N_1]\times\cdots\times[N_d]\) be the tensor index set. For a tensor time series \(\{\bcalX_t\}_{t\geq0}\subset\RR^{\calS_d}\), the tensor \model{} takes the form
\begin{align*}
\bcalX_t
=
\sum_{\i\in\calS_d}
\inp{\bcalX_{t-1}}{\bcalM_{\i}}\cdot
\e_{i_1}\circ\cdots\circ\e_{i_d}
+\bcalE_t,
\end{align*}
where \(\i=(i_1,\ldots,i_d)\in\calS_d\), and \(\bcalM_{\i}\) is supported on a local neighborhood \(\calJ_{\i}\) of \(\i\). For example, a rectangular neighborhood with radius vector \((k_{1,\i},\ldots,k_{d,\i})\) contains indices whose \(\ell\)th coordinate lies within \(k_{\ell,\i}\) of \(i_\ell\), truncated at the tensor boundary.

Vectorizing both sides gives
\begin{align*}
\x_t
=
\underbrace{
\sum_{\i\in\calS_d}
(\e_{i_d}\otimes\cdots\otimes\e_{i_1})\vec(\bcalM_{\i})^\top
}_{=: \M}
\x_{t-1}
+\bepsilon_t,
\end{align*}
where \(\x_t=\vec(\bcalX_t)\) and \(\bepsilon_t=\vec(\bcalE_t)\). Denote the local covariate vector and coefficient vector by
\[
\x_t^{(\i)}=\bcalX_t(\calJ_{\i})\in\RR^{|\calJ_{\i}|},
\qquad
\m_{\i}=\bcalM_{\i}(\calJ_{\i})\in\RR^{|\calJ_{\i}|}.
\]
Then, for each entry \(\i\) and \(t=2,\ldots,T\), the tensor model reduces to the scalar regression
\[
[\bcalX_t]_{\i}
=
\inp{\x_{t-1}^{(\i)}}{\m_{\i}}
+[\bcalE_t]_{\i}.
\]
Stacking over time gives the standard linear model
\[
\z_{\i}=\Y_{\i}\m_{\i}+\bnu_{\i},
\]
where
\begin{equation*}
\begin{split}
&\z_{\i}=\mat{[\bcalX_2]_{\i}&\cdots&[\bcalX_T]_{\i}}^\top,
\qquad
\bnu_{\i}=\mat{[\bcalE_2]_{\i}&\cdots&[\bcalE_T]_{\i}}^\top\in\RR^{T-1},\\
&\Y_{\i}=\mat{\x_1^{(\i)}&\cdots&\x_{T-1}^{(\i)}}^\top
\in\RR^{(T-1)\times |\calJ_{\i}|}.
\end{split}
\end{equation*}
The least-squares estimator is
\[
\hat\m_{\i}
=
\arg\min_{\m}\frac{1}{2}\ltwo{\z_{\i}-\Y_{\i}\m}^2
=
(\Y_{\i}^\top\Y_{\i})^{-1}\Y_{\i}^\top\z_{\i}.
\]
Algorithm~\ref{alg:tensor} summarizes the parallel implementation.
\begin{algorithm}[H]
\caption{Parallel least squares for tensor \model{}}
\label{alg:tensor}
\begin{algorithmic}
\STATE{\textbf{Input: } Local neighborhoods \(\{\calJ_{\i}\}_{\i\in\calS_d}\), data \(\{\bcalX_t\}_{t=1}^T\)}
\STATE{\textbf{ParFor} \(\i\in\calS_d\)}
\STATE{\quad Collect the covariate \(\Y_{\i}\) and response \(\z_{\i}\)}
\STATE{\quad Solve least squares \(\hat\m_{\i}=(\Y_{\i}^\top\Y_{\i})^{-1}\Y_{\i}^\top\z_{\i}\)}
\STATE{\textbf{End ParFor}}
\STATE{\textbf{Output: } \(\{\hat\m_{\i}\}_{\i\in\calS_d}\)}
\end{algorithmic}
\end{algorithm}
Theoretical results for coefficient estimation, auto-covariance estimation, and neighborhood selection extend to this tensor setting by the same proof strategy as in the matrix case. The framework also allows tensor autoregressions with no locality along some modes, by choosing \(\calJ_{\i}\) to span the full range along those dimensions while remaining local along others.

\subsection{Tensor TEC application}
The TEC application uses the tensor formulation above to separate coarse spatial variation from within-hour temporal variation. Starting from the same GNSS-derived TEC product \citep{sun2023complete}, we first aggregate the spatial grid to \(4^\circ\times4^\circ\), yielding a \(46\times91\) spatial grid. We then stack, for each clock hour, the twelve 5-minute frames into a third mode, which represents the within-hour slot. This produces hourly tensors of size \(46\times91\times12\). Indexing these tensors by hour gives a tensor time series of shape \(46\times91\times12\times720\) for the 30-day period in June 2019.

We use the first \(80\%\) of hours for training and the remaining \(20\%\) for testing. Test RMSE is computed from one-step-ahead full-tensor predictions over the test period, averaging squared prediction errors over all test hours and tensor entries. Neighborhood sizes are selected entrywise by BIC over the candidate set
\[
\{(0,0,0),(0,0,1),(0,1,1),(1,1,1),(1,1,2),(1,2,2),(2,2,2)\}.
\]
Figure~\ref{fig:tensor-k-freq} reports the selection frequencies across all locations and within-hour slots. The selections concentrate on small neighborhoods, with \((0,1,1)\) most common and \((0,0,1)\) the primary alternative. This concentration suggests that the hourly TEC tensor dynamics are driven mainly by short-range interactions along the spatial and within-hour dimensions, rather than by dense full-tensor dependence.

To examine when and where these selected neighborhoods arise, Figure~\ref{fig:tensor-slot-prop} shows the proportion of each selected neighborhood versus within-hour slot, from 5 to 60 minutes. The slot-wise proportions are fairly stable across the hour, with only mild changes near hour boundaries. This stability suggests that the learned local dependence pattern is persistent within an hour, with only modest changes near transitions between adjacent hourly blocks. Figure~\ref{fig:tensor-mode-map} displays the spatial mode map, defined as the most frequently selected neighborhood per pixel over the 12 within-hour slots.

\begin{figure}[H]
\centering
\begin{subfigure}[t]{0.48\textwidth}
\centering
\includegraphics[width=0.9\linewidth]{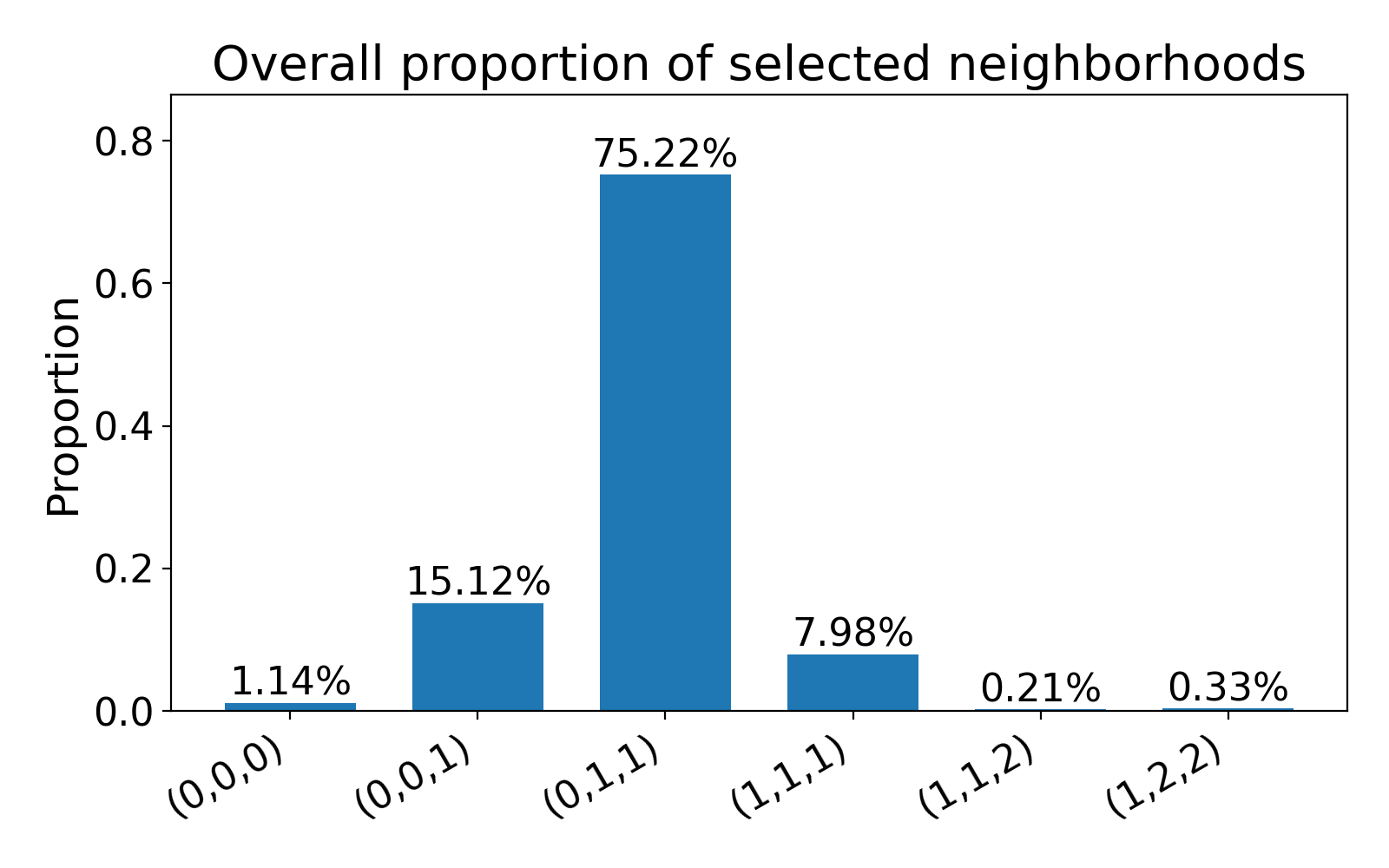}
\caption{Selection frequencies of neighborhoods \((k_1,k_2,k_3)\) over all locations and slots.}
\label{fig:tensor-k-freq}
\end{subfigure}\hfill
\begin{subfigure}[t]{0.48\textwidth}
\centering
\includegraphics[width=0.9\linewidth]{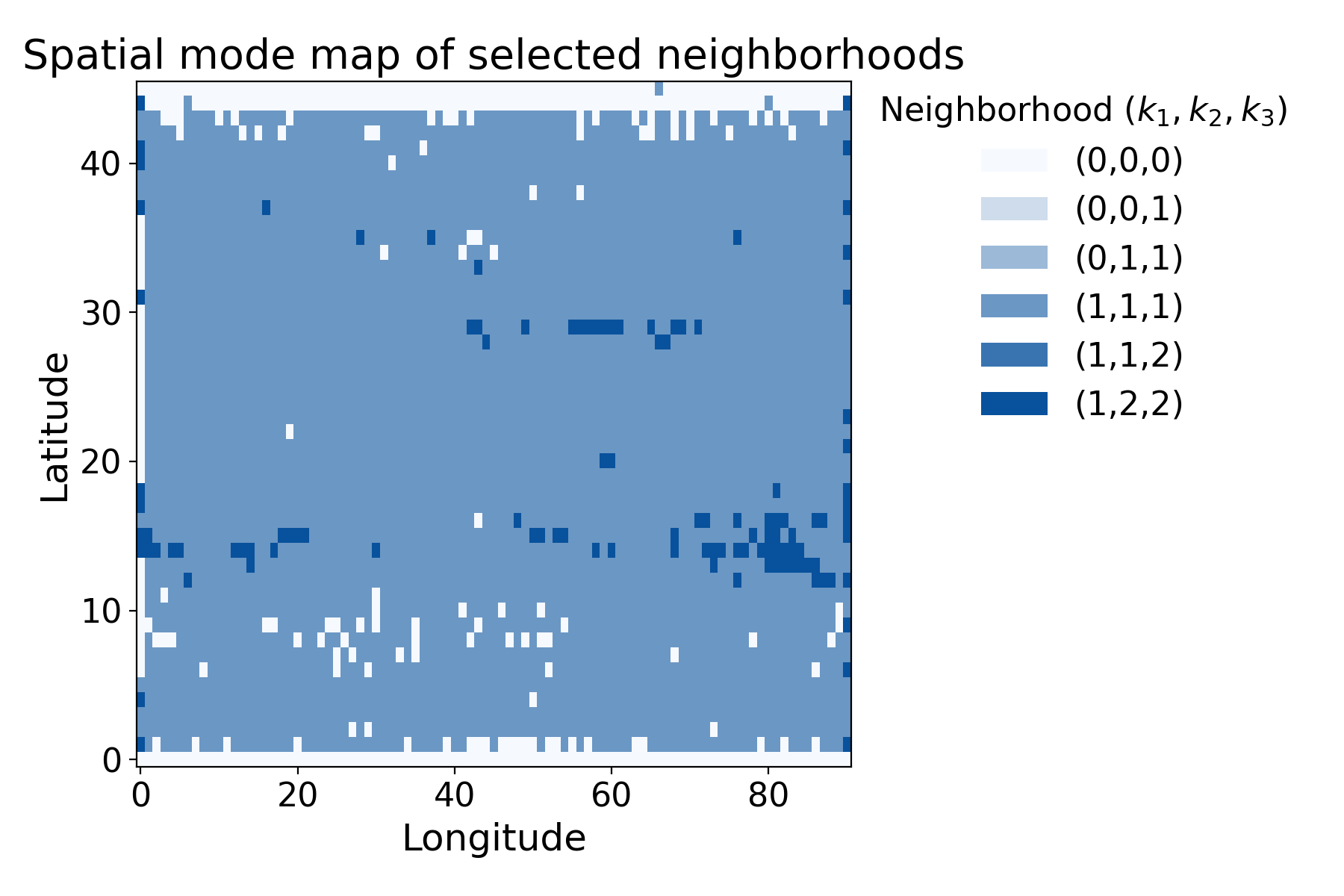}
\caption{Spatial mode map: per pixel, the most frequently selected neighborhood over 12 slots.}
\label{fig:tensor-mode-map}
\end{subfigure}

\vspace{0.8em}

\begin{subfigure}[t]{0.98\textwidth}
\centering
\includegraphics[width=0.82\linewidth]{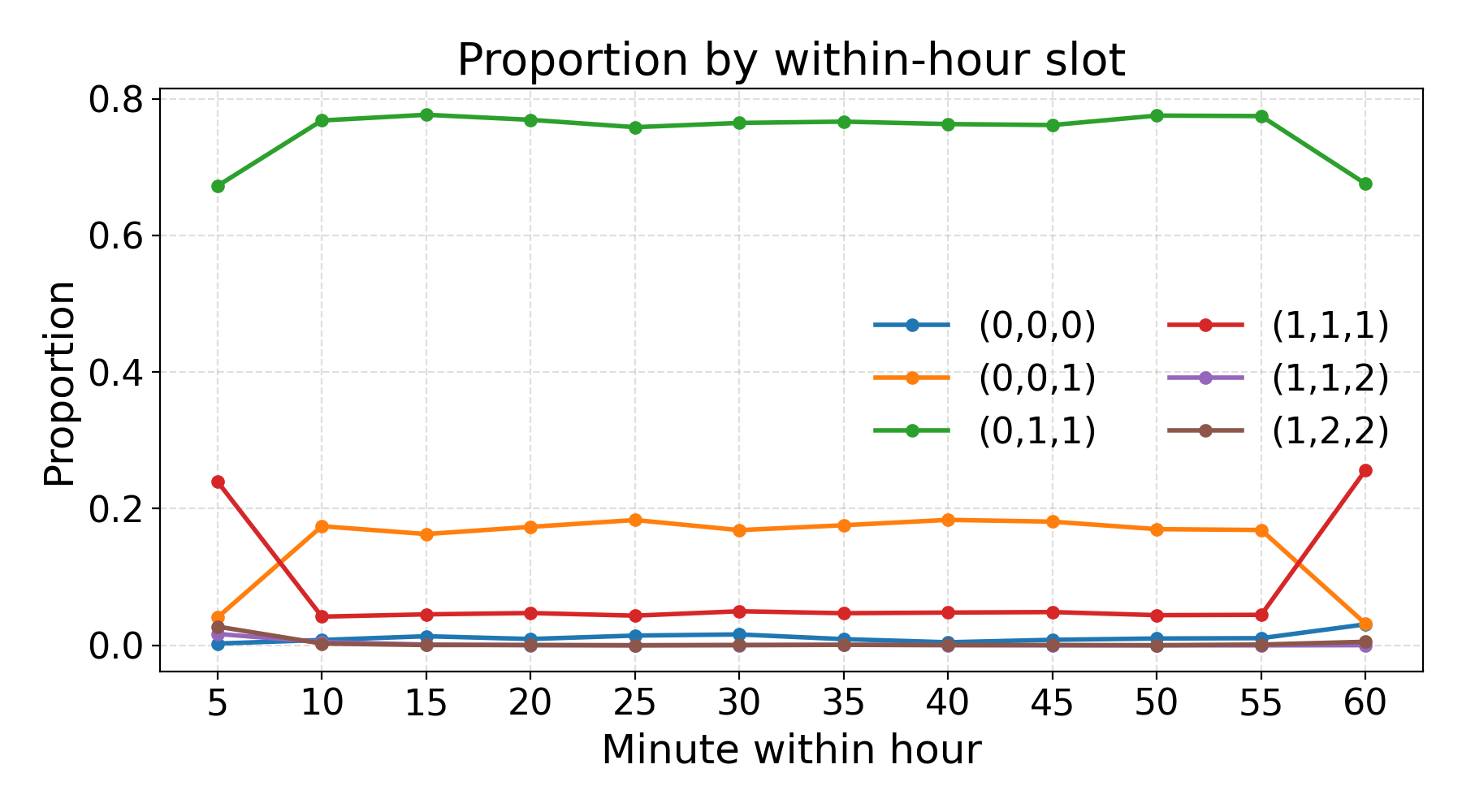}
\caption{Proportion of each neighborhood versus within-hour slot, from 5 to 60 minutes.}
\label{fig:tensor-slot-prop}
\end{subfigure}

\caption{Neighborhood selection in the tensor TEC series using entrywise BIC. Top: (a) overall selection frequencies and (b) spatial mode map. Bottom: (c) proportions by within-hour slot.}
\label{fig:tensor-neighborhoods}
\end{figure}

As a benchmark, we fit the pixel-wise autoregression \textsf{LIAR-P}. Its test \(\mathrm{RMSE}\) is \(1.29\), whereas \model{} with BIC-selected neighborhoods attains a lower test \(\mathrm{RMSE}\) of \(1.12\). Thus, the tensor neighborhoods selected by BIC capture spatio-temporal interactions that are useful for one-step-ahead full-tensor prediction.

\end{document}